\newtheorem{prop}{\protect\propositionname}
\newtheorem{defn}{\protect\definitionname}
\newtheorem{thm}{\protect\theoremname}
\newtheorem{example}{\protect\examplename}
\newtheorem{rem}{\protect\remarkname}
\providecommand{\definitionname}{Definition}
\providecommand{\examplename}{Example}
\providecommand{\propositionname}{Proposition}
\providecommand{\remarkname}{Remark}
\providecommand{\theoremname}{Theorem}
\shorttitle{Objective barriers to active transport}
\title{Objective barriers to the transport of dynamically active vector fields}
\author{George Haller\aff{1}\corresp{\email{georgehaller@ethz.ch}},
  Stergios Katsanoulis\aff{1},
  Markus Holzner\aff{2},
 Bettina Frohnapfel\aff{3}
 \and Davide Gatti\aff{3}}
\affiliation{%
\aff{1} Institute for Mechanical Systems, ETH Z\"{u}rich, Z\"{u}rich, Switzerland
\aff{2} WSL Swiss Federal Research Institute, Birmensdorf, Switzerland%
\aff{3} Institute of Fluid Mechanics, Karlsruhe Institute of Technology,
Karlsruhe, Germany
}
\begin{document}

\maketitle

\begin{abstract}
We derive a theory for material surfaces that maximally inhibit the
diffusive transport of a dynamically active 
vector field, such as the linear momentum, the angular momentum or
the vorticity, in general fluid flows. These special material surfaces (\emph{Lagrangian active barriers}) provide physics-based, observer-independent boundaries of
dynamically active coherent structures. We find that Lagrangian active barriers
evolve from invariant surfaces of an associated steady and incompressible
\emph{barrier equation}, whose right-hand side is the time-averaged pullback of the viscous stress terms
in the evolution equation for the dynamically active vector field. Instantaneous
limits of these barriers mark objective \emph{Eulerian
active barriers} to the short-term diffusive transport of the dynamically active vector field. We obtain that in unsteady
Beltrami flows, Lagrangian and Eulerian active barriers  coincide exactly with purely advective transport barriers bounding
observed coherent structures. In more general flows, active 
barriers can be identified by applying Lagrangian coherent
structure (LCS) diagnostics, such as the finite-time Lyapunov
exponent and the polar rotation angle, to the appropriate active barrier equation. In comparison to their
passive counterparts, these \emph{active LCS diagnostics} require no significant
fluid particle separation and hence provide substantially higher-resolved
Lagrangian and Eulerian coherent structure boundaries from temporally shorter
velocity data sets. We illustrate these results and their physical interpretation on two-dimensional, homogeneous, isotropic turbulence and on a three-dimensional turbulent channel flow.
\end{abstract}

\begin{keywords}
Keywords will be added upon submission
\end{keywords}

\section{Introduction}

Fluid transport is often the simplest to describe through its barriers.
Indeed, transport barriers are routinely invoked in discussions of
transport in classical fluid dynamics (Ottino 1989), geophysics (Weiss
\& Provenzale 1989), reactive flows (Rosner 2000) and plasma fusion
(Dinklage 2005).

Despite their broadly recognized significance, transport barriers
have remained loosely defined and little understood. The only generally
agreed definition is the one of MacKay, Meiss \& Percival (1984) who
define transport barriers in two-dimensional (2D), time-periodic flows as invariant
curves of the Poincar\'e (or stroboscopic) map for fluid particle motions.
This definition extends to three-dimensional (3D) steady flows, identifying
advective transport barriers as 2D material surfaces whose intersection
with a section transverse to the flow is an invariant curve for the
first-return map defined for that section (Ottino 1989, MacKay 1994).
In many 3D steady flows, however, trajectories may rarely if ever
return to the physically relevant Poincar\'e sections, such as the cross-stream
sections of pipe flows.

This lack of returns obliges one to look for barriers to advective
transport among all material surfaces\,\textendash \,an ill-defined objective,
given that all material surfaces are barriers to advective transport.
Indeed, none of them can be crossed by other material trajectories
by the uniqueness of trajectories through any point at a given time
in a smooth velocity field. Some material surfaces are nevertheless
perceived as organizers of advective transport because they preserve
their coherence, i.e., do not develop smaller scales (filamentation) in their evolution. These distinguished surfaces
are generally referred to as Lagrangian coherent structures (or LCS;
see Haller 2015). In the absence of a universally accepted notion
of material coherence, however, different LCS definitions continue
to coexist and highlight different material surfaces as advective
transport barriers (Hadjighasem et al. 2017). Beyond their diversity,
most LCS criteria have also been criticized for being purely kinematic
with no regard to relevant physical quantities, such
as the linear momentum and the vorticity. The need for developing LCS methods for the transport of such physical quantities has recently been stressed by Balasuriya, Ouellette \& Rypina (2018).

Parallel to the development of different LCS criteria, several different
Eulerian criteria for coherent vortices have been put forward (see
Epps 2017 and G\"unther \& Theisel 2018 for recent reviews). Most of
these approaches also set out to find sustained (Lagrangian) swirling
motion of fluid particles, but hope to achieve this goal by studying
local properties of instantaneous (Eulerian) velocity snapshots. As
this is a hopeless undertaking for unsteady flows, these approaches
invariably divert from their originally stated objective and postulate
coherence principles for the instantaneous velocity field, rather
than for particle motion. One can then a posteriori interpret the
resulting velocity-dependent inequalities (such as the $Q$-, $\Delta$-,
$\lambda_{2}$ - and $\lambda_{ci}$-criteria reviewed recently in
Pedergnana et al. 2020) as physical, but their actual connection to
flow physics is unclear due to the conceptual gaps in their derivations
and their dependence on the observer,

Unsurprisingly, therefore, the resulting vortex criteria often yield
erroneous results even for simple flows in which the coherent swirling
regions can be identified unambiguously from Poincar\'e maps (see Pedergnana
et al. 2020 for recent demonstrations). This has resulted in the practice of plotting a
few level sets of $Q$, $\Delta$, $\lambda_{2}$ or $\lambda_{ci}$, as opposed to verifying the inequalities imposed on these quantities by the appropriate criteria 
(see, e.g., Dubief \& Delcayre 2000, McMullan \& Page 2012, Anghan
et al. 2014, Gao et al. 2015, Jantzen et al. 2019). These level surfaces
are selectively chosen to match expectations or produce visually pleasing
images. As a further ad hoc element in this procedure, the level surfaces
are not objective: they depend on the frame of reference, even though
truly unsteady flows have no distinguished frame of reference (Lugt
1979). The experimental detectability or physical relevance of these
surfaces is, therefore, unclear. Arguably, as long as this practice
continues, there is little hope for a commonly accepted definition for coherent vortices. 

A way out of this conundrum is to identify coherent structures based
on the transport of physical quantities of interest to the fluid mechanics
community, but use mathematical deductions that are free from ad hoc
assumptions, user-defined thresholds and tunable parameters. Specifically,
one may seek the boundaries of coherent structures or vortices based
on their transport-extremizing properties. Unlike the notions of coherence
and swirling, the notion of transport through a surface is physically
well-understood, quantitative and frame-independent, when properly
phrased. These features allow for a systematic, quantitative comparison of all surfaces to find minimizers (barriers) of transport among them. This in turn offers a way to quantify the general view in fluid mechanics that coherent structures influence transport processes in turbulent flows (Robinson 1991, Hutchins \& Marusic 2007).

As a first step in this direction, Haller, Karrasch \& Kogelbauer
(2018, 2019) formalize the definition of transport barriers for passively
advected diffusive scalars. They then locate transport barriers as
material surfaces that inhibit the diffusive transport of a weakly
diffusive scalar more than neighboring material surfaces do. Katsanoulis
et al. (2019) use these results to locate vortex boundaries in 2D
flows as outermost closed barriers to the diffusive transport of the
scalar vorticity. These results, however, do not cover barriers
to the transport of dynamically active vector fields, such as momentum and vorticity, in 3D. There
are also examples, such as the 2D decaying channel flow shown in Fig.
 \ref{fig: steady 2D channel flow example-0}, in which the passive-scalar-based
approach to vorticity transport only captures the walls as perfect
transport barriers in a finite-time analysis. The remaining observed
barriers to the redistribution of the normalized vorticity (i.e.,
all horizontal lines) are only captured by the approach over an infinitely
long time interval. 
\begin{figure}
\centering{}\includegraphics[width=0.85\textwidth]{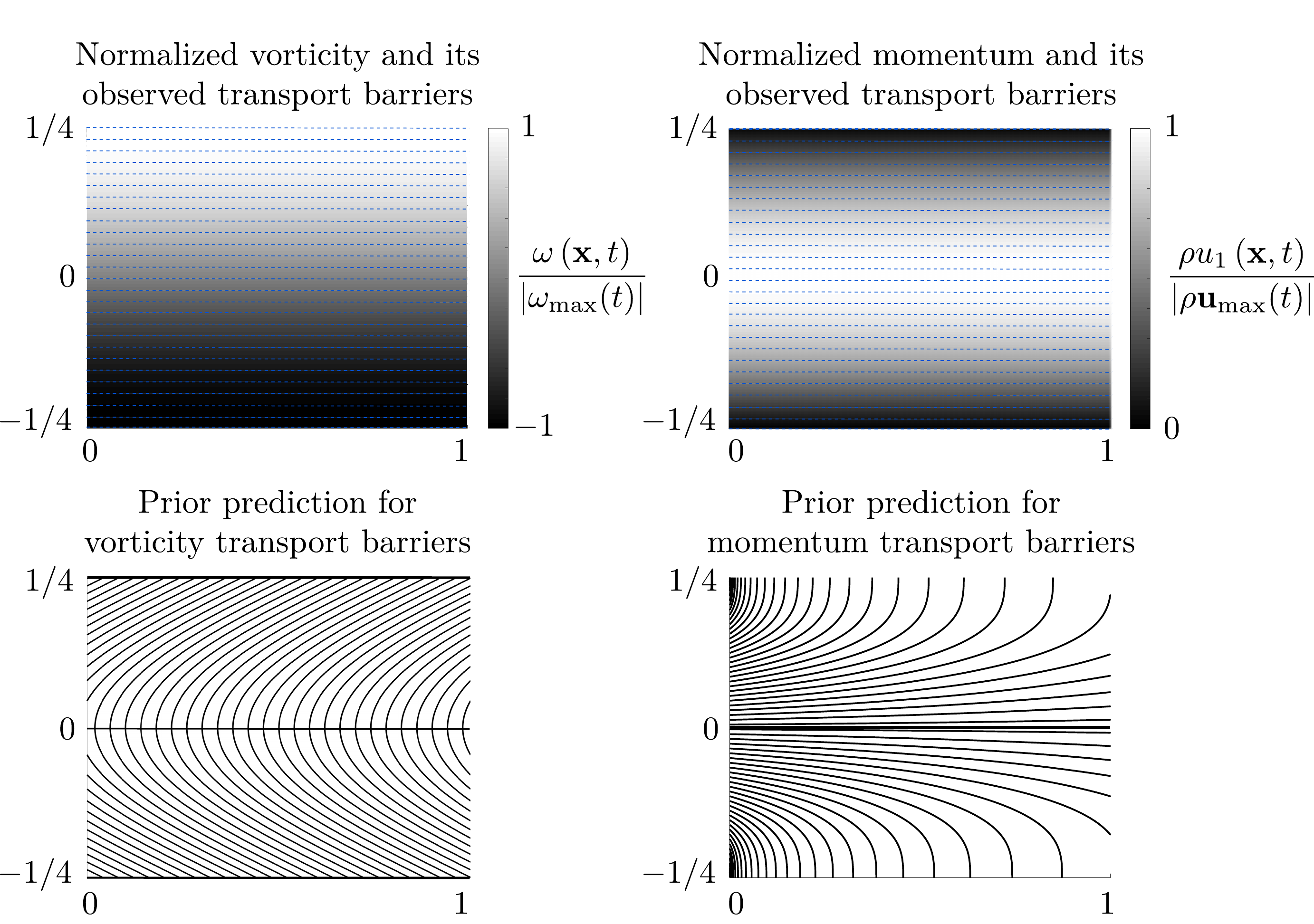}\caption{Vorticity and linear momentum,  normalized by their maxima at an arbitrary
time instance in a decaying planar channel flow. These plots remain
steady in time, with all horizontal lines (some shown dotted) acting
as barriers to the vertical redistribution of the vorticity and linear momentum.
Also shown are prior predictions for perfect barriers to vorticity
transport in this flow by Haller, Karrasch \& Kogelbauer (2019) on the left  and
for perfect barriers to momentum transport by Meyers  \&
Meneveau (2013) on the right. The latter barrier trajectories are released uniformly across the entry cross section of the channel. See Appendix A for details.}
\label{fig: steady 2D channel flow example-0}
\end{figure}

More broadly speaking, there has been a lack of methods to identify barriers to the transport
of dynamically active quantities, i.e., scalar, vector or tensor fields whose evolution impacts the evolution of the underlying fluid velocity field. A notable
exception is the work of Meyers \& Meneveau (2013), who locate momentum-
and energy-transport barriers as tubes tangent to a flux vector field
formally associated with these dynamically active scalar fields. While
insightful, this approach also has several heuristic elements. The
construct depends on the frame of reference and the choice of a transport
direction. The flow data is assumed statistically stationary with
a well-defined mean velocity field. The proposed flux vector introduced
in this fashion is non-unique: any divergence-free vector field could
be added to it. Finally, the flux vector differs from the classic
momentum and energy flux that it purports to represent. All these
features of the approach prevent the detection of most observed barriers
to momentum redistribution already in simple 2D flows, such as our 2D decaying
channel flow example in Fig.  \ref{fig: steady 2D channel flow example-0}.
Indeed, the only horizontal barrier captured by this approach is the
symmetry axis of the channel. 

In the present work, we seek to fill the gaps in previous approaches
by extending the transport-barrier-detection approach of Haller, Karrasch
\& Kogelbauer (2018, 2019) to active transport in 3D. In this extension,
we seek material barriers to the diffusive (or viscosity-induced)
transport of an arbitrary\emph{ dynamically active vector field},
by which we mean a vector field whose evolution impacts the evolution of the underlying
fluid velocity field. We then seek transport barriers
as special material surfaces across which the net diffusive transport
of the active vector field pointwise vanishes. When applied to the 2D channel flow
example shown in Fig.  \ref{fig: steady 2D channel flow example-0},
the approach we develop here returns the observed material barriers
(all horizontal lines) as barriers to the spatial redistribution of
vorticity and momentum (see Example \ref{ex:2D analytic example} in section \ref{subsec:active barriers in 2D-Navier=002013Stokes-flows}).
This example and more complex examples discussed later illustrate
that material barriers to active transport can be used to define boundaries
of dynamical coherent structures (i.e., time-varying structures observed
in dynamically active vector fields) in a frame-independent fashion. 

The outline of this paper is as follows. In section 2, we introduce
our set-up and notation for a dynamically active vector field. We
then discuss in section 3 the shortcomings of available flux definitions
when applied to active transport through material surfaces, and introduce
an objective notion of diffusive transport for active vector fields.
In section 4, we identify surfaces blocking this diffusive transport
and define active transport barriers more formally. Section 5 describes
the instantaneous, Eulerian limits of these active barriers, and section
6 derives the equations for both Lagrangian and Eulerian active barriers to
the diffusive transport of linear momentum, angular momentum and vorticity.
In section 7, we work out solutions of these barrier equations analytically
for 2D Navier\textendash Stokes flows and 3D directionally steady
Beltrami flows. Section 8 discusses computational aspects of active
transport barriers and introduces active versions of passive LCS-detection
tools that generally enable a higher-resolved identification of coherent
structures from finite-time flow data than their passive counterparts
do. Section 9 shows such computations and their physical implications for 2D homogeneous, isotropic turbulence and for a 3D turbulent channel flow.   We summarize our conclusions in
section 10. Appendix A illustrates on a simple example the challenges
of defining active barriers with an observable footprint. Appendix
B motivates the need for a new definition for diffusive flux through
material surfaces. Finally, Appendices C and D contain the detailed
proofs of our technical results.

\section{Set-up\label{sec:Set-up}}

We consider a 3D flow with velocity field $\mathbf{u}(\mathbf{x},t)$
and density $\rho(\mathbf{x},t)$, known at spatial locations $\mathbf{x}\in U\in\mathbb{R}^{3}$
in a bounded set $U$ at times $t\in\left[t_{1},t_{2}\right]$. The
equation of motion for such a flow is of the general form
\begin{equation}
\rho\frac{D\mathbf{u}}{Dt}=-\boldsymbol{\nabla}p+\boldsymbol{\nabla}\cdot\mathbf{T}_{vis}+\mathbf{q},\label{eq:main continuum eq. of motion}
\end{equation}
where $D/Dt$ denotes the material derivative, $p(\mathbf{x},t)$
is the (equilibrium) pressure, $\mathbf{T}_{vis}(\mathbf{x},t)=\mathbf{T}_{vis}^{T}(\mathbf{x},t)$
is the viscous stress tensor and $\mathbf{q}(\mathbf{x},t)$ denotes
the external body forces (see Gurtin, Fried \& Anand 2013). 

Material trajectories generated by the velocity field \textbf{$\mathbf{u}$}
are solutions of the differential equation $\dot{\mathbf{x}}=\mathbf{u}(\mathbf{x},t)$.
We denote the time-$t$ position of a trajectory starting from $\mathbf{x}_{0}$
at time $t_{0}$ by $\mathbf{x}(t;t_{0},\mathbf{x}_{0})$. The flow
map induced by $\mathbf{u}$ is defined as the mapping $\mathbf{F}_{t_{0}}^{t}\colon\mathbf{x}_{0}\mapsto\mathbf{x}(t;t_{0},\mathbf{x}_{0})$.
A \emph{material surface} $\mathcal{M}(t)\subset U$ is a time-dependent
two-dimensional manifold transported by the flow map from its initial
position $\mathcal{M}_{0}:=\mathcal{M}(t_{0})$ as
\begin{equation}
\mathcal{M}(t)=\mathbf{F}_{t_{0}}^{t}\left[\mathcal{M}(t_{0})\right].\label{eq:material surface def}
\end{equation}

Let $\mathbf{f}(\mathbf{x},t)$ be another smooth vector field defined
on the same spatiotemporal domain $U\times\left[t_{0},t_{1}\right]$.
We will  be interested in $\mathbf{f}$ fields that are \emph{dynamically
active vector fields}, i.e., their evolution impacts the evolution of the velocity field $\mathbf{u}$. Such a vector field $\mathbf{f}$ is typically defined as a function   of $\mathbf{u}$ and its derivatives.
The simplest physical examples of  active vector fields are the
linear momentum $\mathbf{f}:=\rho\mathbf{u}$ and the vorticity $\mathbf{f}:=\boldsymbol{\omega}=\boldsymbol{\nabla}\times\mathbf{u}$. Both of these examples of $\mathbf{f}$ are \emph{frame-dependent}
(\emph{non-objective}) vector fields, because they do not transform
properly under general frame changes of the form
\begin{equation}
\mathbf{x}=\mathbf{Q}(t)\mathbf{y}+\mathbf{b}(t),\quad\mathbf{QQ}^{T}=\mathbf{I},\quad\mathbf{Q}(t)\in SO(3),\quad\mathbf{b}(t)\in\mathbb{R}^{3},\label{eq:observer change}
\end{equation}
where both $\mathbf{Q}(t)$ and $\mathbf{b}(t)$ are smooth in time.
Indeed, evaluating the definition of these vectors in the $\mathbf{y}$-frame
gives transformed vector fields $\tilde{\mathbf{f}}(\mathbf{y},t)$
for which\footnote{Specifically, $\rho\tilde{\mathbf{u}}=\mathbf{Q}^{T}\left(\rho\mathbf{u}-\dot{\mathbf{Q}}\mathbf{y}-\dot{\mathbf{b}}\right)$
and $\tilde{\boldsymbol{\omega}}=\mathbf{Q}^{T}\left(\boldsymbol{\omega}-\dot{\mathbf{q}}\right)$,
where the vorticity of the frame change, $\dot{\mathbf{q}}$, is defined
by the requirement that $\frac{1}{2}\dot{\mathbf{q}}\times\mathbf{e}=\dot{\mathbf{Q}}\mathbf{Q}^{T}\mathbf{e}$
for all vectors $\mathbf{e}\in\mathbb{R}^{3}.$}
\begin{equation}
\tilde{\mathbf{f}}(\mathbf{y},t)\neq\mathbf{Q}^{T}(t)\mathbf{f}(\mathbf{x},t).\label{eq:non-objectivity of u}
\end{equation}
It is, therefore, a challenge to describe the transport of $\mathbf{f}$
through a material surface in an intrinsic, observer-independent fashion. 

We assume that the evolution of $\mathbf{f}$ is governed by a partial
differential equation of the form
\begin{equation}
\frac{D}{Dt}\mathbf{f}=\mathbf{h}_{vis}+\mathbf{h}_{nonvis},\qquad\partial_{\mathbf{T}_{vis}}\mathbf{h}_{vis}\neq\mathbf{0},\quad\partial_{\mathbf{T}_{vis}}\mathbf{h}_{nonvis}=\mathbf{0}.\label{eq:DuDt}
\end{equation}
The function $\mathbf{h}_{vis}(\mathbf{x},t,\mathbf{u},\mathbf{f},\mathbf{T}_{vis})$
contains all the terms arising from diffusive forces (i.e., viscous
Cauchy stresses), while $\mathbf{h}_{nonvis}(\mathbf{x},t,\mathbf{u},\mathbf{f})$
has no explicit dependence on those forces. Instead, $\mathbf{h}_{nonvis}$
contains terms originating from the pressure, external forces and
possible inertial effects. For instance, as we will see in section
\ref{sec:Active-barrier-equations-derivations}, when $\mathbf{f}$
is the linear momentum of an incompressible Navier\textendash Stokes
flow with kinematic viscosity $\nu$, then we have \textbf{$\mathbf{h}_{vis}:=\rho\nu\Delta\mathbf{u}$}.
Or if, for the same class of flows, \textbf{$\mathbf{f}$ }equals
the vorticity $\boldsymbol{\omega}=\boldsymbol{\nabla}\times\mathbf{u}$
, then we have $\mathbf{h}_{vis}:=\nu\Delta\boldsymbol{\omega}$.

We finally assume that $\mathbf{h}_{vis}$ is an objective vector
field, i.e., under any observer change of the form \eqref{eq:observer change},
we obtain the transformed vector field $\mathbf{\tilde{h}}_{vis}$
in the form
\begin{equation}
\mathbf{\tilde{h}}_{vis}(\mathbf{y},t,\tilde{\mathbf{u}},\tilde{\mathbf{f}},\tilde{\mathbf{T}}_{vis})=\mathbf{Q}^{T}(t)\mathbf{h}_{vis}(\mathbf{x},t,\mathbf{u},\mathbf{f},\mathbf{T}_{vis}).\label{eq:objectivity condition}
\end{equation}
In all examples of $\mathbf{f}$ considered in this paper, this objectivity condition will hold, but one can certainly define dynamically active vector fields (e.g., $\mathbf{f}:=|\mathbf{u}|\mathbf{u}$) that do not satisfy condition \eqref{eq:objectivity condition}. With its dependence on inertial effects, the vector field $\mathbf{h}_{nonvis}$
is not objective.

\section{Active transport through material surfaces}

We seek to quantity the diffusive transport of the active vector field
$\mathbf{f}(\mathbf{x},t)$ through a material surface $\mathcal{M}(t)$
with a smoothly oriented unit normal vector field $\mathbf{n}(\mathbf{x},t)$.
While there is broad agreement on the notion of the flux of a passive
scalar field through a surface (see, e.g., Batchelor 2000), different
notions of the flux of an active vector field coexist. For instance,
the vorticity flux through $\mathcal{M}(t)$ (see, e.g., Childress
2009) is defined as 
\begin{equation}
\mathrm{Flux}_{\boldsymbol{\omega}}\left(\mathcal{M}(t)\right)=\int_{\mathcal{M}(t)}\boldsymbol{\omega}\cdot\mathbf{n\,}dA,\label{eq:vorticity flux}
\end{equation}
which measures the degree to which $\boldsymbol{\omega}$ is transverse
to $\mathcal{M}(t)$ on average, as opposed to the rate at which vorticity
is transported through $\mathcal{M}(t)$. Another broadly used quantity
is the linear momentum flux through $\mathcal{M}(t)$ (see, e.g.,
Bird et al. 2007), defined as 
\begin{equation}
\mathrm{\mathbf{Flux}}_{\rho\mathbf{u}}\left(\mathcal{M}(t)\right)=\int_{\mathcal{M}(t)}\rho\mathbf{u}\left(\mathbf{u}\cdot\mathbf{n}\right)\,dA.\label{eq:classic momentum flux}
\end{equation}
This expression is originally conceived for non-material surfaces,
formally measuring the rate at which $\rho\mathbf{u}$ is carried
through $\mathcal{M}(t)$ by trajectories. However, no such convective
flux is possible when $\mathcal{M}(t)$ is a material surface, which
can never be crossed by material trajectories. As a consequence, $\mathrm{\mathbf{Flux}}_{\rho\mathbf{u}}\left(\mathcal{M}(t)\right)$
does not capture the full flux through material surfaces (see Appendix
B for a simple example).

Beyond the issues already mentioned for $\mathrm{Flux}_{\boldsymbol{\omega}}$
and $\mathrm{\mathbf{Flux}}_{\rho\mathbf{u}}$, these flux notions
have further common shortcomings for the purposes of defining an intrinsic
flux through material surfaces. First, one expects a flux of a quantity
through a surface to have the units of that quantity divided by time
and multiplied by the surface area. This not the case for either $\mathrm{Flux}_{\boldsymbol{\omega}}$
or $\mathrm{\mathbf{Flux}}_{\rho\mathbf{u}}$. Second, as the mass
flux and the diffusive flux of a tracer through a material surface
are objective (Haller, Karrasch \& Kogelbauer 2018, 2019), one expects
a truly intrinsic flux of a vector field through a material surface
to be objective as well: it should remain unchanged under all observer
changes of the form \eqref{eq:observer change}. A direct calculation
shows that neither $\mathrm{Flux}_{\boldsymbol{\omega}}$ nor $\mathrm{\mathbf{Flux}}_{\rho\mathbf{u}}$
are objective, which is the result of the frame-dependence of $\boldsymbol{\omega}$
and $\mathbf{u}$ (see, e.g., Haller 2015). 

As a consequence of this frame-dependence, specific values of $\mathrm{Flux}_{\boldsymbol{\omega}}$
and $\mathrm{\mathbf{Flux}}_{\rho\mathbf{u}}$ carry no intrinsic
meaning in general unsteady fluid flows, because such flows have no
distinguished frames of reference (Lugt 1979). This prevents us from
locating intrinsic (and hence observer-independent) barriers to the
transport of vorticity and momentum using these fluxes. Specifically,
the classic notion of a vortex tube\footnote{i.e., a cylindrical surface $\mathcal{A}(t)$ with pointwise zero
vorticity flux $\boldsymbol{\omega}(\mathbf{x},t)\cdot\mathbf{n}(\mathbf{x},t)$,
which implies $\mathrm{Flux}_{\boldsymbol{\omega}}\left(\mathcal{A}(t)\right)=0$. }, defined via $\mathrm{Flux}_{\boldsymbol{\omega}}$, is not objective:
observers rotating relative to each other will identify different
surfaces as vortex tubes. This holds even for inviscid flows, in which
all vortex tubes are material surfaces (see Batchelor 2000).

To address these shortcomings of commonly used vector-field-flux definitions,
we introduce the \emph{diffusive flux} of $\mathbf{f}\left(\mathbf{x},t\right)$
through $\mathcal{M}(t)$ by integrating the diffusive component of
the surface-normal material derivative of $\mathbf{f}\left(\mathbf{x},t\right)$
over $\mathcal{M}(t)$:
\begin{equation}
\Phi\left(\mathcal{M}(t)\right)=\left[\int_{\mathcal{M}(t)}\frac{D\mathbf{f}}{Dt}\cdot\mathbf{n}\,dA\right]_{vis}=\int_{\mathcal{M}(t)}\mathbf{h}_{vis}\cdot\mathbf{n}\,dA.\label{eq:internal flux}
\end{equation}
Physically, the diffusive flux $\Phi$ measures the extent to which the
diffusive component of the rate-of-change of $\mathbf{f}$ along trajectories forming the surface $\mathcal{M}(t)$ is non-tangent to  $\mathcal{M}(t)$. Trajectories do not need to cross the material
surface $\mathcal{M}(t)$ to generate diffusive flux.

The diffusive flux $\Phi$ has the physical units expected for the
flux of $\mathbf{f}$: the units of \textbf{$\mathbf{f}$ }multiplied
by area and divided by time. Under an observer change of the form
\eqref{eq:observer change}, the transformation formula $\mathbf{n}=\mathbf{Q}\tilde{\mathbf{n}}$
for unit normals and the assumption \eqref{eq:objectivity condition}
on the active vector field\textbf{ $\mathbf{f}$ }imply that
\begin{equation}
\tilde{\Phi}\left(\tilde{\mathcal{M}}(t)\right)=\int_{\tilde{\mathcal{M}}(t)}\tilde{\mathbf{h}}_{vis}\cdot\tilde{\mathbf{n}}\,d\tilde{A}=\int_{\mathcal{M}(t)}\left(\mathbf{Q}^{T}\mathbf{h}_{vis}\right)\cdot\left(\mathbf{Q}^{T}\mathbf{n}\right)dA=\int_{\mathcal{M}(t)}\mathbf{h}_{vis}\cdot\mathbf{n}dA=\Phi\left(\mathcal{M}(t)\right),
\end{equation}
 and hence the diffusive flux of $\mathbf{f}$ is also objective,
i.e., invariant under all observer changes.

With this dimensionally correct and objective notion of the flux at
hand, we can now define the diffusive transport of $\mathbf{f}\left(\mathbf{x},t\right)$
through $\mathcal{M}(t)$ over a time interval $\left[t_{0},t_{1}\right]$
as the time-integral of $\Phi\left(\mathcal{M}(t)\right)$ over $\left[t_{0},t_{1}\right]$.
To compare the overall ability of surfaces to withstand the diffusive
transport of $\mathbf{f}\left(\mathbf{x},t\right)$ over different
time intervals, we will work with the time-normalized total diffusive
transport, given by the \emph{diffusive transport functional }
\begin{align}
\psi_{t_{0}}^{t_{1}}\left(\mathcal{M}_{0}\right) & =\frac{1}{t_{1}-t_{0}}\int_{t_{0}}^{t_{1}}\Phi\left(\mathcal{M}(t)\right)dt\nonumber \\
 & =\frac{1}{t_{1}-t_{0}}\int_{t_{0}}^{t_{1}}\int_{\mathcal{M}(t)}\mathbf{h}_{vis}\cdot\mathbf{n}dA\,dt.\label{eq:psi flux definition-1}
\end{align}
The time integration of this functional is carried out along trajectories
forming the evolving material surface $\mathcal{M}(t)$.  We view
$\psi_{t_{0}}^{t_{1}}$ purely as a function of $\mathcal{M}_{0}\equiv\mathcal{M}\left(t_{0}\right)$,
because later positions of the material surface $\mathcal{M}(t)$
are fully determined by the initial position $\mathcal{M}_{0}$ through
the relationship \eqref{eq:material surface def}. The functional
$\psi_{t_{0}}^{t_{1}}$ can also be viewed as the time-averaged
diffusive flux of the vector field $\mathbf{f}$ through $\mathcal{M}(t)$
over the time interval $\left[t_{0},t_{1}\right]$. As for any diffusion-induced
transport, $\psi_{t_{0}}^{t_{1}}\left(\mathcal{M}_{0}\right)$ is
expected to be small if the material surface $\mathcal{M}(t)$ remains
coherent, i.e., does not develop smaller scales (filamentation) during
its evolution.

To obtain a more explicit formula for $\psi_{t_{0}}^{t_{1}}\left(\mathcal{M}_{0}\right)$
while keeping our notation simple, we now introduce some notation.
For an arbitrary time-dependent Lagrangian vector field $\mathbf{v}(\mathbf{x}_{0},t)$,
we let 
\begin{equation}
\mathbf{\overline{v}}(\mathbf{x}_{0})=\frac{1}{t_{1}-t_{0}}\int_{t_{0}}^{t_{1}}\mathbf{v}\left(\mathbf{x}_{0},t\right)\,dt
\end{equation}
denote the temporal average of $\mathbf{v}(\mathbf{x}_{0},t)$ over
the time interval $[t_{0},t_{1}].$ We will also denote by $\left(\mathbf{F}_{t_{0}}^{t}\right)^{*}\mathbf{w}$
the pull-back of an Eulerian vector field $\mathbf{w}(\mathbf{x},t)$
under the flow map $\mathbf{F}_{t_{0}}^{t}$ to the initial configuration
at $t_{0}$, defined as
\begin{equation}
\left(\mathbf{F}_{t_{0}}^{t}\right)^{*}\mathbf{w}(\mathbf{x}_{0})=\left[\boldsymbol{\nabla}\mathbf{F}_{t_{0}}^{t}\left(\mathbf{x}_{0}\right)\right]^{-1}\mathbf{w}\left(\mathbf{F}_{t_{0}}^{t}\left(\mathbf{x}_{0}\right),t\right).
\end{equation}
With this notation, we obtain the following result:
\begin{thm}
\label{prop: psi formula}Under the assumptions \eqref{eq:DuDt}-\eqref{eq:objectivity condition}
on the dynamically active vector field $\mathbf{f}$, the diffusive
transport functional $\psi_{t_{0}}^{t_{1}}$ of \textbf{$\mathbf{f}$
}can be calculated as
\begin{equation}
\psi_{t_{0}}^{t_{1}}\left(\mathcal{M}_{0}\right)=\int_{\mathcal{M}_{0}}\mathbf{b}_{t_{0}}^{t_{1}}\cdot\mathbf{n}_{0}\,dA_{0},\label{eq:psi final}
\end{equation}
with the objective Lagrangian vector field 
\begin{equation}
\mathbf{b}_{t_{0}}^{t_{1}}:=\overline{\det\nabla\mathbf{F}_{t_{0}}^{t}\left(\mathbf{F}_{t_{0}}^{t}\right)^{*}\mathbf{h}_{vis}}.\label{eq:bdef-1}
\end{equation}
As a consequence, the diffusive transport, $\psi_{t_{0}}^{t_{1}}\left(\mathcal{M}_{0}\right)$,
is objective.
\end{thm}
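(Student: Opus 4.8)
The plan is to convert the surface integral in \eqref{eq:psi flux definition-1}, which is taken over the moving surface $\mathcal{M}(t)$, into an integral over the fixed initial surface $\mathcal{M}_0$, and then interchange the time integration with the (now fixed) surface integration. The first step is the change of variables $\mathbf{x}=\mathbf{F}_{t_0}^{t}(\mathbf{x}_0)$ in the inner integral $\int_{\mathcal{M}(t)}\mathbf{h}_{vis}\cdot\mathbf{n}\,dA$. The key geometric ingredient is the transformation rule for an oriented surface element under a diffeomorphism: if $\mathbf{n}_0\,dA_0$ is the vectorial area element on $\mathcal{M}_0$, then the vectorial area element on $\mathcal{M}(t)=\mathbf{F}_{t_0}^{t}[\mathcal{M}_0]$ is
\begin{equation}
\mathbf{n}\,dA=\det\bigl(\boldsymbol{\nabla}\mathbf{F}_{t_0}^{t}\bigr)\,\bigl[\boldsymbol{\nabla}\mathbf{F}_{t_0}^{t}\bigr]^{-T}\mathbf{n}_0\,dA_0 ,
\end{equation}
which is the classical Nanson (or Piola) formula. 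Substituting this into the inner integral and using the definition of the pull-back,
\begin{equation}
\int_{\mathcal{M}(t)}\mathbf{h}_{vis}\cdot\mathbf{n}\,dA=\int_{\mathcal{M}_0}\mathbf{h}_{vis}\bigl(\mathbf{F}_{t_0}^{t}(\mathbf{x}_0),t\bigr)\cdot\Bigl(\det\boldsymbol{\nabla}\mathbf{F}_{t_0}^{t}\,\bigl[\boldsymbol{\nabla}\mathbf{F}_{t_0}^{t}\bigr]^{-T}\mathbf{n}_0\Bigr)\,dA_0=\int_{\mathcal{M}_0}\det\boldsymbol{\nabla}\mathbf{F}_{t_0}^{t}\,\bigl(\mathbf{F}_{t_0}^{t}\bigr)^{*}\mathbf{h}_{vis}\cdot\mathbf{n}_0\,dA_0 ,
\end{equation}
where in the last step I have moved $\bigl[\boldsymbol{\nabla}\mathbf{F}_{t_0}^{t}\bigr]^{-T}$ across the inner product to act as $\bigl[\boldsymbol{\nabla}\mathbf{F}_{t_0}^{t}\bigr]^{-1}$ on $\mathbf{h}_{vis}$, reproducing exactly the integrand $\det\boldsymbol{\nabla}\mathbf{F}_{t_0}^{t}\,(\mathbf{F}_{t_0}^{t})^{*}\mathbf{h}_{vis}$ appearing in \eqref{eq:bdef-1}.

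The second step is to insert this identity into \eqref{eq:psi flux definition-1} and swap the order of integration. Since $\mathcal{M}_0$ no longer depends on $t$, Fubini's theorem (justified by the assumed smoothness of $\mathbf{u}$, $\mathbf{f}$ and hence of $\mathbf{F}_{t_0}^{t}$ and $\mathbf{h}_{vis}$ on the compact spatiotemporal domain) gives
\begin{equation}
\psi_{t_0}^{t_1}(\mathcal{M}_0)=\int_{\mathcal{M}_0}\Bigl(\frac{1}{t_1-t_0}\int_{t_0}^{t_1}\det\boldsymbol{\nabla}\mathbf{F}_{t_0}^{t}\,(\mathbf{F}_{t_0}^{t})^{*}\mathbf{h}_{vis}\,dt\Bigr)\cdot\mathbf{n}_0\,dA_0=\int_{\mathcal{M}_0}\mathbf{b}_{t_0}^{t_1}\cdot\mathbf{n}_0\,dA_0 ,
\end{equation}
which is \eqref{eq:psi final} with $\mathbf{b}_{t_0}^{t_1}$ as defined in \eqref{eq:bdef-1}. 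It remains to prove objectivity. Here I would use the objectivity hypothesis \eqref{eq:objectivity condition} on $\mathbf{h}_{vis}$ together with the transformation rule for the flow map and its gradient under \eqref{eq:observer change}, namely $\tilde{\mathbf{F}}_{t_0}^{t}(\mathbf{y}_0)=\mathbf{Q}^{T}(t)\bigl(\mathbf{F}_{t_0}^{t}(\mathbf{Q}(t_0)\mathbf{y}_0+\mathbf{b}(t_0))-\mathbf{b}(t)\bigr)$, so that $\boldsymbol{\nabla}\tilde{\mathbf{F}}_{t_0}^{t}=\mathbf{Q}^{T}(t)\,\boldsymbol{\nabla}\mathbf{F}_{t_0}^{t}\,\mathbf{Q}(t_0)$. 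Since $\mathbf{Q}\in SO(3)$ has unit determinant, $\det\boldsymbol{\nabla}\tilde{\mathbf{F}}_{t_0}^{t}=\det\boldsymbol{\nabla}\mathbf{F}_{t_0}^{t}$, and a short computation shows $(\tilde{\mathbf{F}}_{t_0}^{t})^{*}\tilde{\mathbf{h}}_{vis}=\mathbf{Q}^{T}(t_0)\,(\mathbf{F}_{t_0}^{t})^{*}\mathbf{h}_{vis}$; taking the time average, the constant rotation $\mathbf{Q}^{T}(t_0)$ factors out, giving $\tilde{\mathbf{b}}_{t_0}^{t_1}=\mathbf{Q}^{T}(t_0)\,\mathbf{b}_{t_0}^{t_1}$, the correct transformation law for an objective Lagrangian vector field. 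Combined with $\tilde{\mathbf{n}}_0=\mathbf{Q}^{T}(t_0)\mathbf{n}_0$ and $d\tilde{A}_0=dA_0$, this yields $\tilde{\psi}_{t_0}^{t_1}(\tilde{\mathcal{M}}_0)=\psi_{t_0}^{t_1}(\mathcal{M}_0)$, exactly as in the objectivity computation already displayed for $\Phi$ in the excerpt.

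The only genuinely delicate point is the first step: correctly deriving and applying Nanson's formula for the oriented area element, and making sure the transpose-inverse lands on $\mathbf{n}_0$ in a way that, after being moved across the dot product, exactly reconstructs the pull-back $(\mathbf{F}_{t_0}^{t})^{*}\mathbf{h}_{vis}$ together with the Jacobian factor $\det\boldsymbol{\nabla}\mathbf{F}_{t_0}^{t}$. Everything else — the Fubini interchange and the objectivity bookkeeping — is routine given the standing smoothness assumptions and the transformation rules already recorded in the excerpt. I would therefore devote the bulk of the written proof to a careful statement of the Piola/Nanson identity (perhaps relegating its standard derivation to Appendix C) and a clean verification of the index manipulation, and keep the rest terse.
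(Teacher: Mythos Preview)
Your proposal is correct and follows essentially the same approach as the paper: apply the classical surface-element deformation formula (Nanson's formula) to pull the inner integral back to $\mathcal{M}_0$, interchange the time and surface integrations, and then verify objectivity by computing the transformation law $\tilde{\mathbf{b}}_{t_0}^{t_1}=\mathbf{Q}^{T}(t_0)\,\mathbf{b}_{t_0}^{t_1}$ from the known transformation rules for $\boldsymbol{\nabla}\mathbf{F}_{t_0}^{t}$ and $\mathbf{h}_{vis}$. The paper's proof is slightly terser (it does not invoke Fubini by name and cites Gurtin, Fried \& Anand for the area-element formula rather than deriving it), but the logical structure is identical.
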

\begin{proof}
Using the classic surface-element deformation formula 
\begin{equation}
\mathbf{n}dA=\det\nabla\mathbf{F}_{t_{0}}^{t}\left[\nabla\mathbf{F}_{t_{0}}^{t}\right]^{-T}\mathbf{n}_{0}dA_{0},\label{eq:surface_element_formula}\end{equation}
(see Gurtin, Fried \& Anand 2013) in eq. \eqref{eq:psi flux definition-1},
we obtain
\begin{align}
\psi_{t_{0}}^{t_{1}}\left(\mathcal{M}_{0}\right) & =\frac{1}{t_{1}-t_{0}}\int_{t_{0}}^{t_{1}}\int_{\mathcal{M}_{0}}\left.\mathbf{h}_{vis}\right|_{\mathbf{x}=\mathbf{F}_{t_{0}}^{t}(\mathbf{x}_{0})}\cdot\left(\det\nabla\mathbf{F}_{t_{0}}^{t}\left[\nabla\mathbf{F}_{t_{0}}^{t}\right]^{-T}\mathbf{n}_{0}dA_{0}\right)\,dt\nonumber \\
 & =\int_{\mathcal{M}_{0}}\left\{ \frac{1}{t_{1}-t_{0}}\int_{t_{0}}^{t_{1}}\det\nabla\mathbf{F}_{t_{0}}^{t}\left[\nabla\mathbf{F}_{t_{0}}^{t}\right]^{-1}\left.\mathbf{h}_{vis}\right|_{\mathbf{x}=\mathbf{F}_{t_{0}}^{t}\left(\mathbf{x}_{0}\right)}\cdot\mathbf{n}_{0}dA_{0}\right\} \,dt\nonumber \\
 & =\int_{\mathcal{M}_{0}}\mathbf{b}_{t_{0}}^{t_{1}}\cdot\mathbf{n}_{0}\,dA_{0},
\end{align}
with $\mathbf{b}_{t_{0}}^{t_{1}}(\mathbf{x}_{0})$ defined in \eqref{eq:bdef-1}.
The vector field $\mathbf{b}_{t_{0}}^{t_{1}}(\mathbf{x}_{0})$ is
objective in the Lagrangian sense (see Ogden 1984), because under
assumption \eqref{eq:objectivity condition}, an observer change of
the form \eqref{eq:observer change} gives
\begin{align}
\mathbf{b}_{t_{0}}^{t_{1}} & =\overline{\det\boldsymbol{\nabla}\mathbf{F}_{t_{0}}^{t}\left(\mathbf{F}_{t_{0}}^{t}\right)^{*}\mathbf{h}_{vis}}=\overline{\det\boldsymbol{\nabla}\mathbf{F}_{t_{0}}^{t}\left[\boldsymbol{\nabla}\mathbf{F}_{t_{0}}^{t}\right]^{-1}\mathbf{Q}(t)\tilde{\mathbf{h}}_{vis}}\nonumber \\
 & =\overline{\det\left[\mathbf{Q}(t)\boldsymbol{\tilde{\nabla}}\mathbf{\tilde{F}}_{t_{0}}^{t}\mathbf{Q}^{T}(t_{0})\right]\left[\mathbf{Q}(t)\boldsymbol{\tilde{\nabla}}\mathbf{\tilde{F}}_{t_{0}}^{t}\mathbf{Q}^{T}(t_{0})\right]^{-1}\mathbf{Q}(t)\tilde{\mathbf{h}}_{vis}}\nonumber \\
 & =\overline{\det\boldsymbol{\tilde{\nabla}}\mathbf{\tilde{F}}_{t_{0}}^{t}\mathbf{Q}(t_{0})\left[\boldsymbol{\tilde{\nabla}}\mathbf{\tilde{F}}_{t_{0}}^{t}\right]^{-1}\tilde{\mathbf{h}}_{vis}}\nonumber \\
 & =\mathbf{Q}(t_{0})\tilde{\mathbf{b}}_{t_{0}}^{t_{1}}.\label{eq:objectivity of b field}
\end{align}
As a result, we have
\begin{align}
\tilde{\psi}_{t_{0}}^{t_{1}}\left(\mathcal{\tilde{M}}_{0}\right)& =\int_{\mathcal{\tilde{M}}_{0}}\tilde{\mathbf{b}}_{t_{0}}^{t_{1}}\cdot\tilde{\mathbf{n}}_{0}\,d\tilde{A}_{0}=\int_{\mathcal{\tilde{M}}_{0}}\left(\mathbf{Q}^{T}(t_{0})\mathbf{b}_{t_{0}}^{t_{1}}\right)\cdot\left(\mathbf{Q}^{T}(t_{0})\mathbf{n}_{0}\right)\,d\tilde{A}_{0}=\int_{\mathcal{M}_{0}}\mathbf{b}_{t_{0}}^{t_{1}}\cdot\mathbf{n}_{0}\,dA_{0}\nonumber\\
& =\psi_{t_{0}}^{t_{1}}\left(\mathcal{M}_{0}\right),
\end{align}
proving the objectivity of $\psi_{t_{0}}^{t_{1}}\left(\mathcal{M}_{0}\right)$
. \\
\end{proof}
Theorem \ref{prop: psi formula} shows that $\psi_{t_{0}}^{t_{1}}\left(\mathcal{M}_{0}\right)$
can be calculated as the (algebraic) flux of the objective Lagrangian
vector field $\mathbf{b}_{t_{0}}^{t_{1}}(\mathbf{x}_{0})$ through
the initial surface $\mathcal{M}_{0}$. Following MacKay (1994), we
also define the \emph{geometric flux }of $\mathbf{b}_{t_{0}}^{t_{1}}$
through $\mathcal{M}_{0}$ as 
\begin{equation}
\Psi_{t_{0}}^{t_{1}}\left(\mathcal{M}_{0}\right)=\int_{\mathcal{M}_{0}}\left|\mathbf{b}_{t_{0}}^{t_{1}}\cdot\mathbf{n}_{0}\right|\,dA_{0}.\label{eq:geometric flux}
\end{equation}
This geometric flux cannot vanish due to global cancellations, and
hence is a better measure of the overall permeability (non-invariance)
of the surface $\mathcal{M}_{0}$ under the vector field $\mathbf{b}_{t_{0}}^{t_{1}}$
than the algebraic flux $\psi_{t_{0}}^{t_{1}}\left(\mathcal{M}_{0}\right)$.

\section{Lagrangian active barriers\label{sec:Material-(Lagrangian)-barriers}}

We seek diffusive transport barriers as material surfaces along which
the integrand in the diffusive transport functional $\psi_{t_{0}}^{t_{1}}$
vanishes pointwise. Therefore, the net transport of $\mathbf{f}$
due to viscous forces in the fluid is zero through any subset of such
a barrier.\textbf{ }Technically speaking, such surfaces are global
minimizers of the Lagrangian geometric flux $\Psi_{t_{0}}^{t_{1}}$. 

We note from \eqref{eq:psi final} that the integrand $\Psi_{t_{0}}^{t_{1}}\left(\mathcal{M}_{0}\right)$
can only vanish pointwise if $\mathcal{M}_{0}$ is everywhere tangent
to $\mathbf{b}_{t_{0}}^{t_{1}}(\mathbf{x}_{0})$. Therefore, diffusive
transport barrier surfaces evolve materially from initial surfaces
to which the temporally averaged pull-back of $\mathbf{h}_{vis}$
is everywhere tangent (see Fig. \ref{fig:qualitative geometry of barrier surfaces}).
\begin{figure}
\centering{}\includegraphics[width=0.5\textwidth]{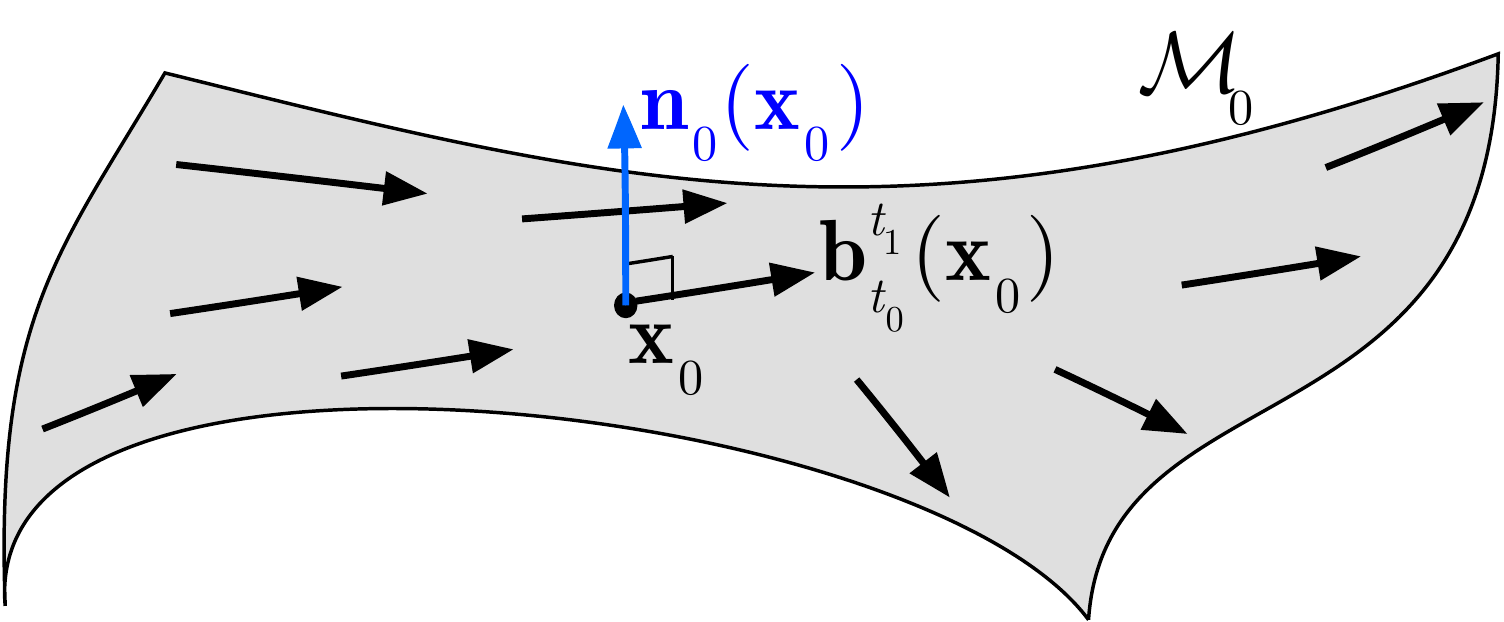}\caption{The normal vector field $\mathbf{n}_{0}(\mathbf{x}_{0})$ of any initial
material barrier $\mathcal{M}_{0}$ must be orthogonal to the barrier
vector field $\mathbf{b}_{t_{0}}^{t_{1}}(\mathbf{x}_{0})$. Therefore,
$\mathcal{M}_{0}$ must be a two-dimensional invariant manifold of
the vector field $\mathbf{b}_{t_{0}}^{t_{1}}=\overline{\det\nabla\mathbf{F}_{t_{0}}^{t}\left(\mathbf{F}_{t_{0}}^{t}\right)^{*}\mathbf{h}_{vis}}$.
\label{fig:qualitative geometry of barrier surfaces}}
\end{figure}
We conclude that if $s\in\mathbb{R}$ parametrizes the streamlines $\mathbf{x}_{0}(s)$ of $\mathbf{b}_{t_{0}}^{t_{1}}(\mathbf{x}_{0})$
and differentiation with respect to $s$ is denoted by a prime, then any 2D
streamsurface (i.e., invariant manifold) of the 3D autonomous differential
equation,
\begin{equation}
\mathbf{x}_{0}^{\prime}=\mathbf{b}_{t_{0}}^{t_{1}}(\mathbf{x}_{0}),\label{eq:barrier equation}
\end{equation}
is a diffusive transport barrier candidate. For this reason, we refer
to eq. \eqref{eq:barrier equation} as the\emph{ barrier equation},
and to $\mathbf{b}_{t_{0}}^{t_{1}}(\mathbf{x}_{0})$ as the corresponding\emph{
barrier vector field.} By the objectivity of the vector field $\mathbf{b}_{t_{0}}^{t_{1}}(\mathbf{x}_{0})$,
the barrier equation \eqref{eq:barrier equation} is objective. Indeed,
after a frame change of the form \eqref{eq:observer change}, we obtain
the transformed barrier equation $\mathbf{Q}(t_{0})\tilde{\mathbf{y}}_{0}^{\prime}=\mathbf{Q}(t_{0})\tilde{\mathbf{b}}_{t_{0}}^{t}(\mathbf{y}_{0})$,
which gives $\tilde{\mathbf{y}}_{0}^{\prime}=\tilde{\mathbf{b}}_{t_{0}}^{t}(\mathbf{y}_{0})$. 

Any smooth curve of initial conditions for the differential equation
\eqref{eq:barrier equation}, however, generates a 2D streamsurface
of trajectories for eq. \eqref{eq:observer change}. Of these infinitely
many barrier candidates, we would like to find only the barrier surfaces
with an observable impact on the transport of $\mathbf{f}$. To this
end, we formally define active transport barriers as follows:
\begin{defn}
\label{def:barrier definition}A \emph{diffusive transport barrier}
for the vector field $\mathbf{f}$ over the time interval $[t_{0},t_{1}]$
is a material surface $\mathcal{B}(t)\subset U$ whose initial position
$\mathcal{B}_{0}=\mathcal{B}(t_{0})$ is a structurally stable (i.e.,
persistent under small, smooth perturbations of \textbf{$\mathbf{u}$}),
2D invariant manifold of the autonomous dynamical system \eqref{eq:barrier equation}. 

The required dimensionality of $\mathcal{B}(t)$ ensures that it divides
locally the space into two 3D regions with minimal diffusive transport
between them. The required structural stability of $\mathcal{B}(t)$
ensures that conclusions reached about transport barriers for one
specific velocity field \textbf{$\mathbf{u}$ }remain valid under
small perturbations of \textbf{$\mathbf{u}$ }as well (see Guckenheimer
\& Holmes 1983).
\end{defn}
While a general classification of structurally stable invariant manifolds
in 3D dynamical systems is not available, structurally stable 2D surfaces
in 3D, steady volume-preserving flows are known to be families of
neutrally stable 2D tori, 2D stable and unstable manifolds of structurally
stable fixed points or of structurally stable periodic orbits (see,
e.g., MacKay 1994). Such structurally stable fixed points and periodic
orbits are either hyperbolic or are contained in no-slip boundaries
and become hyperbolic after a rescaling of time (Surana, Grunberg
\& Haller 2006). In view of these results, the three possible active
barrier geometries for volume-preserving barrier equations in 3D are
shown in Fig. \ref{fig:active  barrier geometries}.
\begin{figure}
\centering{}\includegraphics[width=0.9\textwidth]{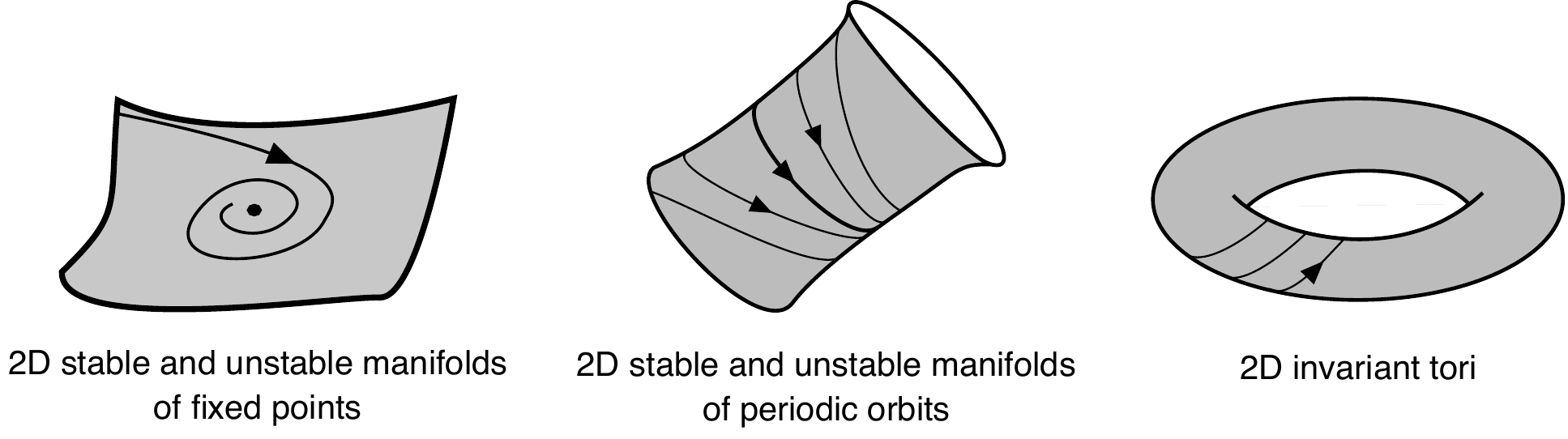}\caption{Possible geometries of material barriers to diffusive transport. Curves
with arrows indicate qualitative sketches of trajectories of the barrier
equation \eqref{eq:barrier equation}, for which these barriers are
structurally stable, two-dimensional invariant manifolds. \label{fig:active  barrier geometries}}
\end{figure}

As we shall see, the barrier-equations for momentum, angular momentum and vorticity are always volume-preserving for incompressible flows and hence the possible active barriers fall in the three categories shown in Fig. \ref{fig:active  barrier geometries}. For compressible flows, the barrier equations are generally not volume-preserving but the three barrier geometries shown in Fig. \ref{fig:active  barrier geometries} nevertheless frequently arise in such flows as well. Invariant tori in compressible barrier equations, however, must necessarily be isolated attractors or repellers, as opposed to members of neutrally stable torus families.

\section{Eulerian active barriers}

Our treatment of active barriers has so far been fundamentally Lagrangian,
targeting material surfaces that render the diffusive transport functional
$\psi_{t_{0}}^{t_{1}}$ zero. Taking the $t_{1}\to t_{0}\equiv t$
limit in our arguments yields that instantaneous diffusive-flux minimizing
surfaces (\emph{Eulerian active barriers}) are structurally
stable, 2D invariant manifolds of the instantaneous barrier equation
\begin{align}
\mathbf{x}^{\prime} & =\mathbf{b}_{t}^{t}(\mathbf{x})=\mathbf{h}_{vis}\left(\mathbf{x},t,\mathbf{u}(\mathbf{x},t),\mathbf{f}(\mathbf{x},t),\mathbf{T}_{vis}(\mathbf{x},t)\right),\label{eq:Eulerian barrier equation}
\end{align}
with $t$ fixed and prime still denoting differentiation with respect
to the dummy parameter $s$. 

The active barriers extracted from \eqref{eq:Eulerian barrier equation}
can be calculated from instantaneous velocity data without Lagrangian
advection, yet they inherit the objectivity of Lagrangian barriers.
These instantaneous barriers, therefore, extend the notion of objective
Eulerian coherent structures (Serra \& Haller 2016) and instantaneous
passive diffusion barriers (Haller, Karrasch \& Kogelbauer 2018, 2019)
to the transport of active vector fields. 

\section{Active barrier equations for momentum and vorticity\label{sec:Active-barrier-equations-derivations}}

We now derive material barrier equations for different active vector
fields. In each case, the instantaneous limits of these equations
can directly be obtained by replacing $\mathbf{F}_{t_{0}}^{t}$ with
the identity map and omitting the averaging operation in time.

\subsection{Barriers to linear momentum transport}

Setting $\mathbf{f}:=\rho\mathbf{u}$, we can rewrite eq. \eqref{eq:main continuum eq. of motion}
as
\begin{equation}
\frac{D\mathbf{f}}{Dt}=\boldsymbol{\nabla}\cdot\mathbf{T}_{vis}-\boldsymbol{\nabla}p+\mathbf{q}-\frac{D\rho}{Dt}\mathbf{u},\label{eq:u equation for linear momentum}
\end{equation}
and hence obtain
\begin{equation}
\mathbf{h}_{vis}=\boldsymbol{\nabla}\cdot\mathbf{T}_{vis},\qquad\mathbf{h}_{nonvis}=-\boldsymbol{\nabla}p+\mathbf{q}-\frac{D\rho}{Dt}\mathbf{u},\label{eq: internal-external partition for linear momentum}
\end{equation}
for the viscous and non-viscous terms in \eqref{eq:DuDt}. The viscous
stress tensor and its divergence are objective (Gurtin, Fried \& Anand
2013), and hence the $\mathbf{h}_{vis}$ function in \eqref{eq: internal-external partition for linear momentum}
satisfies the objectivity condition \eqref{eq:objectivity condition}.
Accordingly, the barrier equations \eqref{eq:barrier equation} and
\eqref{eq:Eulerian barrier equation} for the diffusive transport
of linear momentum become
\begin{align}
\mathbf{x}_{0}^{\prime} & =\overline{\det\nabla\mathbf{F}_{t_{0}}^{t}\left(\mathbf{F}_{t_{0}}^{t}\right)^{*}\left[\boldsymbol{\nabla}\cdot\mathbf{T}_{vis}\right]},\label{eq:NS Lagrangian momentum barrier eq}\\
\mathbf{x}^{\prime} & =\boldsymbol{\nabla}\cdot\mathbf{T}_{vis}.\label{eq:NS Eulerian momentum barrier eq}
\end{align}

Specifically, in the case of incompressible Navier\textendash Stokes
flows with kinematic viscosity $\nu$, we have the constitutive law
$\boldsymbol{\nabla}\cdot\mathbf{T}_{vis}=\nu\rho\Delta\mathbf{u}$
in the general momentum equation \eqref{eq:u equation for linear momentum};
we also observe that $\det\nabla\mathbf{F}_{t_{0}}^{t}\equiv1$ holds
by incompressibility. We then obtain the following: 
\begin{thm}
\label{thm: 3D incompressible N-S,  linear momentum barrier eq }For
incompressible, uniform-density Navier\textendash Stokes flows, the
material and instantaneous barrier equations \eqref{eq:NS Lagrangian momentum barrier eq}
and \eqref{eq:NS Eulerian momentum barrier eq} for linear momentum
take the specific forms
\begin{align}
\mathbf{x}_{0}^{\prime} & =\nu\rho\,\overline{\left(\mathbf{F}_{t_{0}}^{t}\right)^{*}\Delta\mathbf{u}},\label{eq:incompressible NS Lagrangian momentum barrier eq}\\
\mathbf{x}^{\prime} & =\nu\rho\,\Delta\mathbf{u}.\label{eq:incompressible NS Eulerian momentum barrier eq}
\end{align}
 
\end{thm}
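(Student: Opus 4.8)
The plan is to read off both equations as a direct specialization of the general momentum barrier equations \eqref{eq:NS Lagrangian momentum barrier eq}--\eqref{eq:NS Eulerian momentum barrier eq} once two standard facts about incompressible, uniform-density Navier--Stokes flow are inserted. First I would recall the incompressible Newtonian constitutive law: the viscous Cauchy stress is $\mathbf{T}_{vis}=2\rho\nu\,\mathbf{D}$, where $\mathbf{D}=\tfrac12\big(\boldsymbol{\nabla}\mathbf{u}+\boldsymbol{\nabla}\mathbf{u}^{T}\big)$ is the rate-of-strain tensor; taking the divergence and using $\boldsymbol{\nabla}\cdot\mathbf{u}=0$ gives $\boldsymbol{\nabla}\cdot\mathbf{T}_{vis}=\rho\nu\big(\Delta\mathbf{u}+\boldsymbol{\nabla}(\boldsymbol{\nabla}\cdot\mathbf{u})\big)=\rho\nu\,\Delta\mathbf{u}$, which identifies $\mathbf{h}_{vis}=\rho\nu\,\Delta\mathbf{u}$ as announced in section \ref{sec:Set-up}.

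Next I would establish that the flow map is volume-preserving, $\det\boldsymbol{\nabla}\mathbf{F}_{t_{0}}^{t}\equiv 1$. By the Euler expansion (Liouville) formula, $\frac{d}{dt}\det\boldsymbol{\nabla}\mathbf{F}_{t_{0}}^{t}(\mathbf{x}_{0})=\big(\boldsymbol{\nabla}\cdot\mathbf{u}\big)\big(\mathbf{F}_{t_{0}}^{t}(\mathbf{x}_{0}),t\big)\,\det\boldsymbol{\nabla}\mathbf{F}_{t_{0}}^{t}(\mathbf{x}_{0})$, which vanishes by incompressibility; since $\det\boldsymbol{\nabla}\mathbf{F}_{t_{0}}^{t_{0}}=\det\mathbf{I}=1$, the determinant stays equal to $1$ for all $t\in[t_{0},t_{1}]$.

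Substituting both facts into \eqref{eq:NS Lagrangian momentum barrier eq}, and noting that $\rho$ and $\nu$ are constant in space and time and therefore commute with the linear pull-back operator $(\mathbf{F}_{t_{0}}^{t})^{*}$ and with the time-average $\overline{(\cdot)}$, I obtain
\[
\mathbf{x}_{0}^{\prime}=\overline{\det\boldsymbol{\nabla}\mathbf{F}_{t_{0}}^{t}\,(\mathbf{F}_{t_{0}}^{t})^{*}\big[\boldsymbol{\nabla}\cdot\mathbf{T}_{vis}\big]}=\overline{1\cdot(\mathbf{F}_{t_{0}}^{t})^{*}\big[\rho\nu\,\Delta\mathbf{u}\big]}=\rho\nu\,\overline{(\mathbf{F}_{t_{0}}^{t})^{*}\Delta\mathbf{u}},
\]
which is \eqref{eq:incompressible NS Lagrangian momentum barrier eq}. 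The Eulerian equation \eqref{eq:incompressible NS Eulerian momentum barrier eq} then follows from the $t_{1}\to t_{0}\equiv t$ limit exactly as described at the opening of section \ref{sec:Active-barrier-equations-derivations}: replace $\mathbf{F}_{t_{0}}^{t}$ by the identity map, so that $(\mathbf{F}_{t_{0}}^{t})^{*}=\mathrm{id}$ and $\det\boldsymbol{\nabla}\mathbf{F}_{t_{0}}^{t}=1$ trivially, and drop the averaging, giving $\mathbf{x}^{\prime}=\rho\nu\,\Delta\mathbf{u}$; this is of course also the direct specialization of \eqref{eq:NS Eulerian momentum barrier eq}.

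The proof is thus essentially a substitution, and the only points requiring any care are the two preliminary facts: verifying that "incompressible, uniform-density Navier--Stokes" is correctly encoded by $\mathbf{T}_{vis}=2\rho\nu\mathbf{D}$ together with $\boldsymbol{\nabla}\cdot\mathbf{u}=0$ (so that the $\boldsymbol{\nabla}(\boldsymbol{\nabla}\cdot\mathbf{u})$ term drops and $D\rho/Dt=0$ plays no role in $\mathbf{h}_{vis}$), and checking that the spatial and temporal constancy of $\rho$ and $\nu$ genuinely allows them to pass through $(\mathbf{F}_{t_{0}}^{t})^{*}$ and $\overline{(\cdot)}$. No genuine obstacle arises beyond this; the heavy lifting — the general barrier-equation formula and its objectivity — has already been done in Theorem \ref{prop: psi formula} and in sections \ref{sec:Material-(Lagrangian)-barriers}--\ref{sec:Active-barrier-equations-derivations}.
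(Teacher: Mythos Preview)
Your proof is correct and follows essentially the same approach as the paper: the paper states the two key facts $\boldsymbol{\nabla}\cdot\mathbf{T}_{vis}=\nu\rho\Delta\mathbf{u}$ and $\det\nabla\mathbf{F}_{t_{0}}^{t}\equiv1$ in the sentence immediately preceding the theorem and then records the result as a direct substitution. You supply a bit more detail (the rate-of-strain form of $\mathbf{T}_{vis}$ and Liouville's formula), but the logical structure is identical.
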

Each of the eqs. \eqref{eq:incompressible NS Lagrangian momentum barrier eq}-\eqref{eq:incompressible NS Eulerian momentum barrier eq}
defines a 3D, autonomous (or steady) dynamical system with respect
to the time-like variable\textbf{ }$s\in\mathbb{R}$, and hence can
be analyzed via tools developed for steady flows in the chaotic advection
literature (Aref et al. 2017). For the purpose of finding active transport
barriers, all the relevant information about the unsteadiness of \textbf{$\mathbf{u}\left(\mathbf{x},t\right)$
}over the time interval $\left[t_{0},t_{1}\right]$ is encoded into
eq. \eqref{eq:incompressible NS Lagrangian momentum barrier eq} through
the pull-back and the temporal averaging operations. The instantaneous
version \eqref{eq:incompressible NS Eulerian momentum barrier eq}
of these equations only contains the physical time $t$ as a parameter;
it is, therefore, also a steady ODE with respect to the variable $s$
parametrizing its streamlines. Both dynamical systems in \eqref{eq:incompressible NS Lagrangian momentum barrier eq}-\eqref{eq:incompressible NS Eulerian momentum barrier eq}
are volume-preserving because $\Delta\mathbf{u}$ is divergence-free
for incompressible flows. Therefore, the three possible active barrier
geometries arising from the analysis of these barrier equations are
those shown in Fig. \ref{fig:active  barrier geometries}.

In order to solve for trajectories of eq. \eqref{eq:incompressible NS Lagrangian momentum barrier eq}
accurately over a domain $U$ with boundary $\partial U$, one must
be aware of any special boundary condition that $\mathbf{b}_{t_{0}}^{t_{1}}(\mathbf{x}_{0})$
may have to satisfy along $\partial U$. We assume for simplicity
that\textbf{ $\mathbf{u}(\mathbf{x},t)$ }is incompressible and $\partial U$
is a no-slip boundary. Then, after projecting the Navier\textendash Stokes
equation \eqref{eq:main continuum eq. of motion} at a point $\mathbf{x}\in\partial U$
onto a local orthogonal basis $(\mathbf{e}_{1},\mathbf{e}_{2},\mathbf{e}_{3})$,
with $\mathbf{e}_{3}$ normal to the wall, we obtain
\begin{equation}
\left(\begin{array}{c}
0\\
0\\
0
\end{array}\right)=\nu\left(\begin{array}{c}
0\\
0\\
\Delta\mathbf{u}\cdot\mathbf{e}_{3}
\end{array}\right)+\left(\begin{array}{c}
\left(\mathbf{q}-\frac{1}{\rho}\boldsymbol{\nabla}p\right)\cdot\mathbf{e}_{1}\\
\left(\mathbf{q}-\frac{1}{\rho}\boldsymbol{\nabla}p\right)\cdot\mathbf{e}_{2}\\
\left(\mathbf{q}-\frac{1}{\rho}\boldsymbol{\nabla}p\right)\cdot\mathbf{e}_{3}
\end{array}\right).\label{eq:BC for momentum barriers}
\end{equation}
Therefore, if the wall-normal pressure gradient balances out the external
body forces along $\partial U$ (as is often assumed in CFD simulations),
then $\Delta\mathbf{u}$ satisfies a no-penetration boundary condition
along the no-slip boundary $\partial U$, because $\Delta\mathbf{u}\cdot\mathbf{e}_{3}$
must vanish at each boundary point by \eqref{eq:BC for momentum barriers}.
Given that such a boundary $\partial U$ is invariant under the flow
map $\mathbf{F}_{t_{0}}^{t}$, we obtain that the pull-back of $\Delta\mathbf{u}$
under the flow map must also be tangent to the boundary. Consequently,
any no-slip boundary $\partial U$ with a vanishing boundary-normal
resultant force is an invariant manifold for the barrier equations
\eqref{eq:incompressible NS Lagrangian momentum barrier eq}-\eqref{eq:incompressible NS Eulerian momentum barrier eq}\emph{.}\textbf{\emph{ }}

\subsection{Barriers to angular momentum transport}

To analyze angular momentum barriers, we take the cross product of
eq. \eqref{eq:main continuum eq. of motion} with a vector $\mathbf{r}=\mathbf{x-\hat{x}}$,
where $\mathbf{\hat{x}}\in U$ marks a fixed reference point. Setting
then $\mathbf{f}:=\mathbf{r}\times\rho\mathbf{u},$ we obtain an evolution
equation for $\mathbf{f}$ in the form
\begin{equation}
\frac{D\mathbf{f}}{Dt}=\left(\mathbf{x-\hat{x}}\right)\times\frac{D\rho}{Dt}\mathbf{u}-\left(\mathbf{x-\hat{x}}\right)\times\boldsymbol{\nabla}p+\left(\mathbf{x-\hat{x}}\right)\times\mathbf{\mathbf{q}}+\left(\mathbf{x-\hat{x}}\right)\times\boldsymbol{\nabla}\cdot\mathbf{T}_{vis},\label{eq:angular momentum evolution eq}
\end{equation}
implying 
\begin{equation}
\mathbf{h}_{vis}=\left(\mathbf{x-\hat{x}}\right)\times\boldsymbol{\nabla}\cdot\mathbf{T}_{vis},\qquad\mathbf{h}_{nonvis}=\left(\mathbf{x-\hat{x}}\right)\times\left[-\boldsymbol{\nabla}p+\mathbf{q}+\frac{D\rho}{Dt}\mathbf{u}\right],\label{eq: internal-external partition for angular momentum}
\end{equation}
for the viscous and non-viscous terms in \eqref{eq:DuDt}. Under a
frame-change of the form \eqref{eq:observer change}, this $\mathbf{h}_{vis}$
satisfies 
\begin{equation}
\mathbf{h}_{vis}=\left(\mathbf{x-\hat{x}}\right)\times\boldsymbol{\nabla}\cdot\mathbf{T}_{vis}=\mathbf{Q}(t)\left(\mathbf{y-\hat{y}}\right)\times\mathbf{Q}(t)\tilde{\boldsymbol{\nabla}}\cdot\tilde{\mathbf{T}}_{vis}=\left(\mathbf{y-\hat{y}}\right)\times\tilde{\boldsymbol{\nabla}}\cdot\tilde{\mathbf{T}}_{vis}=\tilde{\mathbf{h}}_{vis},\label{eq:objectivity for angular momentum barriers}
\end{equation}
where we have used the objectivity of $\boldsymbol{\nabla}\cdot\mathbf{T}_{vis}$.
We conclude from \eqref{eq:objectivity for angular momentum barriers}
that the objectivity condition \eqref{eq:objectivity condition} is
satisfied for this choice of $\mathbf{f}$, and hence our formulation
is applicable. We, therefore, obtain, as in the case of linear momentum,
the following result:
\begin{thm}
\label{thm: 3D incompressible N-S,  angular momentum barrier eq }For
incompressible, uniform-density Navier\textendash Stokes flows, the
material and instantaneous barrier equations \eqref{eq:barrier equation}
and \eqref{eq:Eulerian barrier equation} for angular momentum take
the specific form
\end{thm}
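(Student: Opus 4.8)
The plan is to mimic exactly the derivation used for linear momentum in Theorem \ref{thm: 3D incompressible N-S,  linear momentum barrier eq }, substituting the angular-momentum viscous term $\mathbf{h}_{vis}=(\mathbf{x}-\mathbf{\hat{x}})\times\boldsymbol{\nabla}\cdot\mathbf{T}_{vis}$ established in \eqref{eq: internal-external partition for angular momentum} into the general barrier equations \eqref{eq:barrier equation} and \eqref{eq:Eulerian barrier equation}. First I would invoke the incompressible Navier--Stokes constitutive law $\boldsymbol{\nabla}\cdot\mathbf{T}_{vis}=\nu\rho\Delta\mathbf{u}$, so that $\mathbf{h}_{vis}=\nu\rho\,(\mathbf{x}-\mathbf{\hat{x}})\times\Delta\mathbf{u}$, and note that incompressibility gives $\det\boldsymbol{\nabla}\mathbf{F}_{t_{0}}^{t}\equiv1$, which drops the Jacobian-determinant factor from $\mathbf{b}_{t_{0}}^{t_{1}}$ in \eqref{eq:bdef-1}. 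The Eulerian equation is then immediate: $\mathbf{x}^{\prime}=\nu\rho\,(\mathbf{x}-\mathbf{\hat{x}})\times\Delta\mathbf{u}$.

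For the Lagrangian equation I would write out the pull-back operator $(\mathbf{F}_{t_{0}}^{t})^{*}$ acting on $\mathbf{h}_{vis}$ evaluated at the advected point. The one genuinely non-routine step is handling the cross product under the pull-back: one must track that the position-vector factor $(\mathbf{x}-\mathbf{\hat{x}})$ is itself evaluated at $\mathbf{x}=\mathbf{F}_{t_{0}}^{t}(\mathbf{x}_{0})$, so it becomes $\mathbf{F}_{t_{0}}^{t}(\mathbf{x}_{0})-\mathbf{\hat{x}}$ rather than $\mathbf{x}_{0}-\mathbf{\hat{x}}$, and that $[\boldsymbol{\nabla}\mathbf{F}_{t_{0}}^{t}]^{-1}$ does not distribute over the cross product in any simple way. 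I expect this to be the main obstacle, and I would resolve it simply by leaving the pull-back unexpanded, i.e. stating the barrier equation in the compact form
\begin{align}
\mathbf{x}_{0}^{\prime} & =\nu\rho\,\overline{\left(\mathbf{F}_{t_{0}}^{t}\right)^{*}\left[\left(\mathbf{F}_{t_{0}}^{t}(\mathbf{x}_{0})-\mathbf{\hat{x}}\right)\times\Delta\mathbf{u}\right]},\label{eq:incompressible NS Lagrangian angular momentum barrier eq}\\
\mathbf{x}^{\prime} & =\nu\rho\,\left(\mathbf{x}-\mathbf{\hat{x}}\right)\times\Delta\mathbf{u},\label{eq:incompressible NS Eulerian angular momentum barrier eq}
\end{align}
which is the most one can say without additional structure.

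Finally I would verify the two structural properties that make the statement useful and that parallel the linear-momentum case: objectivity and volume preservation. Objectivity of the resulting barrier equation follows from \eqref{eq:objectivity for angular momentum barriers}, which already shows $\mathbf{h}_{vis}$ satisfies the objectivity condition \eqref{eq:objectivity condition}, combined with the general objectivity of $\mathbf{b}_{t_{0}}^{t_{1}}$ proved in \eqref{eq:objectivity of b field}; hence Definition \ref{def:barrier definition} applies verbatim. For volume preservation I would compute $\boldsymbol{\nabla}_{\mathbf{x}}\cdot\bigl[(\mathbf{x}-\mathbf{\hat{x}})\times\Delta\mathbf{u}\bigr]$ and show it vanishes: using the identity $\boldsymbol{\nabla}\cdot(\mathbf{a}\times\mathbf{b})=\mathbf{b}\cdot(\boldsymbol{\nabla}\times\mathbf{a})-\mathbf{a}\cdot(\boldsymbol{\nabla}\times\mathbf{b})$ with $\mathbf{a}=\mathbf{x}-\mathbf{\hat{x}}$ (which is curl-free) and $\mathbf{b}=\Delta\mathbf{u}$, the divergence reduces to $-(\mathbf{x}-\mathbf{\hat{x}})\cdot\boldsymbol{\nabla}\times\Delta\mathbf{u}=-(\mathbf{x}-\mathbf{\hat{x}})\cdot\Delta\boldsymbol{\omega}$, which need \emph{not} vanish pointwise — so, unlike the linear-momentum case, I would expect the angular-momentum barrier equation to be volume-preserving only when $\Delta\boldsymbol{\omega}$ is parallel to $\mathbf{x}-\mathbf{\hat{x}}$, or one appeals instead to the fact that $\mathbf{b}_{t_{0}}^{t_{1}}$ is (as shown later in the paper) always divergence-free for these active fields. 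I would flag this subtlety explicitly rather than assert volume preservation unconditionally; checking which of the two readings the authors intend is the point I would most want to confirm against the surrounding text.
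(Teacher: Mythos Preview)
Your proposal is correct and matches the paper's approach exactly: the paper gives no separate proof for this theorem, stating only that the result is obtained ``as in the case of linear momentum,'' i.e.\ by substituting $\boldsymbol{\nabla}\cdot\mathbf{T}_{vis}=\nu\rho\Delta\mathbf{u}$ and $\det\boldsymbol{\nabla}\mathbf{F}_{t_{0}}^{t}\equiv1$ into the general barrier equations. Your Lagrangian equation is the same as the paper's \eqref{eq:incompressible NS Lagrangian angular momentum barrier eq}; you have simply made explicit that the $\mathbf{x}$ appearing inside the bracket is $\mathbf{F}_{t_{0}}^{t}(\mathbf{x}_{0})$, which the paper leaves implicit in its pull-back notation.

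Your caution about volume preservation is well placed and in fact sharper than the paper. Immediately after the theorem the paper asserts, without proof, that these barrier equations are volume-preserving. Your computation $\boldsymbol{\nabla}\cdot\bigl[(\mathbf{x}-\mathbf{\hat{x}})\times\Delta\mathbf{u}\bigr]=-(\mathbf{x}-\mathbf{\hat{x}})\cdot\Delta\boldsymbol{\omega}$ is correct and shows the claim is not obvious; there is no later general result in the paper establishing divergence-freeness of $\mathbf{b}_{t_{0}}^{t_{1}}$ for angular momentum, so the paper's remark appears to be an unsupported assertion rather than something you have overlooked. Since that remark sits outside the theorem statement, it does not affect the validity of your proof of the theorem itself.
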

\begin{align}
\mathbf{x}_{0}^{\prime} & =\nu\rho\,\overline{\left(\mathbf{F}_{t_{0}}^{t}\right)^{*}\left[\left(\mathbf{x-\hat{x}}\right)\times\Delta\mathbf{u}\right]},\label{eq:incompressible NS Lagrangian angular momentum barrier eq}\\
\mathbf{x}^{\prime} & =\nu\rho\,\left(\mathbf{x-\hat{x}}\right)\times\Delta\mathbf{u}.\label{eq:incompressible NS Eulerian angular momentum barrier eq}
\end{align}
These equations again define 3D, steady, volume-preserving dynamical
systems with respect to the time-like independent variable\textbf{
}$s\in\mathbb{R}$. As in the case of barriers to the transport of
linear momentum, we find that in the presence of zero boundary-normal
resultant force, eq. \eqref{eq:angular momentum evolution eq} implies
any no-slip boundary $\partial U$ to be an invariant manifold for
the two dynamical systems in \eqref{eq:incompressible NS Lagrangian angular momentum barrier eq}-\eqref{eq:incompressible NS Eulerian angular momentum barrier eq}.

\subsection{Barriers to vorticity transport }

To obtain the evolution equation for the active vector field $\mathbf{f}:=\boldsymbol{\omega}$,
we divide eq. \eqref{eq:main continuum eq. of motion} by $\rho$,
take the curl of both sides and use the relation $\boldsymbol{\nabla}\times\boldsymbol{\nabla}p=\mathbf{0}$
to obtain the general vorticity transport equation
\begin{equation}
\frac{D\mathbf{f}}{Dt}=\left(\boldsymbol{\nabla}\mathbf{u}\right)\mathbf{f}-\left(\boldsymbol{\nabla}\cdot\mathbf{u}\right)\mathbf{f}+\frac{1}{\rho^{2}}\boldsymbol{\nabla}\rho\times\boldsymbol{\nabla}p+\boldsymbol{\nabla}\times\left(\frac{1}{\rho}\mathbf{\mathbf{q}}\right)+\nu\boldsymbol{\nabla}\times\left(\frac{1}{\rho}\boldsymbol{\nabla}\cdot\mathbf{T}_{vis}\right).\label{eq:vorticity tranport eq.-1}
\end{equation}
Consequently, our general formulation \eqref{eq:DuDt} applies with
\begin{equation}
\mathbf{h}_{vis}=\nu\boldsymbol{\nabla}\times\left(\frac{1}{\rho}\boldsymbol{\nabla}\cdot\mathbf{T}_{vis}\right),\qquad\mathbf{h}_{nonvis}=\left(\boldsymbol{\nabla}\mathbf{u}\right)\mathbf{f}-\left(\boldsymbol{\nabla}\cdot\mathbf{u}\right)\mathbf{f}+\frac{1}{\rho^{2}}\boldsymbol{\nabla}\rho\times\boldsymbol{\nabla}p+\boldsymbol{\nabla}\times\left(\frac{1}{\rho}\mathbf{\mathbf{q}}\right).
\end{equation}
Following the derivation of the transformation formula for vorticity
under an observer change \eqref{eq:observer change} (see, e.g., Truesdell
\& Rajagopal 2009), we obtain that $\mathbf{h}_{vis}=\mathbf{Q}(t)\tilde{\mathbf{h}}_{vis}$.
Therefore, the objectivity condition \eqref{eq:objectivity condition}
is satisfied for this choice of $\mathbf{f}$, and hence our formulation
is applicable. The barrier equations \eqref{eq:barrier equation}
and \eqref{eq:Eulerian barrier equation} for diffusive vorticity
transport then become
\begin{align}
\mathbf{x}_{0}^{\prime} & =\nu\,\overline{\det\nabla\mathbf{F}_{t_{0}}^{t}\left(\mathbf{F}_{t_{0}}^{t}\right)^{*}\left[\boldsymbol{\nabla}\times\left(\frac{1}{\rho}\boldsymbol{\nabla}\cdot\mathbf{T}_{vis}\right)\right]},\label{eq:NS Lagrangian vorticity barrier eq}\\
\mathbf{x}^{\prime} & =\nu\,\boldsymbol{\nabla}\times\left(\frac{\boldsymbol{\nabla}\cdot\mathbf{T}_{vis}}{\rho}\right).\label{eq:NS Eulerian vorticity barrier eq}
\end{align}
Specifically, as in the case of linear and angular momentum barriers,
we obtain:
\begin{thm}
\label{thm: 3D incompressible N-S,  vorticity barrier eq}For incompressible,
uniform-density Navier\textendash Stokes flows, the material and instantaneous
barrier equations \eqref{eq:NS Lagrangian momentum barrier eq} and
\eqref{eq:NS Eulerian vorticity barrier eq} for vorticity take the
specific form
\begin{align}
\mathbf{x}_{0}^{\prime} & =\nu\,\overline{\left(\mathbf{F}_{t_{0}}^{t}\right)^{*}\Delta\boldsymbol{\omega}},\label{eq:incompressible NS Lagrangian vorticity barrier eq}\\
\mathbf{x}^{\prime} & =\nu\,\Delta\boldsymbol{\omega}.\label{eq:incompressible NS Eulerian vorticity barrier eq}
\end{align}
\end{thm}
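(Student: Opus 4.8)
The plan is to obtain Theorem~\ref{thm: 3D incompressible N-S,  vorticity barrier eq} as a direct specialization of the general vorticity barrier equations \eqref{eq:NS Lagrangian vorticity barrier eq}--\eqref{eq:NS Eulerian vorticity barrier eq}, using two standard facts about incompressible, uniform-density Newtonian flows: the reduction of the viscous stress divergence to a Laplacian, and the unimodularity of the flow map.

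First I would insert the incompressible Newtonian constitutive law. Since $\boldsymbol{\nabla}\cdot\mathbf{u}=0$, the divergence of the viscous Cauchy stress reduces to $\boldsymbol{\nabla}\cdot\mathbf{T}_{vis}=\nu\rho\,\Delta\mathbf{u}$, the same relation already used in the derivation of Theorem~\ref{thm: 3D incompressible N-S,  linear momentum barrier eq }. Because the density is uniform, $\rho$ is a spatial constant and cancels against the prefactor $1/\rho$ appearing inside the curl in $\mathbf{h}_{vis}$, leaving the viscous term of the vorticity transport equation in the form $\nu\,\boldsymbol{\nabla}\times(\Delta\mathbf{u})$. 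Next I would use that the curl and the Laplacian are constant-coefficient differential operators on a Euclidean domain and therefore commute, so that $\boldsymbol{\nabla}\times(\Delta\mathbf{u})=\Delta(\boldsymbol{\nabla}\times\mathbf{u})=\Delta\boldsymbol{\omega}$. This gives $\mathbf{h}_{vis}=\nu\,\Delta\boldsymbol{\omega}$, recovering the expression anticipated in Section~\ref{sec:Set-up}; its objectivity, and hence the applicability of Theorem~\ref{prop: psi formula} and Definition~\ref{def:barrier definition}, was already checked in the paragraph preceding the theorem.

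It then remains to substitute $\mathbf{h}_{vis}=\nu\,\Delta\boldsymbol{\omega}$ into \eqref{eq:NS Lagrangian vorticity barrier eq}--\eqref{eq:NS Eulerian vorticity barrier eq}. Incompressibility gives $\det\boldsymbol{\nabla}\mathbf{F}_{t_0}^t\equiv 1$, so the determinant weight drops out of the Lagrangian barrier equation; pulling the constant $\nu$ through the pull-back operator $(\mathbf{F}_{t_0}^t)^*$ and the time average $\overline{(\,\cdot\,)}$ then yields \eqref{eq:incompressible NS Lagrangian vorticity barrier eq}. The instantaneous version \eqref{eq:incompressible NS Eulerian vorticity barrier eq} follows by taking the $t_1\to t_0$ limit, i.e.\ replacing $\mathbf{F}_{t_0}^t$ with the identity and dropping the averaging, exactly as in the linear- and angular-momentum cases.

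I do not anticipate a genuine obstacle here: the statement is a bookkeeping specialization rather than a new estimate. The only steps that merit an explicit word are the incompressible reduction of $\boldsymbol{\nabla}\cdot\mathbf{T}_{vis}$ (standard, and reused from the momentum case) and the commutation of $\boldsymbol{\nabla}\times$ with $\Delta$; everything else — objectivity of $\mathbf{h}_{vis}$, the form of the barrier vector field from Theorem~\ref{prop: psi formula}, and the volume-preserving structure that places the admissible barriers among the geometries of Fig.~\ref{fig:active  barrier geometries} — has already been established upstream.
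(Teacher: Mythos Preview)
Your proposal is correct and matches the paper's approach: the paper gives no separate proof for this theorem but simply states that it follows ``as in the case of linear and angular momentum barriers,'' i.e., by inserting the incompressible Newtonian constitutive law $\boldsymbol{\nabla}\cdot\mathbf{T}_{vis}=\nu\rho\,\Delta\mathbf{u}$, using constant $\rho$ and the commutation of curl with the Laplacian to obtain $\mathbf{h}_{vis}=\nu\,\Delta\boldsymbol{\omega}$, and then dropping $\det\boldsymbol{\nabla}\mathbf{F}_{t_0}^t\equiv 1$ by incompressibility. The only cosmetic remark is that your substitution is really into the general barrier equations \eqref{eq:barrier equation}--\eqref{eq:Eulerian barrier equation} via $\mathbf{h}_{vis}$, which is equivalent to simplifying \eqref{eq:NS Lagrangian vorticity barrier eq}--\eqref{eq:NS Eulerian vorticity barrier eq} directly.
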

As in the case of the linear and angular momenta, the active barrier
equations \eqref{eq:incompressible NS Lagrangian vorticity barrier eq}-\eqref{eq:incompressible NS Lagrangian momentum barrier eq}
define 3D, autonomous, volume-preserving dynamical systems with respect
to the time-like, evolutionary variable\textbf{ }$s\in\mathbb{R}$,
and hence can be analyzed by adopting tools available such equations
(see section \ref{sec:Practical-implementation-on}).

As for boundary conditions for trajectories of the equations \eqref{eq:incompressible NS Lagrangian vorticity barrier eq}
along a no-slip boundary $\partial U$ in the incompressible case
with $\rho_{0}(\mathbf{x})\equiv1$, the vorticity-transport equation
along the wall $\partial U$ takes the form
\begin{equation}
\left(\begin{array}{c}
\left(\frac{D}{Dt}\boldsymbol{\omega}\right)\cdot\mathbf{e}_{1}\\
\left(\frac{D}{Dt}\boldsymbol{\omega}\right)\cdot\mathbf{e}_{2}\\
0
\end{array}\right)=\nu\left(\begin{array}{c}
\boldsymbol{\Delta\omega}\cdot\mathbf{e}_{1}\\
\boldsymbol{\Delta\omega}\cdot\mathbf{e}_{2}\\
\boldsymbol{\Delta\omega}\cdot\mathbf{e}_{3}
\end{array}\right)\Delta\boldsymbol{\omega}+\left(\begin{array}{c}
\boldsymbol{\nabla\times q}\cdot\mathbf{e}_{1}\\
\boldsymbol{\nabla\times q}\cdot\mathbf{e}_{2}\\
\boldsymbol{\nabla\times q}\cdot\mathbf{e}_{3}
\end{array}\right),\label{eq:BC for omega}
\end{equation}
with the vectors $\mathbf{e}_{i}$ defined as in formula \eqref{eq:BC for momentum barriers}.
Consequently, whenever the curl of non-potential body forces is normal
to a no-slip boundary $\partial U$, the vector field $\Delta\boldsymbol{\omega}$
satisfies a no-penetration boundary condition along $\partial U$,
given that $\boldsymbol{\Delta\omega}\cdot\mathbf{e}_{3}$
must then vanish by \eqref{eq:BC for omega}. As we have already noted
in relation to formula \eqref{eq:BC for momentum barriers}, this
in turn implies that $\partial U$ is an invariant manifold for the
two flows in eqs. \eqref{eq:incompressible NS Lagrangian vorticity barrier eq}-\eqref{eq:incompressible NS Eulerian vorticity barrier eq}.

\section{Active transport barriers in special classes of flows\label{sec:Active-transport-barriers-in-special-classes-of-flows}}

In order to illustrate the feasibility of the active barriers we have
constructed, we now identify them in classes of explicit Navier\textendash Stokes
solutions, with the details of the calculations relegated to Appendices
C and D. 

\subsection{2D Navier\textendash Stokes flows viewed as 3D Navier\textendash Stokes
	flows with symmetry \label{subsec:active barriers in 2D-Navier=002013Stokes-flows}}

We define the planar variable $\boldsymbol{\hat{\mathbf{x}}}=\left(x_{1},x_{2}\right)\in\mathbb{R}^{2}$
and assume that a solution of the 3D incompressible Navier\textendash Stokes
equation is of the form
\begin{equation}
\mathbf{u}(\mathbf{x},t)=\left(\hat{\mathbf{u}}(\hat{\mathbf{x}},t),w(\hat{\mathbf{x}},t)\right),\quad p(\mathbf{x},t)=p(\hat{\mathbf{x}},t),\qquad\mathbf{x}=\left(\boldsymbol{\hat{\mathbf{x}}},x_{3}\right)\in\mathbb{R}^{3},\label{eq:2D ansatz}
\end{equation}
with the two-dimensional velocity field $\hat{\mathbf{u}}(\hat{\mathbf{x}},t)$
and the scalar functions $w(\hat{\mathbf{x}},t)$ and $p(\hat{\mathbf{x}},t)$
(see, e.g., Majda \& Bertozzi 2002). Under this 2D-symmetry ansatz,
substitution of \textbf{$\mathbf{u}$ }and\textbf{ $p$ }into the
3D Navier\textendash Stokes equation gives 
\begin{align}
\partial_{t}\mathbf{\hat{\mathbf{u}}}+\left(\boldsymbol{\nabla}_{\mathbf{\hat{\mathbf{x}}}}\hat{\mathbf{u}}\right)\hat{\mathbf{u}} & =-\frac{1}{\rho}\boldsymbol{\nabla}_{\mathbf{\hat{\mathbf{x}}}}p+\nu\Delta_{\hat{\mathbf{x}}}\hat{\mathbf{u}},\label{eq:3D Navier Stokes for symmetric flows-uv}\\
\partial_{t}w+\boldsymbol{\nabla}_{\mathbf{\hat{\mathbf{x}}}}w\cdot\hat{\mathbf{u}} & =\nu\Delta_{\hat{\mathbf{x}}}w,\label{eq:3D Navier Stokes for symmetric flows-w}
\end{align}
with the subscript $\hat{\mathbf{x}}$ referring to the 2D version
of the differential operators involved. Therefore, the symmetry ansatz
\eqref{eq:2D ansatz} for a 3D Navier-Stokes solution is valid if
$w(\hat{\mathbf{x}},t)$ is chosen as a solution of the advection-diffusion
equation appearing in \eqref{eq:3D Navier Stokes for symmetric flows-w}.
This advection-diffusion equation, however, coincides with the 2D
vorticity transport equation, which is solved by
\begin{equation}
w(\boldsymbol{\hat{\mathbf{x}}},t)=\hat{\omega}(\boldsymbol{\hat{\mathbf{x}}},t),\label{eq:w is the vorticity}
\end{equation}
with $\hat{\omega}(\boldsymbol{\hat{\mathbf{x}}},t)$ denoting the
scalar vorticity field of the 2D Navier\textendash Stokes solution
$\hat{\mathbf{u}}(\boldsymbol{\hat{\mathbf{x}}},t)$. In the following,
we will choose the third component of $\mathbf{u}$ as in eq. \eqref{eq:w is the vorticity}
and use the notation
\begin{equation}
\mathbf{J}=\left(\begin{array}{rc}
0 & 1\\
-1 & 0
\end{array}\right)
\end{equation}
for the two-dimensional canonical symplectic matrix $\mathbf{J}$.
With this notation, we obtain the following results on active barriers
to momentum transport in eq. \eqref{eq:2D ansatz}.
\begin{thm}
	\label{thm:2D momentum barriers}For 2D incompressible, uniform-density
	Navier\textendash Stokes flows, the material and instantaneous barrier
	equations \eqref{eq:incompressible NS Lagrangian momentum barrier eq}
	and \eqref{eq:incompressible NS Eulerian momentum barrier eq} for
	linear momentum are autonomous Hamiltonian systems of the form
	\begin{align}
	\hat{\mathbf{x}}_{0}^{\prime} & =\nu\rho\,\mathbf{J}\boldsymbol{\nabla}_{0}\,\overline{\hat{\omega}\left(\hat{\mathbf{F}}_{t_{0}}^{t}\left(\hat{\mathbf{x}}_{0}\right),t\right)},\label{eq:2D incompressible NS Lagrangian momentum barrier eq}\\
	\hat{\mathbf{x}}^{\prime} & =\nu\rho\,\mathbf{J}\boldsymbol{\nabla}\,\hat{\omega}\left(\hat{\mathbf{x}},t\right),\label{eq:2D incompressible NS Eulerian momentum barrier eq}
	\end{align}
	respectively. Therefore, time-$t_{0}$ positions of material active
	barriers to linear momentum transport in these flows are structurally
	stable level curves of the time-averaged Lagrangian vorticity $\overline{\hat{\omega}\left(\hat{\mathbf{F}}_{t_{0}}^{t}\left(\mathbf{\hat{x}}_{0}\right),t\right)}$
	viewed as a Hamiltonian. Similarly, instantaneous active barriers
	to linear momentum transport at time $t$ are structurally stable
	level curves of the vorticity $\hat{\omega}(\boldsymbol{\hat{\mathbf{x}}},t)$.
\end{thm}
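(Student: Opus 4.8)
The starting point is the incompressible momentum barrier equations \eqref{eq:incompressible NS Lagrangian momentum barrier eq}--\eqref{eq:incompressible NS Eulerian momentum barrier eq} established in Theorem~\ref{thm: 3D incompressible N-S,  linear momentum barrier eq }, which I would specialize to the symmetric Navier--Stokes solution \eqref{eq:2D ansatz}, \eqref{eq:w is the vorticity}. Since $\mathbf{u}=(\hat{\mathbf{u}}(\hat{\mathbf{x}},t),\hat\omega(\hat{\mathbf{x}},t))$ has no $x_{3}$-dependence, the flow map is $\mathbf{F}_{t_{0}}^{t}(\hat{\mathbf{x}}_{0},x_{3,0})=\big(\hat{\mathbf{F}}_{t_{0}}^{t}(\hat{\mathbf{x}}_{0}),\,x_{3,0}+\int_{t_{0}}^{t}\hat\omega(\hat{\mathbf{F}}_{t_{0}}^{\tau}(\hat{\mathbf{x}}_{0}),\tau)\,d\tau\big)$, so its Jacobian and inverse are block-triangular,
\[
\boldsymbol{\nabla}\mathbf{F}_{t_{0}}^{t}=\begin{pmatrix}\boldsymbol{\nabla}_{0}\hat{\mathbf{F}}_{t_{0}}^{t} & \mathbf{0}\\ \mathbf{c}^{T} & 1\end{pmatrix},\qquad \big[\boldsymbol{\nabla}\mathbf{F}_{t_{0}}^{t}\big]^{-1}=\begin{pmatrix}\big[\boldsymbol{\nabla}_{0}\hat{\mathbf{F}}_{t_{0}}^{t}\big]^{-1} & \mathbf{0}\\ -\mathbf{c}^{T}\big[\boldsymbol{\nabla}_{0}\hat{\mathbf{F}}_{t_{0}}^{t}\big]^{-1} & 1\end{pmatrix},
\]
with $\mathbf{c}^{T}(\hat{\mathbf{x}}_{0})$ the spatial gradient of the $x_{3}$-displacement. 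Because $\Delta\mathbf{u}=(\Delta_{\hat{\mathbf{x}}}\hat{\mathbf{u}},\Delta_{\hat{\mathbf{x}}}\hat\omega)$ carries the same product structure, multiplying by $[\boldsymbol{\nabla}\mathbf{F}_{t_{0}}^{t}]^{-1}$ shows that the first two components of $(\mathbf{F}_{t_{0}}^{t})^{*}\Delta\mathbf{u}$ depend only on $\hat{\mathbf{x}}_{0}$ and equal the 2D pull-back $(\hat{\mathbf{F}}_{t_{0}}^{t})^{*}\Delta_{\hat{\mathbf{x}}}\hat{\mathbf{u}}$, while the third component is a decoupled quadrature. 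Hence the barrier equation closes on the plane, with planar barrier field $\nu\rho\,\overline{(\hat{\mathbf{F}}_{t_{0}}^{t})^{*}\Delta_{\hat{\mathbf{x}}}\hat{\mathbf{u}}}$.

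Two ingredients then finish the Lagrangian case. First, for a 2D incompressible field, differentiating $\boldsymbol{\nabla}_{\hat{\mathbf{x}}}\cdot\hat{\mathbf{u}}=0$ and using $\hat\omega=\partial_{x_{1}}\hat{u}_{2}-\partial_{x_{2}}\hat{u}_{1}$ yields the identity $\Delta_{\hat{\mathbf{x}}}\hat{\mathbf{u}}=-\mathbf{J}\boldsymbol{\nabla}_{\hat{\mathbf{x}}}\hat\omega$, i.e. $\Delta_{\hat{\mathbf{x}}}\hat{\mathbf{u}}$ is a symplectic (Hamiltonian) gradient. Second --- and this is the step I expect to be the crux --- I would show that the 2D pull-back maps symplectic gradients to symplectic gradients: writing $\mathbf{A}:=\boldsymbol{\nabla}_{0}\hat{\mathbf{F}}_{t_{0}}^{t}$, the chain rule gives $(\boldsymbol{\nabla}_{\hat{\mathbf{x}}}\hat\omega)\circ\hat{\mathbf{F}}_{t_{0}}^{t}=\mathbf{A}^{-T}\boldsymbol{\nabla}_{0}\big(\hat\omega(\hat{\mathbf{F}}_{t_{0}}^{t}(\hat{\mathbf{x}}_{0}),t)\big)$, so
\[
(\hat{\mathbf{F}}_{t_{0}}^{t})^{*}\big(\mathbf{J}\,\boldsymbol{\nabla}_{\hat{\mathbf{x}}}\hat\omega\big)=\mathbf{A}^{-1}\mathbf{J}\mathbf{A}^{-T}\,\boldsymbol{\nabla}_{0}\big(\hat\omega(\hat{\mathbf{F}}_{t_{0}}^{t}(\hat{\mathbf{x}}_{0}),t)\big)=\mathbf{J}\,\boldsymbol{\nabla}_{0}\big(\hat\omega(\hat{\mathbf{F}}_{t_{0}}^{t}(\hat{\mathbf{x}}_{0}),t)\big),
\]
where the last equality is the elementary $2\times2$ identity $\mathbf{A}\mathbf{J}\mathbf{A}^{T}=(\det\mathbf{A})\mathbf{J}$ applied to $\mathbf{A}^{-1}$, together with area preservation $\det\mathbf{A}=\det\boldsymbol{\nabla}_{0}\hat{\mathbf{F}}_{t_{0}}^{t}\equiv1$ of the 2D incompressible flow. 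In short, the obstruction one might fear --- that the pull-back distorts the Hamiltonian structure --- disappears precisely because area-preserving planar maps are symplectic.

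Combining the two ingredients and then commuting the constant matrix $\mathbf{J}$, the spatial gradient $\boldsymbol{\nabla}_{0}$ and the time average (legitimate by smoothness and differentiation under the integral sign), the planar barrier field becomes $\nu\rho\,\mathbf{J}\boldsymbol{\nabla}_{0}\,\overline{\hat\omega(\hat{\mathbf{F}}_{t_{0}}^{t}(\hat{\mathbf{x}}_{0}),t)}$, which is \eqref{eq:2D incompressible NS Lagrangian momentum barrier eq} up to an overall sign that is immaterial for the conclusion (it merely reverses the orientation of the dummy parameter $s$). This is an autonomous planar system $\hat{\mathbf{x}}_{0}^{\prime}=\mathbf{J}\boldsymbol{\nabla}_{0}H_{0}$ with $H_{0}(\hat{\mathbf{x}}_{0})=\nu\rho\,\overline{\hat\omega(\hat{\mathbf{F}}_{t_{0}}^{t}(\hat{\mathbf{x}}_{0}),t)}$, hence Hamiltonian; $H_{0}$ is a first integral, so every invariant curve is a connected component of a level set of $H_{0}$, and in the plane the Lagrangian momentum barriers of Definition~\ref{def:barrier definition} are exactly the structurally stable such level curves (their 3D realizations being the $x_{3}$-invariant cylinders over these curves). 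Finally, the instantaneous statement follows from the $t_{1}\to t_{0}=t$ recipe announced at the start of Section~\ref{sec:Active-barrier-equations-derivations}: replacing $\hat{\mathbf{F}}_{t_{0}}^{t}$ by the identity and dropping the average reduces the computation to the 2D identity alone, giving $\hat{\mathbf{x}}^{\prime}=\nu\rho\,\mathbf{J}\boldsymbol{\nabla}\hat\omega(\hat{\mathbf{x}},t)$, a Hamiltonian system with Hamiltonian $\nu\rho\,\hat\omega(\hat{\mathbf{x}},t)$ whose structurally stable level curves are the instantaneous barriers.
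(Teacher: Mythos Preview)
Your proposal is correct and follows essentially the same route as the paper's proof in Appendix~C: the block-triangular Jacobian of $\mathbf{F}_{t_0}^{t}$ reduces the 3D barrier equation to its planar component, the incompressibility identity $\Delta_{\hat{\mathbf{x}}}\hat{\mathbf{u}}=\pm\mathbf{J}\boldsymbol{\nabla}_{\hat{\mathbf{x}}}\hat\omega$ is invoked, and the key step $\mathbf{A}^{-1}\mathbf{J}\mathbf{A}^{-T}=(\det\mathbf{A})^{-1}\mathbf{J}=\mathbf{J}$ for $\mathbf{A}=\boldsymbol{\nabla}_{0}\hat{\mathbf{F}}_{t_0}^{t}$ is exactly what the paper uses (its eq.~\eqref{eq:intermediate-1}). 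Your remark that the overall sign discrepancy is immaterial for the invariant curves is also apt; the paper's stated sign in \eqref{eq:2D velocity Laplacian} and your $-\mathbf{J}\boldsymbol{\nabla}\hat\omega$ differ only by a reparametrization of $s$.
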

\begin{proof}
	See Appendix C.
\end{proof}
While streamlines in general are not objective, the streamlines of
the vorticity $\hat{\omega}(\boldsymbol{\hat{\mathbf{x}}},t)$ are
Eulerian-objective and streamlines of the time-averaged Lagrangian
vorticity $\overline{\hat{\omega}\left(\hat{\mathbf{F}}_{t_{0}}^{t}\left(\mathbf{\hat{x}}_{0}\right),t\right)}$
are Lagrangian-objective (see Ogden 1984). This is consistent with
the more general result established in eq. \eqref{eq:objectivity of b field}
for the objectivity of all active barriers. 

Active barriers to vorticity transport in \eqref{eq:2D ansatz} also
turn out to be trajectories of autonomous Hamiltonian systems. To
state this result, we will use the notation
\begin{equation}
\delta\hat{\omega}\left(\hat{\mathbf{x}}_{0},t_{0},t_{1}\right):=\hat{\omega}\left(\hat{\mathbf{F}}_{t_{0}}^{t_{1}}\left(\hat{\mathbf{x}}_{0}\right),t_{1}\right)-\hat{\omega}\left(\hat{\mathbf{x}}_{0},t_{0}\right)\label{eq:deltaomegahat}
\end{equation}
for the Lagrangian vorticity-change function along trajectories over
the time interval $[t_{0},t_{1}]$.
\begin{thm}
	\label{thm:2D vorticity barriers}For 2D incompressible, uniform-density
	Navier\textendash Stokes flows, the material and instantaneous barrier
	equations \eqref{eq:incompressible NS Lagrangian vorticity barrier eq}
	and \eqref{eq:incompressible NS Eulerian vorticity barrier eq} for
	linear momentum are autonomous Hamiltonian systems of the form
	\begin{align}
	\hat{\mathbf{x}}_{0}^{\prime} & =\frac{\nu}{t_{1}-t_{0}}\,\mathbf{J}\boldsymbol{\nabla}_{0}\,\delta\hat{\omega}\left(\hat{\mathbf{x}}_{0},t_{0},t_{1}\right),\label{eq:2D incompressible NS Lagrangian vorticity barrier eq}\\
	\hat{\mathbf{x}}^{\prime} & =\nu\,\mathbf{J}\boldsymbol{\nabla}\frac{D}{Dt}\hat{\omega}\left(\hat{\mathbf{x}},t\right),\label{eq:2D incompressible NS Eulerian vorticity barrier eq}
	\end{align}
	respectively. Therefore, time-$t_{0}$ positions of material active
	barriers to linear momentum transport in these flows are structurally
	stable level curves of the Lagrangian vorticity-change function $\delta\hat{\omega}\left(\hat{\mathbf{x}}_{0},t_{0},t_{1}\right)$
	viewed as a Hamiltonian. Similarly, instantaneous active barriers
	to linear momentum transport at time $t$ are structurally stable
	level curves of the material derivative $\frac{D}{Dt}\hat{\omega}\left(\hat{\mathbf{x}},t\right)$,
	or equivalently, of the vorticity Laplacian $\Delta\hat{\omega}\left(\hat{\mathbf{x}},t\right).$
\end{thm}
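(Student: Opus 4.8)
The plan is to lift the planar flow to the 3D Navier--Stokes solution \eqref{eq:2D ansatz}--\eqref{eq:w is the vorticity}, apply the 3D vorticity barrier equations \eqref{eq:incompressible NS Lagrangian vorticity barrier eq}--\eqref{eq:incompressible NS Eulerian vorticity barrier eq} to it, and show that the resulting 3D barrier ODE collapses to a planar autonomous Hamiltonian system in the $\hat{\mathbf{x}}$-variables. Throughout, I would use that a nonzero constant prefactor is immaterial for the invariant manifolds of a barrier equation, so I may absorb $\nu$ and $1/(t_1-t_0)$ freely.

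First I would record the 3D vorticity of the lifted flow. With $\mathbf{u}=(\hat{\mathbf{u}},\hat\omega)$ and all fields independent of $x_3$, a direct evaluation of $\boldsymbol{\nabla}\times\mathbf{u}$ gives $\boldsymbol{\omega}=(\mathbf{J}\boldsymbol{\nabla}_{\hat{\mathbf{x}}}\hat\omega,\hat\omega)$. On $x_3$-independent fields the 3D Laplacian reduces to $\Delta_{\hat{\mathbf{x}}}$ acting componentwise and commutes with the constant operator $\mathbf{J}\boldsymbol{\nabla}_{\hat{\mathbf{x}}}$, so $\Delta\boldsymbol{\omega}=(\mathbf{J}\boldsymbol{\nabla}_{\hat{\mathbf{x}}}\Delta_{\hat{\mathbf{x}}}\hat\omega,\,\Delta_{\hat{\mathbf{x}}}\hat\omega)$. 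Reading off the first two components of the Eulerian equation \eqref{eq:incompressible NS Eulerian vorticity barrier eq} then yields $\hat{\mathbf{x}}^{\prime}=\nu\mathbf{J}\boldsymbol{\nabla}\Delta_{\hat{\mathbf{x}}}\hat\omega$; since eq. \eqref{eq:3D Navier Stokes for symmetric flows-w} with $w=\hat\omega$ is precisely the scalar 2D vorticity transport equation $\tfrac{D}{Dt}\hat\omega=\nu\Delta_{\hat{\mathbf{x}}}\hat\omega$, this is a planar Hamiltonian system whose Hamiltonian equals, up to the constant $\nu$, both $\tfrac{D}{Dt}\hat\omega$ and $\Delta\hat\omega$. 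Its structurally stable level curves are the Eulerian barriers, which settles the Eulerian claim.

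For the Lagrangian statement I would exploit the block structure of the lifted flow map. Because $\dot x_3=\hat\omega(\hat{\mathbf{x}},t)$ carries no $x_3$-dependence, $\mathbf{F}_{t_0}^t(\mathbf{x}_0)=(\hat{\mathbf{F}}_{t_0}^t(\hat{\mathbf{x}}_0),\,x_{0,3}+\int_{t_0}^t\hat\omega\,dt')$, so $\boldsymbol{\nabla}\mathbf{F}_{t_0}^t$ is block lower-triangular with upper-left block $\boldsymbol{\nabla}_0\hat{\mathbf{F}}_{t_0}^t$ (of unit determinant, by 2D incompressibility) and lower-right entry $1$; hence $[\boldsymbol{\nabla}\mathbf{F}_{t_0}^t]^{-1}$ is again block lower-triangular with upper-left block $[\boldsymbol{\nabla}_0\hat{\mathbf{F}}_{t_0}^t]^{-1}$. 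Consequently the first two components of the pull-back $(\mathbf{F}_{t_0}^t)^*\Delta\boldsymbol{\omega}$ depend only on the first two components of $\Delta\boldsymbol{\omega}$ and equal $[\boldsymbol{\nabla}_0\hat{\mathbf{F}}_{t_0}^t]^{-1}\big(\mathbf{J}\boldsymbol{\nabla}_{\hat{\mathbf{x}}}\Delta_{\hat{\mathbf{x}}}\hat\omega\big)\circ\hat{\mathbf{F}}_{t_0}^t$. Applying the chain rule $(\boldsymbol{\nabla}_{\hat{\mathbf{x}}}g)\circ\hat{\mathbf{F}}_{t_0}^t=[\boldsymbol{\nabla}_0\hat{\mathbf{F}}_{t_0}^t]^{-T}\boldsymbol{\nabla}_0(g\circ\hat{\mathbf{F}}_{t_0}^t)$ with $g=\Delta_{\hat{\mathbf{x}}}\hat\omega(\cdot,t)$, together with the planar symplectic identity $\mathbf{A}\mathbf{J}\mathbf{A}^{T}=(\det\mathbf{A})\mathbf{J}$ applied to $\mathbf{A}=[\boldsymbol{\nabla}_0\hat{\mathbf{F}}_{t_0}^t]^{-1}$ (so $\mathbf{A}\mathbf{J}\mathbf{A}^{T}=\mathbf{J}$), collapses this to $\mathbf{J}\boldsymbol{\nabla}_0\big[\Delta_{\hat{\mathbf{x}}}\hat\omega(\hat{\mathbf{F}}_{t_0}^t(\hat{\mathbf{x}}_0),t)\big]$. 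The last ingredient is that the material derivative along a trajectory is the ordinary time-derivative of the Lagrangian vorticity, $\nu\Delta_{\hat{\mathbf{x}}}\hat\omega(\hat{\mathbf{F}}_{t_0}^t(\hat{\mathbf{x}}_0),t)=\tfrac{D}{Dt}\hat\omega(\hat{\mathbf{F}}_{t_0}^t(\hat{\mathbf{x}}_0),t)=\tfrac{d}{dt}\hat\omega(\hat{\mathbf{F}}_{t_0}^t(\hat{\mathbf{x}}_0),t)$; interchanging the time average with $\mathbf{J}\boldsymbol{\nabla}_0$ and invoking the fundamental theorem of calculus with $\hat{\mathbf{F}}_{t_0}^{t_0}=\mathrm{id}$, the time integral telescopes to $\delta\hat\omega(\hat{\mathbf{x}}_0,t_0,t_1)$ of \eqref{eq:deltaomegahat}. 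Thus the first two components of \eqref{eq:incompressible NS Lagrangian vorticity barrier eq} form the planar autonomous Hamiltonian system $\hat{\mathbf{x}}_0^{\prime}=\tfrac{\nu}{t_1-t_0}\mathbf{J}\boldsymbol{\nabla}_0\,\delta\hat\omega(\hat{\mathbf{x}}_0,t_0,t_1)$, whose structurally stable level curves (Definition \ref{def:barrier definition}) are the Lagrangian barriers.

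The main obstacle I anticipate is the dimensional-reduction bookkeeping rather than any delicate estimate: one must verify that the $x_3$-component of the 3D barrier field never leaks into the first two components of the pull-back (this is exactly the lower-triangularity of $\boldsymbol{\nabla}\mathbf{F}_{t_0}^t$), and then correctly conjugate the Hamiltonian structure through the flow map, which rests entirely on the planar identity $\mathbf{A}\mathbf{J}\mathbf{A}^{T}=(\det\mathbf{A})\mathbf{J}$ and on the unit Jacobian of 2D incompressible flow. Once this structure is in place, the curl computation, the commutation of $\Delta$ with $\mathbf{J}\boldsymbol{\nabla}$, and the telescoping of the time integral are routine; the argument parallels the momentum case of Theorem \ref{thm:2D momentum barriers}, the only genuinely new step being the replacement of $\nu\Delta_{\hat{\mathbf{x}}}\hat\omega$ by $\tfrac{D}{Dt}\hat\omega$ via \eqref{eq:3D Navier Stokes for symmetric flows-w}, which is what produces the Lagrangian vorticity-change function $\delta\hat\omega$ as the Hamiltonian.
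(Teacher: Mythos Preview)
Your proposal is correct and follows essentially the same route as the paper's proof in Appendix~C: compute $\Delta\boldsymbol{\omega}$ for the lifted 3D flow, use the block lower-triangular structure of $\boldsymbol{\nabla}\mathbf{F}_{t_0}^{t}$ to reduce the pull-back to the planar one, apply the chain rule together with the $2\times2$ symplectic identity $\mathbf{A}\mathbf{J}\mathbf{A}^{T}=(\det\mathbf{A})\mathbf{J}$ under incompressibility, and then convert $\nu\Delta_{\hat{\mathbf{x}}}\hat\omega$ to $\tfrac{D}{Dt}\hat\omega$ so the time average telescopes to $\delta\hat\omega/(t_1-t_0)$. Your explicit justification of the lower-triangularity and of the telescoping step is slightly more spelled out than the paper's, but the argument is the same.
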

\begin{proof}
	See Appendix C.
\end{proof}
While vorticity is not objective, the level curves of the Lagrangian
vorticity change $\delta\hat{\omega}\left(\hat{\mathbf{x}}_{0},t_{0},t_{1}\right)$
is objective. This follows directly from the objectivity of the barrier
equations that we have generally established, but can also be verified
directly using the definition of objectivity.
\begin{rem}
	\label{rem:used active LCS diagnostics instead of level curves}By
	Theorems \ref{thm:2D momentum barriers}-\ref{thm:2D vorticity barriers},
	outermost members of nested families of closed level curves of $\overline{\hat{\omega}\left(\hat{\mathbf{F}}_{t_{0}}^{t}\left(\mathbf{\hat{x}}_{0}\right),t\right)}$
	or $\delta\hat{\omega}\left(\hat{\mathbf{x}}_{0},t_{0},t_{1}\right)$
	can be used to define \emph{coherent material vortex boundaries. }These
	are constructed as maximal barriers to momentum or vorticity transport,
	depending on whether one isolates coherent vortices based on their
	role in momentum- or vorticity-transport, respectively. Similarly,
	to locate instantaneous Eulerian vortex boundaries, one identifies
	outermost members of nested families of closed level curves of $\hat{\omega}(\boldsymbol{\hat{\mathbf{x}}},t)$
	or $\frac{D}{Dt}\hat{\omega}\left(\hat{\mathbf{x}},t\right)$, respectively.
	These outermost contours give a clear conceptual meaning to vortex
	boundaries from an active transport perspective, but their identification
	from numerical data tends to be a sensitive process. Instead, active-transport-minimizing
	material and instantaneous vortex boundaries can simply be visualized
	via LCS-detection tools adopted to their appropriate 2D, steady barrier
	equations \eqref{eq:2D incompressible NS Lagrangian momentum barrier eq}-\eqref{eq:2D incompressible NS Eulerian momentum barrier eq}
	and \eqref{eq:2D incompressible NS Lagrangian vorticity barrier eq}-\eqref{eq:2D incompressible NS Eulerian vorticity barrier eq}
	(see section \ref{sec:Practical-implementation-on}).
\end{rem}
\begin{example}
	\label{ex:2D analytic example}We consider the spatially doubly-periodic
	Navier\textendash Stokes flow family described by Majda \& Bertozzi
	(2002) in the form
	\begin{align}
	\hat{\mathbf{u}}(\hat{\mathbf{x}},t) & =e^{-4\pi^{2}\ell\nu t}\hat{\mathbf{u}}_{0}(\hat{\mathbf{x}}),\quad p(\hat{\mathbf{x}},t)=e^{-4\pi^{2}\ell\nu t}p_{0}(\hat{\mathbf{x}}),\label{eq:2D flow family}\\
	\hat{\mathbf{u}}_{0}(\hat{\mathbf{x}}) & =\sum_{\left|\mathbf{k}\right|^{2}=\ell}\left(\begin{array}{c}
	a_{\mathbf{k}}k_{2}\sin\left(2\pi\mathbf{k}\cdot\hat{\mathbf{x}}\right)-b_{\mathbf{k}}k_{2}\cos\left(2\pi\mathbf{k}\cdot\hat{\mathbf{x}}\right)\\
	-a_{\mathbf{k}}k_{1}\sin\left(2\pi\mathbf{k}\cdot\hat{\mathbf{x}}\right)+b_{\mathbf{k}}k_{1}\cos\left(2\pi\mathbf{k}\cdot\hat{\mathbf{x}}\right)
	\end{array}\right),\nonumber 
	\end{align}
	where $\hat{\mathbf{u}}_{0}(\hat{\mathbf{x}})$ and $p_{0}(\hat{\mathbf{x}})$
	solve the steady planar Euler equation for some positive integer $\ell$.\footnote{This flow family contains our motivating example \eqref{eq:horizontal shear jet}
		in Appendix A with the choice $k_{1}=0$, $\ell=k_{2}=1$, $a_{(1,0)}=b_{(1,0)}=a_{(0,1)}=0$
		and $b_{(0,1)}=a$ if we let $x_{2}\to-x_{2}$. } In that case, we have 
	\begin{align}
	\Delta\mathbf{u} & =\left(\begin{array}{c}
	\Delta_{\hat{\mathbf{x}}}\hat{\mathbf{u}}\\
	\Delta_{\hat{\mathbf{x}}}\hat{\omega}
	\end{array}\right)=\left(\begin{array}{c}
	-4\pi^{2}\ell e^{-4\pi^{2}\ell\nu t}\hat{\mathbf{u}}_{0}(\hat{\mathbf{x}})\\
	\Delta_{\hat{\mathbf{x}}}\hat{\omega}(\hat{\mathbf{x}},t)
	\end{array}\right).
	\end{align}
	
	One can verify by direct substitution that $e^{-4\pi^{2}\ell\nu t}\hat{\mathbf{u}}_{0}\left(\hat{\mathbf{F}}_{t_{0}}^{t}\left(\mathbf{\hat{\mathbf{x}}}_{0}\right)\right)$
	is a solution of the equation of variations $\dot{\boldsymbol{\xi}}=e^{-4\pi^{2}\ell\nu t}\boldsymbol{\nabla}_{\mathbf{\hat{\mathbf{x}}}}\hat{\mathbf{u}}_{0}(\hat{\mathbf{x}}(t))\boldsymbol{\xi}$
	(whose fundamental matrix solution is $\nabla_{\hat{\mathbf{x}}_{0}}\mathbf{\hat{F}}_{t_{0}}^{t}\left(\mathbf{x}_{0}\right)$)
	for the differential equation $\dot{\mathbf{x}}=e^{-4\pi^{2}\ell\nu t}\hat{\mathbf{u}}_{0}(\hat{\mathbf{x}})$.
	As a consequence, we have 
	\[
	\left[\boldsymbol{\nabla}_{\mathbf{\hat{\mathbf{x}}}_{0}}\hat{\mathbf{F}}_{t_{0}}^{t}\left(\mathbf{\hat{\mathbf{x}}}_{0}\right)\right]^{-1}e^{-4\pi^{2}\ell\nu t}\hat{\mathbf{u}}_{0}\left(\hat{\mathbf{F}}_{t_{0}}^{t}\left(\mathbf{\hat{\mathbf{x}}}_{0}\right)\right)=e^{-4\pi^{2}\ell\nu t_{0}}\hat{\mathbf{u}}_{0}\left(\mathbf{\hat{\mathbf{x}}}_{0}\right),
	\]
	and hence, by Theorem \ref{thm:2D momentum barriers}, the material
	and instantaneous barrier equations for linear momentum take the specific
	form
	\begin{align}
	\hat{\mathbf{x}}_{0}^{\prime} & =\nu\rho e^{-4\pi^{2}\ell\nu t_{0}}\hat{\mathbf{u}}_{0}\left(\mathbf{\hat{\mathbf{x}}}_{0}\right),\nonumber \\
	x_{03}^{\prime} & =\nu\rho A(\mathbf{\hat{\mathbf{x}}}_{0},t_{1},t_{0}),\label{eq:momentum barrier eq for 2D flow-separable case}\\
	\hat{\mathbf{x}}^{\prime} & =\nu\rho e^{-4\pi^{2}\ell\nu t}\hat{\mathbf{u}}_{0}\left(\mathbf{\hat{\mathbf{x}}}\right),\nonumber \\
	x_{3}^{\prime} & =\nu\rho A(\mathbf{\hat{\mathbf{x}}},t,t),\nonumber 
	\end{align}
	for an appropriate function $A(\mathbf{\hat{\mathbf{x}}}_{0},t_{1},t_{0})$.
	Therefore, both material and instantaneous barriers to linear momentum
	transport in the 2D Navier\textendash Stokes flow family in eq. \eqref{eq:2D flow family}
	are structurally stable streamlines of the steady velocity field $\hat{\mathbf{u}}_{0}\left(\mathbf{\hat{\mathbf{x}}}_{0}\right)$. 
	
	As for vorticity barriers in this example, note that
	\begin{align}
	\hat{\omega} & =\partial_{x_{1}}u_{2}-\partial_{x_{2}}u_{1}=-2\pi\ell e^{-4\pi^{2}\ell\nu t}\hat{\omega}_{0},\\
	\hat{\omega}_{0}\left(\mathbf{\hat{\mathbf{x}}}\right) & =\sum_{\left|\mathbf{k}\right|^{2}=\ell}a_{\mathbf{k}}\cos\left(2\pi\mathbf{k}\cdot\hat{\mathbf{x}}\right)+b_{\mathbf{k}}\sin\left(2\pi\mathbf{k}\cdot\hat{\mathbf{x}}\right).
	\end{align}
	As the steady part of the vorticity field solves the steady planar
	Euler equation, trajectories of $\hat{\mathbf{u}}(\hat{\mathbf{x}},t)$
	remain confined to the steady streamlines of $\hat{\mathbf{u}}_{0}\left(\mathbf{\hat{\mathbf{x}}}_{0}\right)$.
	Since these trajectories also conserve the vorticity $\hat{\omega}_{0}$
	of the inviscid limit of the flow, the change in vorticity $\hat{\omega}\left(\hat{\mathbf{x}},t\right)$
	along trajectories of $\hat{\mathbf{u}}(\hat{\mathbf{x}},t)$ can
	be written as
	\begin{equation}
	\delta\hat{\omega}\left(\hat{\mathbf{x}}_{0},t_{0},t_{1}\right)=-2\pi\ell\left(e^{-4\pi^{2}\ell\nu t_{1}}-e^{-4\pi^{2}\ell\nu t_{0}}\right)\hat{\omega}_{0}\left(\mathbf{\hat{\mathbf{x}}}_{0}\right).
	\end{equation}
	Therefore, level curves of the vorticity change along trajectories
	coincide with those of the inviscid vorticity $\hat{\omega}_{0}\left(\mathbf{\hat{\mathbf{x}}}_{0}\right)$,
	which are in turn just the streamlines of $\hat{\mathbf{u}}_{0}\left(\mathbf{\hat{\mathbf{x}}}_{0}\right)$.
	Finally, we have 
	\begin{equation}
	\Delta\hat{\omega}\left(\hat{\mathbf{x}},t\right)=8\pi^{3}\ell^{2}e^{-4\pi^{2}\ell\nu t}\hat{\omega}_{0}\left(\mathbf{\hat{\mathbf{x}}}\right),
	\end{equation}
	and hence the level curves of $\Delta\hat{\omega}\left(\hat{\mathbf{x}},t\right)$
	also coincide with those of $\hat{\mathbf{u}}_{0}\left(\mathbf{\hat{\mathbf{x}}}\right)$. 
	
	\emph{We conclude that both material and instantaneous active barriers
		to vorticity and linear momentum transport coincide with the streamlines
		of $\hat{\mathbf{u}}_{0}\left(\mathbf{\hat{\mathbf{x}}}_{0}\right)$.
	}In particular, we obtain the correct active barrier distributions
	that we inferred for our motivational 2D channel-flow example in Fig.
	\ref{fig: steady 2D channel flow example-0} (see \eqref{eq:horizontal shear jet}
	in Appendix A), which is part of the solution family \eqref{eq:2D flow family}.
	Importantly, we obtain the same frame-indifferent conclusion about
	active barriers from any finite-time (or even instantaneous) analysis
	of the velocity field \eqref{eq:2D flow family} .
\end{example}

\subsection{Directionally steady Beltrami flows}\label{sec:directionally steady Beltrami flows}

Virtually all explicitly known, unsteady solutions of the 3D incompressible
Navier\textendash Stokes equations satisfy the \emph{strong Beltrami
	property}
\begin{equation}
\boldsymbol{\mathbf{\omega}}(\mathbf{x},t)=k(t)\mathbf{u}(\mathbf{x},t)\label{eq:Beltrami property}
\end{equation}
for some scalar function $k(t)$ (see Majda \& Bertozzi 2002). By
definition, for any such incompressible strong Beltrami flow, we obtain
\begin{align}
\Delta\boldsymbol{\omega} & =\boldsymbol{\nabla}\left(\boldsymbol{\nabla}\cdot\boldsymbol{\omega}\right)-\boldsymbol{\nabla}\times\left(\boldsymbol{\nabla}\times\boldsymbol{\omega}\right)=-k^{3}\mathbf{u},\nonumber \\
\Delta\mathbf{u} & =\frac{1}{k}\Delta\boldsymbol{\omega}=-k^{2}\mathbf{u}.
\end{align}
Recall that if a steady Euler flow is non-Beltrami, then it is integrable
(Arnold \& Keshin 1998). Therefore, only velocity field satisfying
the Beltrami property can generate complicated particle dynamics in
steady, inviscid flows.

We call an unsteady strong Beltrami flow with velocity field $\mathbf{u}(\mathbf{x},t)$
a \emph{directionally steady Beltrami flow} if
\begin{equation}
\mathbf{u}(\mathbf{x},t)=\alpha(t)\mathbf{u}_{0}(\mathbf{x}),\quad\boldsymbol{\mathbf{\mathbf{\omega}}}\left(\mathbf{x},t\right)=\boldsymbol{\nabla}\times\mathbf{u}(\mathbf{x},t)=k(t)\alpha(t)\mathbf{u}_{0}\left(\mathbf{x}\right)\label{eq:separable Beltrami flow definition}
\end{equation}
hold for some continuously differentiable scalar function $\alpha(t)$.
Note that any steady strong Beltrami flow $\mathbf{u}_{0}(\mathbf{x})$
(which necessarily admits $k(t)\equiv k=const.)$ solves the steady
Euler equation and generates a directionally steady Beltrami solution
$\mathbf{u}(\mathbf{x},t)=\exp\left(-\nu k^{2}t\right)\mathbf{u}_{0}(\mathbf{x})$
for the unsteady Navier\textendash Stokes equation under conservative
forcing (Majda and Bertozzi 2002). 

For all directionally steady Beltrami flows, we obtain the following
simple result on active transport barriers:
\begin{thm}
	\label{thm: Separable Beltrami momentum and vorticity barriers }
	Both material and instantaneous active barriers to the diffusive transport
	of linear momentum and vorticity in directionally steady Beltrami
	flows coincide exactly with structurally stable, 2D invariant manifolds
	of the steady component $\mathbf{u}_{0}(\mathbf{x})$ of the velocity
	field. These in turn coincide with 2D invariant manifolds of $\mathbf{u}(\mathbf{x},t)$
	defined in \eqref{eq:separable ODE}. 
\end{thm}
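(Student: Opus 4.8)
The plan is to show that, in a directionally steady Beltrami flow, each of the four barrier vector fields in the statement --- the Lagrangian and the Eulerian barrier fields for linear momentum and for vorticity --- is a \emph{spatially constant} scalar multiple of the steady field $\mathbf{u}_0(\mathbf{x})$. Once this is established, the coincidence of streamlines, of $2$D invariant manifolds, and of their structural-stability type with those of $\mathbf{u}_0$ follows at once, since a nonzero constant rescaling of a steady vector field alters neither its orbits nor the hyperbolicity (or neutral stability) of its distinguished invariant sets; the nondegeneracy conditions needed for this are $k_0\neq0$ and $\bar\alpha\neq0$ below.

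First I would extract the algebraic consequences of the Beltrami property for the flow \eqref{eq:separable Beltrami flow definition}. Requiring $\boldsymbol{\omega}=\boldsymbol{\nabla}\times\mathbf{u}=k(t)\mathbf{u}$ with $\mathbf{u}=\alpha(t)\mathbf{u}_0$ forces $\boldsymbol{\nabla}\times\mathbf{u}_0=k_0\mathbf{u}_0$ for a \emph{constant} $k_0$, so that $\mathbf{u}_0$ is itself a steady strong Beltrami field and, by incompressibility, $\Delta\mathbf{u}_0=-\boldsymbol{\nabla}\times(\boldsymbol{\nabla}\times\mathbf{u}_0)=-k_0^2\mathbf{u}_0$. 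Hence $\Delta\mathbf{u}(\mathbf{x},t)=-k_0^2\alpha(t)\,\mathbf{u}_0(\mathbf{x})$ and $\Delta\boldsymbol{\omega}(\mathbf{x},t)=-k_0^3\alpha(t)\,\mathbf{u}_0(\mathbf{x})$. This already settles the Eulerian barrier equations \eqref{eq:incompressible NS Eulerian momentum barrier eq} and \eqref{eq:incompressible NS Eulerian vorticity barrier eq}: at any fixed instant $t$ their right-hand sides are $\mathbf{x}$-independent multiples of $\mathbf{u}_0(\mathbf{x})$, nonzero whenever $\alpha(t)\neq0$, hence generate the same oriented streamline foliation as $\dot{\mathbf{x}}=\mathbf{u}_0(\mathbf{x})$.

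For the Lagrangian barrier equations the key step is the pull-back identity $\left(\mathbf{F}_{t_0}^t\right)^*\mathbf{u}_0=\mathbf{u}_0$. I would obtain it from the equation of variations of $\dot{\mathbf{x}}=\alpha(t)\mathbf{u}_0(\mathbf{x})$: along any of its trajectories $\mathbf{x}(t)$ the vector $\boldsymbol{\xi}(t)=\mathbf{u}_0(\mathbf{x}(t))$ solves $\dot{\boldsymbol{\xi}}=\alpha(t)\boldsymbol{\nabla}\mathbf{u}_0(\mathbf{x}(t))\boldsymbol{\xi}$, because $\frac{d}{dt}\mathbf{u}_0(\mathbf{x}(t))=\alpha(t)\boldsymbol{\nabla}\mathbf{u}_0(\mathbf{x}(t))\mathbf{u}_0(\mathbf{x}(t))$; since $\boldsymbol{\nabla}\mathbf{F}_{t_0}^t(\mathbf{x}_0)$ is the fundamental matrix solution of this linear system, $\boldsymbol{\nabla}\mathbf{F}_{t_0}^t(\mathbf{x}_0)\,\mathbf{u}_0(\mathbf{x}_0)=\mathbf{u}_0\!\left(\mathbf{F}_{t_0}^t(\mathbf{x}_0)\right)$, which is exactly $\left(\mathbf{F}_{t_0}^t\right)^*\mathbf{u}_0=\mathbf{u}_0$ (morally: a vector field is invariant under its own flow, the orbits of $\alpha(t)\mathbf{u}_0$ being time-reparametrized orbits of $\mathbf{u}_0$). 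Feeding this into the Beltrami formulas above yields $\left(\mathbf{F}_{t_0}^t\right)^*\Delta\mathbf{u}=-k_0^2\alpha(t)\,\mathbf{u}_0$ and $\left(\mathbf{F}_{t_0}^t\right)^*\Delta\boldsymbol{\omega}=-k_0^3\alpha(t)\,\mathbf{u}_0$ at every $\mathbf{x}_0$; time-averaging merely multiplies each by the constant $\bar\alpha=\frac{1}{t_1-t_0}\int_{t_0}^{t_1}\alpha(t)\,dt$, so the Lagrangian barrier vector fields in \eqref{eq:incompressible NS Lagrangian momentum barrier eq} and \eqref{eq:incompressible NS Lagrangian vorticity barrier eq} equal $-\nu\rho k_0^2\bar\alpha\,\mathbf{u}_0(\mathbf{x}_0)$ and $-\nu k_0^3\bar\alpha\,\mathbf{u}_0(\mathbf{x}_0)$, respectively.

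It then remains to read off the geometry. As long as $\bar\alpha\neq0$ (and $\alpha(t)\neq0$ in the instantaneous case) all four barrier vector fields are nonzero constant multiples of $\mathbf{u}_0$, so their structurally stable $2$D invariant manifolds --- stable or unstable manifolds of hyperbolic fixed points and periodic orbits, or invariant tori --- are precisely those of $\mathbf{u}_0(\mathbf{x})$. Finally, the reparametrization $\tau(t)=\int_{t_0}^t\alpha(s)\,ds$ identifies $\mathbf{F}_{t_0}^t$ with the time-$\tau(t)$ map of $\frac{d\mathbf{x}}{d\tau}=\mathbf{u}_0(\mathbf{x})$, so any invariant manifold of $\mathbf{u}_0$ is mapped into itself by $\mathbf{F}_{t_0}^t$ and is therefore also a $2$D \emph{material} invariant manifold of $\mathbf{u}(\mathbf{x},t)$, completing the statement. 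I expect the main subtlety to be the self-invariance identity $\left(\mathbf{F}_{t_0}^t\right)^*\mathbf{u}_0=\mathbf{u}_0$ together with confirming that the scalar prefactors do not vanish, so that structural stability is genuinely inherited; everything else is routine bookkeeping.
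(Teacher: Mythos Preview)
Your proposal is correct and follows essentially the same route as the paper's own proof in Appendix~D: reduce $\Delta\mathbf{u}$ and $\Delta\boldsymbol{\omega}$ to scalar multiples of $\mathbf{u}_0$ via the Beltrami identity, establish the pull-back invariance $(\mathbf{F}_{t_0}^t)^*\mathbf{u}_0=\mathbf{u}_0$ from the equation of variations, and conclude that all four barrier fields are constant multiples of $\mathbf{u}_0$. Your observation that $k(t)\equiv k_0$ must in fact be constant, and your explicit flagging of the nondegeneracy conditions $k_0\neq0$, $\bar\alpha\neq0$ needed for the rescaling to preserve the invariant-manifold structure, are both sharper than the paper's own treatment.
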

\begin{proof}
	See Appendix D.
\end{proof}
Theorem \ref{thm: Separable Beltrami momentum and vorticity barriers }
shows that invariant manifolds for the Lagrangian particle motion
in directionally steady Beltrami flows coincide with material and
instantaneous active barriers to linear momentum and vorticity transport.
This agrees with one's intuition: observed mass-transport barriers
in these flows are expected to coincide with barriers to vorticity
and momentum transport, given that momentum and vorticity are scalar
multiples of each other. Remarkably, as in the case of 2D flows analyzed
in the previous section, the exact barriers emerge from our analysis
independently of the choice of the finite-time interval $[t_{0},t_{1}]$,
including the case of instantaneous extraction with $t_{0}=t_{1}$. 
\begin{rem}
	\label{rem: remark on tori in Beltrami flows}In view of Theorem \ref{thm: Separable Beltrami momentum and vorticity barriers },
	when viewed as transport barriers to momentum and vorticity, both
	Lagrangian and Eulerian coherent vortex boundaries in directionally
	steady Beltrami flows coincide with outermost members of nested families
	of invariant tori identified form purely advective mixing studies
	(see, e.g., Dombre et al. 1986 and Haller 2001). This is in line with
	the expectation we stated earlier that the outermost members of a
	family of non-filamenting, closed material surfaces will also be outermost
	barriers to diffusive transport.
\end{rem}
\begin{example}
	\label{ex: time-dependent ABC flow} Examples of directionally steady
	Beltrami flows include the Navier-Stokes flow family (Ethier \& Steinman
	1994)
	\begin{equation}
	\mathbf{u}(\mathbf{x},t)=e^{-\nu d^{2}t}\mathbf{u}_{0}(\mathbf{x}),\qquad\mathbf{u}_{0}(\mathbf{x})=-a\left(\begin{array}{c}
	e^{ax_{1}}\sin\left(ax_{2}\pm dx_{3})+e^{ax_{3}}\cos\left(ax_{1}\pm dx_{2}\right)\right)\\
	e^{ax_{2}}\sin\left(ax_{3}\pm dx_{1})+e^{ax_{1}}\cos\left(ax_{2}\pm dx_{3}\right)\right)\\
	e^{ax_{3}}\sin\left(ax_{1}\pm dx_{2})+e^{ax_{2}}\cos\left(ax_{3}\pm dx_{1}\right)\right)
	\end{array}\right),
	\end{equation}
	and the viscous, unsteady version of the classic ABC flow $\mathbf{u}_{0}(\mathbf{x})$
	(Dombre et al. 1986), given by
	\begin{equation}
	\mathbf{u}(\mathbf{x},t)=e^{-\nu t}\mathbf{u}_{0}(\mathbf{x}),\qquad\mathbf{u}_{0}(\mathbf{x})=\left(\begin{array}{c}
	A\sin x_{3}+C\cos x_{2}\\
	B\sin x_{1}+A\cos x_{3}\\
	C\sin x_{2}+B\cos x_{1}
	\end{array}\right).\label{eq:time-dependent ABC flow}
	\end{equation}
	All lengths in these examples are non-dimensional. Further examples
	of 3D, unsteady but directionally steady Beltrami solutions are derived
	by Barbato, Berselli \& Grisanti (2007) and Antuono (2020).

For all these flows, Theorem \ref{thm: Separable Beltrami momentum and vorticity barriers }
guarantees that all material and instantaneous active barriers to
diffusive momentum and vorticity transport coincide with structurally
stable, 2D invariant manifolds of the flow generated by the steady
velocity field $\mathbf{u}_{0}(\mathbf{x})$. Such manifolds can be
captured via their intersections with Poincar\'e sections, with these
intersections appearing as invariant curves of the associated Poincar\'e
map, as first illustrated by Dombre et al. (1986) for one cross-section
of the ABC flow. A more complete set of Poincar\'e maps along three
orthogonal planes is shown in the right subplot Fig. \ref{fig:Poincare map ABC flow},
which reveals several families of 2D invariant tori, appearing as
spatially periodic cylinders.

As discussed in Remark \ref{rem: remark on tori in Beltrami flows},
these torus families form objectively defined coherent vortices, with
each torus acting as an internal barriers to both momentum- and vorticity-transport
within the vortex. Outermost members of these torus families provide
objective, active-transport-based coherent vortex boundaries. By their
invariance under the flow map, they remain perfectly coherent under
advection. For comparison, we also show in Fig. \ref{fig:Poincare map ABC flow}
other common Eulerian diagnostics applied to this flow: sectional
streamlines (computed from velocities projected onto the three faces
of the cube at time $t=0$ ); vorticity levels for $\mathbf{u}(\mathbf{x},t)$
at $t=0$; and levels of the parameter $Q=\left|\boldsymbol{W}\right|^{2}-\left|\mathbf{S}\right|^{2}$
at $t=0$, with the spin tensor $\mathbf{W}$ and the rate-of-strain
tensor $\mathbf{S}$ defined as
\begin{equation}
\mathbf{W}=\frac{1}{2}\left[\boldsymbol{\nabla}\mathbf{u}-\left(\boldsymbol{\nabla}\mathbf{u}\right)^{T}\right],\quad\mathbf{S}=\frac{1}{2}\left[\boldsymbol{\nabla}\mathbf{u}+\left(\boldsymbol{\nabla}\mathbf{u}\right)^{T}\right].\label{eq:spin and rate of strain}
\end{equation}
The $Q>0$ region is often used to define vortices, and hence the
white level sets are considered vortex boundaries by the\textbf{ $Q$}-criterion
of Hunt et al. (1988). The structures appearing in the latter three
plots change under an observer change and do not remain invariant
under advection by the flow map. 
\begin{figure}
	\centering{}\includegraphics[width=1\textwidth]{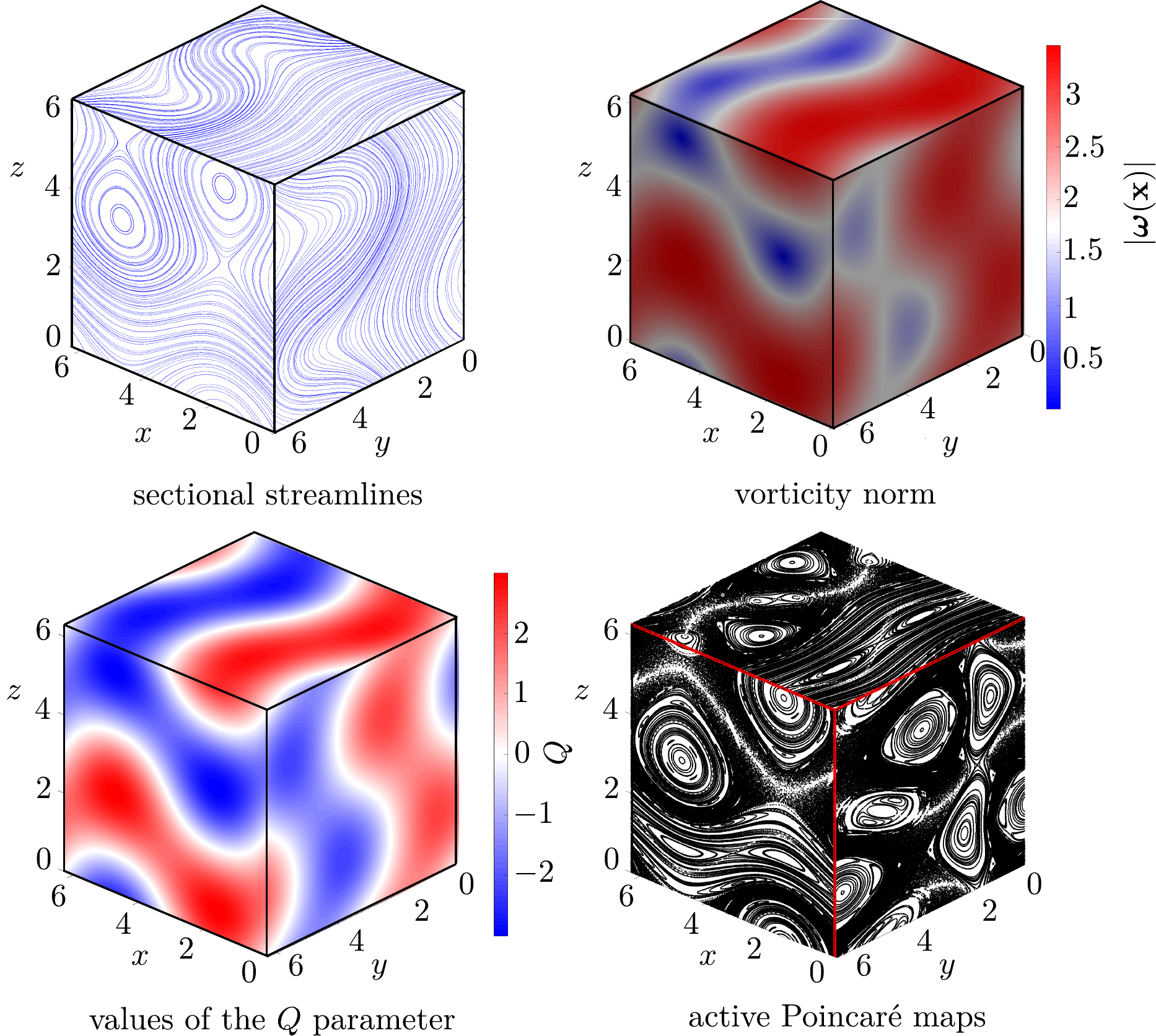}\caption{Three nonobjective diagnostics (sectional streamline plots, the vorticity
		norm and the $Q$-parameter) for the unsteady ABC flow \eqref{eq:time-dependent ABC flow}
		with $A=\sqrt{3}$, $B=\sqrt{2}$ and $C=1$, at time $t=0$. All
		three plots remain the same for all times, because the velocity field
		is directionally steady. Also shown are three objective, active Poincar\'e
		maps computed for the associated barrier equations. The Lagrangian
		and Eulerian barrier equations for this flow are given by \textbf{$\mathbf{x}_{0}^{\prime}=\mathbf{u}_{0}\left(\mathbf{x}_{0}\right)$}
		by Theorem \ref{thm: Separable Beltrami momentum and vorticity barriers },
		both for momentum and vorticity. Black dots on the active Poincar\'e
		sections indicate repeated return locations of barrier trajectories
		launched from the same section. Intersections of 2D, toroidal transport
		barriers with the three Poincar\'e sections are visible as invariant
		curves of these Poincar\'e maps. Outermost members of these torus families
		define objective coherent vortex boundaries. \label{fig:Poincare map ABC flow}}
\end{figure}

Further studies revealing the same invariant manifolds in the steady
ABC flow using finite-time Lyapunov exponents (FTLE) and the polar
rotation angle (PRA) were given by Haller (2001) and Farazmand \&
Haller (2016), each emphasizing different classes of barriers from
the complete collection revealed in Fig. \ref{fig:Poincare map ABC flow}.
The FTLE and the PRA are generally usable structure detection tools
along any cross section of an unsteady flow, whereas Poincar\'e maps
are only defined for trajectories returning to the same cross section
of a steady or time-periodic flow. In the next section, we will also
show the passive FTLE and PRA plots computed for the ABC flow \eqref{eq:time-dependent ABC flow},
as well as active versions of the FTLE and PRA applied to the barrier
equations of the ABC flow over the same time interval.
\end{example}

\section{Practical implementation of active barrier identification \label{sec:Practical-implementation-on}}

Here we discuss the computation of the barrier equations for momentum
and vorticity from velocity data sets. In addition, we introduce dynamically
active versions of three simple LCS techniques that can be used to extract
active transport barrier surfaces. While these LCS diagnostics enable a quick visualization of active barriers in an objective fashion, the more advanced LCS methods we cited in the Introduction are also directly applicable to the barrier equations.

\subsection{Computation from highly resolved numerical data}

All applications of our main results in Theorems \ref{thm: 3D incompressible N-S,  linear momentum barrier eq }-\ref{thm: 3D incompressible N-S,  vorticity barrier eq}
require the analysis of the associated 3D autonomous, divergence-free
dynamical systems that depend on the Laplacian of \textbf{$\mathbf{u}(\mathbf{x},t)$
}for momentum-transport barriers, or on the Laplacian of $\boldsymbol{\omega}\left(\mathbf{x},t\right)$
for vorticity-transport barriers. In direct numerical simulations
(DNS) of the Navier\textendash Stokes equation, the required Laplacians
can be computed spectrally with high accuracy,
as our numerical results in Section \ref{subsec:channelflow}
will illustrate. With these Laplacians at hand, one proceeds to find
invariant manifolds of the barrier equations in Theorems \ref{thm: 3D incompressible N-S,  linear momentum barrier eq }-\ref{thm: 3D incompressible N-S,  vorticity barrier eq},
which invariably involves computing trajectories of these equations.
In generating these trajectories numerically, it is usually helpful
to omit the (small) viscosity $\nu$ from the right-hand sides
of the barrier equations to speed up the simulation. This omission
of $\nu$ is equivalent to a rescaling of the time-like variable $s$
in the barrier equations, which does not alter the trajectories of
these autonomous differential equations.

For 2D incompressible Navier\textendash Stokes flows, Theorems \ref{thm:2D momentum barriers}-\ref{thm:2D vorticity barriers}
show the relevant barrier equations to be computed. The right-hand-sides
of these equations are autonomous Hamiltonian vector fields whose
trajectories coincide with the level curves of the corresponding Hamiltonians.
Strictly speaking, therefore, the numerical solution of these barrier
equations can be avoided by simply plotting the level curves of their
Hamiltonians, which can be computed by finite-differencing the velocity
field (but see also Remark \ref{rem:used active LCS diagnostics instead of level curves} in section \ref{subsec:active barriers in 2D-Navier=002013Stokes-flows}).

\subsection{Computation from experimental or lower-resolved numerical data\label{subsec:Computation-from-experimental}}

Taking second and third spatial derivatives of a velocity field obtained
from an already finalized numerical simulation or experiment is challenging.
An alternative is to work with the original material derivatives arising
in our definition of active transport, rather than with the Laplacians
of the velocity and the vorticity. More specifically, if we let $\mathbf{a}(\mathbf{x},t)=\frac{D\mathbf{u}}{Dt}(\mathbf{x},t)$
denote the Lagrangian particle acceleration along fluid trajectories,
then using the general momentum equation \eqref{eq:main continuum eq. of motion},
the active barrier equations \eqref{eq:NS Lagrangian momentum barrier eq}
and \eqref{eq:NS Eulerian momentum barrier eq} for the linear momentum
can be rewritten as
\begin{align}
\mathbf{x}_{0}^{\prime} & =\overline{\det\nabla\mathbf{F}_{t_{0}}^{t}\left(\mathbf{F}_{t_{0}}^{t}\right)^{*}\left[\rho\mathbf{a}+\boldsymbol{\nabla}p-\mathbf{q}\right]},\label{eq:NS Lagrangian momentum barrier eq-1}\\
\mathbf{x}^{\prime} & =\rho\mathbf{a}+\boldsymbol{\nabla}p-\mathbf{q}.\label{eq:NS Euerian momentum barrier eq-1}
\end{align}
These equations involve the Lagrangian acceleration, $\mathbf{a}(\mathbf{x},t)$,
which can be obtained from high-resolution numerical or experimental
data via the temporal differentiation of the velocity vector along
trajectories. 

Similarly, the most general active barrier equations \eqref{eq:NS Lagrangian vorticity barrier eq}-\eqref{eq:NS Eulerian vorticity barrier eq}
for vorticity can be rewritten as 
\begin{align}
\mathbf{x}_{0}^{\prime} & =\overline{\det\nabla\mathbf{F}_{t_{0}}^{t}\left(\mathbf{F}_{t_{0}}^{t}\right)^{*}\boldsymbol{\nabla}\times\left[\mathbf{a}+\frac{1}{\rho}\left(\boldsymbol{\nabla}p-\mathbf{q}\right)\right]},\label{eq:NS Lagrangian vorticity barrier eq-11}\\
\mathbf{x}^{\prime} & =\boldsymbol{\nabla}\times\left[\mathbf{a}+\frac{1}{\rho}\left(\boldsymbol{\nabla}p-\mathbf{q}\right)\right].\label{eq:NS Eulerian vorticity barrier eq-11}
\end{align}
In particular, for incompressible, constant density, Newtonian fluids
subject only to potential body forces, the material and instantaneous
barrier equations for vorticity in \eqref{eq:NS Lagrangian vorticity barrier eq-11}-\eqref{eq:NS Eulerian vorticity barrier eq-11}
simplify to
\begin{align}
\mathbf{x}_{0}^{\prime} & =\overline{\left(\mathbf{F}_{t_{0}}^{t}\right)^{*}\boldsymbol{\nabla}\times\mathbf{a}},\label{eq:NS Lagrangian vorticity barrier eq-1}\\
\mathbf{x}^{\prime} & =\boldsymbol{\nabla}\times\mathbf{a},\label{eq:NS Eulerian vorticity barrier eq-1-1}
\end{align}
given that $\boldsymbol{\nabla}\times\left[\frac{1}{\rho}\left(\boldsymbol{\nabla}p-\mathbf{q}\right)\right]=\frac{1}{\rho}\boldsymbol{\nabla}\times\left[\boldsymbol{\nabla}p-\mathbf{q}\right]\equiv\mathbf{0}$ holds for such flows.
\subsection{Passive vs. active Poincar\'e maps}

Passive Poincar\'e maps for 3D steady flows map initial conditions of
trajectories launched from a selected 2D section to their first return
to the section, if such a return exists. We refer to a Poincar\'e map computed
for the 3D steady barrier equations \eqref{eq:barrier equation} or
\eqref{eq:Eulerian barrier equation} as \emph{active Poincar\'e map
	(}see Fig. \ref{fig:Poincare map ABC flow} for an example). This
two-dimensional mapping generally does not preserve the standard 2D
area, but preserves a general area form, which makes the active Poincar\'e
map a 2D symplectic map (Meiss 1992). One-dimensional invariant curves
of 2D symplectic maps satisfy the only available formal definition
of advective transport barriers by MacKay, Meiss \& Percival (1984),
as we noted in the Introduction. Structurally stable invariant curves
of 2D symplectic maps include stable and unstable manifolds of hyperbolic
fixed points and Kolmogorov\textendash Arnold-Moser (KAM) curves,
i.e., nested families of closed curves satisfying non-resonance and
twist-conditions (Arnold 1978). 

In contrast to active Poincar\'e maps, the mapping relating subsequent
returns of trajectories to a selected section in the general unsteady
velocity field $\mathbf{u}(\mathbf{x},t$) is not well-defined as
a single Poincar\'e map. Rather, this map will be different for different
initial times $t_{0}$. Therefore, passive Poincar\'e maps are generally
inapplicable to LCS detection in $\mathbf{u}(\mathbf{x},t)$, whereas
active Poincar\'e maps are well-defined on barrier-equation trajectories
that return to a cross section. In case they do not, the active versions of the FTLE and  PRA fields introduced next provide alternative tools to
uncover structurally stable invariant manifolds in the the barrier
equations.

\subsection{Passive FTLE vs. active FTLE (aFTLE)\label{subsec:Passive-FTLE-vs.-aFTLE}}

We fix a time interval $[t_{0},t_{1}]$ over which we would like to
identify LCSs as coherent material surfaces in the advective transport
induced by the unsteady velocity field $\mathbf{u}(\mathbf{x},t)$.
With the notation of section \ref{sec:Set-up}, the right Cauchy\textendash Green
strain tensor $\mathbf{C}_{t_{0}}^{t_{1}}\left(\mathbf{x}_{0}\right)$
is defined as 
\begin{equation}
\mathbf{C}_{t_{0}}^{t_{1}}\left(\mathbf{x}_{0}\right):=\left[\boldsymbol{\nabla}\mathbf{F}_{t_{0}}^{t_{1}}\left(\mathbf{x}_{0}\right)\right]^{T}\boldsymbol{\nabla}\mathbf{F}_{t_{0}}^{t_{1}}\left(\mathbf{x}_{0}\right),\label{eq:CG tensor}
\end{equation}
with the superscript $T$ referring to the transpose. Then, if $\lambda_{\mathrm{max}}\left(\mathbf{C}_{t_{0}}^{t_{1}}\right)$
denotes the maximal eigenvalues of the symmetric, positive definite
tensor $\mathbf{C}_{t_{0}}^{t_{1}}$, then the (passive) FTLE field
of $\mathbf{u}(\mathbf{x},t)$ over the $\left[t_{0},t_{1}\right]$
time interval is defined as
\begin{equation}
\mathrm{FTLE}_{t_{0}}^{t_{1}}(\mathbf{x}_{0})=\frac{1}{2\left(t_{1}-t_{0}\right)}\log\lambda_{\mathrm{max}}\left(\mathbf{C}_{t_{0}}^{t_{1}}\left(\mathbf{x}_{0}\right)\right).\label{eq:passive FTLE}
\end{equation}

Two-dimensional ridges of $\mathrm{FTLE}_{t_{0}}^{t_{1}}(\mathbf{x}_{0})$
are quick indicators of the time $t_{0}$ locations of hyperbolic
LCS. They signal locally most repelling material surfaces when $t_{1}>t_{0}$
and locally most attracting material surfaces when $t_{1}<t_{0}$.
Valleys of $\mathrm{FTLE}_{t_{0}}^{t_{1}}(\mathbf{x}_{0})$ tend to
indicate elliptic (vortical) LCSs, whereas trenches of $\mathrm{FTLE}_{t_{0}}^{t_{1}}(\mathbf{x}_{0})$
signal parabolic (jet-type) LCSs. The minimal and maximal value of
$t_{0}$ and $t_{1}$ are governed by the length of the available
data and the scales relative to which we wish to determine the
LCSs in the flow. The flow-map gradient involved in the definition
of $\lambda_{\mathrm{max}}\left(\mathbf{C}_{t_{0}}^{t_{1}}\left(\mathbf{x}_{0}\right)\right)$
can be computed by finite-differencing a set of trajectories, launched
from a regular grid of initial conditions, with respect to those initial
conditions. The $\mathrm{FTLE}_{t_{0}}^{t_{1}}(\mathbf{x}_{0})$ is
a simple but objective LCS diagnostic, with its strengths and limitations
reviewed in Haller (2015). 

For $t_{1}=t_{0}\equiv t$, the instantaneous
of limit of the FTLE field is the maximal rate-of-strain eigenvalue
\begin{equation}
\mathrm{FTLE}_{t}^{t}(\mathbf{x})=\lambda_{\mathrm{max}}\left(\mathbf{S}(\mathbf{x},t)\right),\label{eq:instantaneous limit of passive FTLE}
\end{equation}
with the rate-of-strain tensor $\mathbf{S}(\mathbf{x},t)$ defined
in \eqref{eq:spin and rate of strain}, as noted by Serra \& Haller (2016) and Nolan, Serra, \& Ross (2020).  This eigenvalue field can,
in principle, be used to detect objective Eulerian coherent structures
(OECS) as instantaneous limits of LCS. In practice, the field $\mathrm{FTLE}_{t}^{t}(\mathbf{x})$
often provides insufficient spatial detail, but the eigenvector field
of $\mathbf{S}(\mathbf{x},t)$ can be used to define and extract OECS
(see Serra \& Haller 2016).

In contrast to passive FTLE, by \emph{active }FTLE (aFTLE) we mean here the
implementation of the  FTLE diagnostic on the steady material
barrier equation \eqref{eq:barrier equation}, including its steady
instantaneous version \eqref{eq:Eulerian barrier equation}. We again
select a physical time interval $[t_{0},t_{1}]$ over which we would
like to locate barriers to the active transport of the vector field\textbf{
}$\mathbf{f}(\mathbf{x},t)$ in the velocity field $\mathbf{u}(\mathbf{x},t)$.
Let $\tilde{\mathbf{x}}_{0}(s;0,\mathbf{x}_{0})$ denote the trajectory
of the barrier ODE \eqref{eq:barrier equation} starting at the dummy
time $s=0$ from the initial location $\mathbf{x}_{0}$. The corresponding
autonomous flow map for this barrier ODE will be denoted by the \emph{active
	flow map} $\boldsymbol{\mathcal{F}}_{t_{0},t_{1}}^{s}\colon\mathbf{x}_{0}\mapsto\tilde{\mathbf{x}}_{0}(s;0,\mathbf{x}_{0}).$
The associated \emph{active Cauchy\textendash Green strain tensor}
for the barrier equation \eqref{eq:barrier equation} can then be
defined as 
\begin{equation}
\boldsymbol{\mathcal{C}}_{t_{0},t_{1}}^{s}\left(\mathbf{x}_{0}\right):=\left[\boldsymbol{\nabla}\boldsymbol{\mathcal{F}}_{t_{0},t_{1}}^{s}\left(\mathbf{x}_{0}\right)\right]^{T}\boldsymbol{\nabla}\boldsymbol{\mathcal{F}}_{t_{0},t_{1}}^{s}\left(\mathbf{x}_{0}\right).\label{eq:active CG tensor}
\end{equation}
Again, if $\lambda_{\mathrm{max}}\left(\boldsymbol{\mathcal{C}}_{t_{0},t_{1}}^{s}\right)$
denotes the maximal eigenvalue of the symmetric, positive definite
tensor $\boldsymbol{\mathcal{C}}_{t_{0},t_{1}}^{s}$, then the aFTLE
field of $\mathbf{u}(\mathbf{x},t)$ over the $\left[t_{0},t_{1}\right]$
time interval, with respect to the vector field $\mathbf{f}(\mathbf{x},t)$,
is defined as
\begin{equation}
\mathrm{aFTLE}_{t_{0},t_{1}}^{s}(\mathbf{x}_{0};\mathbf{f})=\frac{1}{2s}\log\lambda_{\mathrm{max}}\left(\boldsymbol{\mathcal{C}}_{t_{0},t_{1}}^{s}\left(\mathbf{x}_{0}\right)\right).\label{eq:aFTLE -- Lagrangian}
\end{equation}
Here the time-like parameter $s$ governs the level of accuracy and
spatial resolution in the visualization of active transport barriers.
The only limitation to the choice of $s$ is that the trajectories
of the barrier equation \eqref{eq:barrier equation} may ultimately
leave the spatial domain $U$ over which the barrier equation is known.
This is, however, unrelated to the physical time that the trajectories
of $\mathbf{u}(\mathbf{x},t)$ spend in the domain $U$. 

For instance, in our 2D turbulence simulation to be analyzed in section
\ref{subsec:Two-dimensional-turbulence}, the
maximal possible spatial detail for LCS from $\mathrm{FTLE}_{t_{0}}^{t_{1}}(\mathbf{x}_{0})$
is limited by the length of the time interval $[t_{0},t_{1}]=[0,50],$
given that this is the temporal length of the available data set.
In contrast, on the same data set, $\mathrm{aFTLE}_{t_{0},t_{1}}^{s}(\mathbf{x}_{0};\mathbf{f})$
can be computed for arbitrarily large $\left|s\right|,$ because the
barrier vector field is known globally for $U=\mathbb{R}^{2}$. Similarly,
in our 3D turbulent channel flow example in section \ref{subsec:channelflow},
trajectories of the barrier equation tend to stay in the finite channel
domain for much longer (non-dimensional) dummy times than the non-dimensional
residence time of fluid trajectories in the same channel. 

As a consequence,
aFTLE has the potential to provide much finer spatial detail for active
barriers than one is able to obtain for LCSs in the same data set
from the passive FTLE. Figure \ref{fig:ABC flow passive vs active FTLE and PRA} shows 
this substantial refinement obtained from the vorticity-based aFTLE
relative to the passive FTLE computed over the same time interval
$[t_{0},t_{1}]=[0,5]$ for the unsteady ABC flow \eqref{eq:time-dependent ABC flow}.
(For this particular flow, the linear-momentum-based aFTLE and aPRA
would give identical results by Theorem \ref{thm: Separable Beltrami momentum and vorticity barriers }.)
In addition, aFTLE is always guaranteed to converge under increasing
$s$, as illustrated in Fig. \ref{fig:ABC flow passive vs active FTLE and PRA},
while the convergence of $\mathrm{FTLE}_{t_{0}}^{t_{1}}(\mathbf{x}_{0})$
is generally not guaranteed in an unsteady flow with time-varying
structures.

\begin{figure}
	\centering{}\includegraphics[width=1\textwidth]{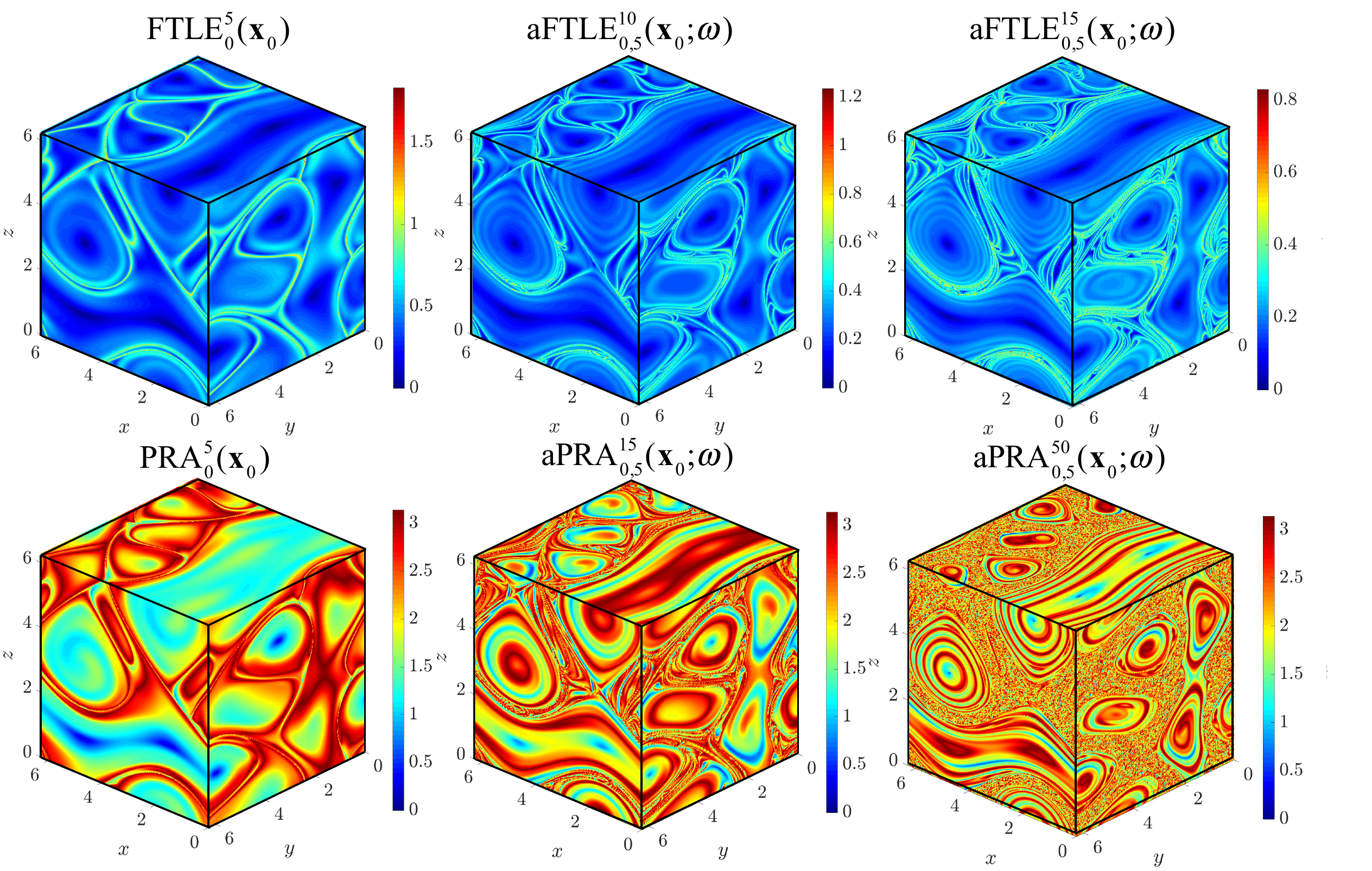}\caption{Passive and vorticity-based active versions of the FTLE and PRA diagnostics
		for the unsteady ABC flow \eqref{eq:time-dependent ABC flow}, computed
		over the same time interval ($[t_{0},t_{1}]=[0,5]$) and with the
		same spatial resolution ($300^{3}$ grid points in the spatial domain
		$[0,2\pi]^{3}$). Two values for the barrier-time $s$ were selected
		to illustrate the increasing spatial resolution and convergence of
		hyperbolic barriers by the aFTLE and of elliptic barriers by the aPRA
		under increasing $s$-times. With the exception of the passive PRA,
		all diagnostics shown here are objective. \label{fig:ABC flow passive vs active FTLE and PRA}}
\end{figure}

The $t_{1}=t_{0}\equiv t$ limit of the aFTLE field in eq. \eqref{eq:aFTLE -- Lagrangian}
is
\begin{equation}
\mathrm{aFTLE}_{t,t}^{s}(\mathbf{x};\mathbf{f})=\frac{1}{2s}\log\lambda_{\mathrm{max}}\left(\boldsymbol{\mathcal{C}}_{t,t}^{s}\left(\mathbf{x}\right)\right).\label{eq:aFTLE field -- Eulerian}
\end{equation}
Here $\boldsymbol{\mathcal{C}}_{t,t}^{s}\left(\mathbf{x}\right)$
is simply computed from the autonomous flow map $\boldsymbol{\mathcal{F}}_{t,t}^{s}(\mathbf{x})$
of the instantaneous barrier equation \eqref{eq:Eulerian barrier equation},
with the instantaneous time $t$ playing the role of a constant parameter
in the computation. Again, the time-like evolutionary variable $s$
in this computation can be arbitrarily large in norm, as long as the
trajectories generated by the barrier flow map $\boldsymbol{\mathcal{F}}_{t,t}^{s}(\mathbf{x})$
stay in the domain $U$ over which the barrier vector field $\mathbf{b}_{t}^{t}(\mathbf{x})$
is known. This guarantees convergence and higher resolution in the
detection of instantaneous objective barriers from $\mathrm{aFTLE}_{t,t}^{s}(\mathbf{x};\mathbf{f})$
when compared with $\mathrm{FTLE}_{t}^{t}(\mathbf{x})$. The only
practical limitation to resolving the details of active barriers via
aFTLE is the spatial resolution of the available data.

\subsection{Passive PRA vs. active PRA (aPRA)\label{subsec:Passive-PRA-vs.-aPRA}}

By the polar decomposition theorem (Gurtin, Fried \& Anand 2013),
the deformation gradient $\boldsymbol{\nabla}\mathbf{F}_{t_{0}}^{t_{1}}(\mathbf{x}_{0})$
can be uniquely decomposed as
\begin{equation}
\boldsymbol{\nabla}\mathbf{F}_{t_{0}}^{t_{1}}=\mathbf{R}_{t_{0}}^{t_{1}}\mathbf{U}_{t_{0}}^{t_{1}},\label{eq:polar decomposition}
\end{equation}
with the proper orthogonal \emph{rotation tensor }$\mathbf{R}_{t_{0}}^{t_{1}},$
the symmetric and the positive definite \emph{right stretch tensor}
$\mathbf{U}_{t_{0}}^{t_{1}}$ . The decomposition \eqref{eq:polar decomposition}
means that a general deformation can locally always be viewed as triaxial
stretching and compression followed by a rigid-body rotation. One
can verify by direct substitution into \eqref{eq:polar decomposition}
that $\mathbf{R}_{t_{0}}^{t_{1}}$ and $\mathbf{U}_{t_{0}}^{t_{1}}$
must be of the form
\begin{equation}
\mathbf{U}_{t_{0}}^{t_{1}}=\left[\mathbf{C}_{t_{0}}^{t_{1}}\right]^{1/2},\qquad\mathbf{R}_{t_{0}}^{t_{1}}=\boldsymbol{\nabla}\mathbf{F}_{t_{0}}^{t_{1}}\left[\mathbf{U}_{t_{0}}^{t_{1}}\right]^{-1},\label{eq:polar decomposition tensors}
\end{equation}
with $\mathbf{C}_{t_{0}}^{t}$ defined in \eqref{eq:CG tensor}. The
first equation in \eqref{eq:polar decomposition tensors} shows that
$\mathbf{U}_{t_{0}}^{t}$ can be computed using the singular-value-decomposition
of $\mathbf{C}_{t_{0}}^{t}$. With $\mathbf{U}_{t_{0}}^{t_{1}}$ at
hand, one can compute the rotation tensor $\mathbf{R}_{t_{0}}^{t_{1}}$
from the second equation of \eqref{eq:polar decomposition tensors}. 

Farazmand \& Haller (2016) show that $\mathbf{R}_{t_{0}}^{t_{1}}(\mathbf{x}_{0})$
rotates material elements around an axis of rotation by the \emph{polar
	rotation angle} (PRA) satisfying
\begin{equation}
\mathrm{PRA}_{t_{0}}^{t_{1}}(\mathbf{x}_{0})=\cos^{-1}\left[\frac{1}{2}\left(\mathrm{{tr}\,}\mathbf{R}_{t_{0}}^{t_{1}}(\mathbf{x}_{0})-1\right)\right]=\cos^{-1}\left[\frac{1}{2}\left(\sum_{i=1}^{3}\left\langle \boldsymbol{\xi}_{i}(\mathbf{x}_{0}),\boldsymbol{\eta}_{i}(\mathbf{x}_{0})\right\rangle -1\right)\right],\label{eq:cos theta}
\end{equation}
with $\boldsymbol{\xi}_{i}(\mathbf{x}_{0})$ and $\boldsymbol{\eta}_{i}(\mathbf{x}_{0})$
denoting the right and left singular vectors of $\boldsymbol{\nabla}\mathbf{F}_{t_{0}}^{t}(\mathbf{x}_{0})$.
For 2D flows viewed as 3D flows with a symmetry, the intermediate
eigenvalue of $\mathbf{C}_{t_{0}}^{t}$ is always one, which simplifies
$\mathrm{PRA}_{t_{0}}^{t_{1}}(\mathbf{x}_{0})$ to
\begin{equation}
\mathrm{PRA}_{t_{0}}^{t_{1}}(\mathbf{x}_{0})=\cos^{-1}\left\langle \boldsymbol{\xi}_{1}(\mathbf{x}_{0}),\boldsymbol{\eta}_{1}(\mathbf{x}_{0})\right\rangle =\cos^{-1}\left\langle \boldsymbol{\xi}_{2}(\mathbf{x}_{0}),\boldsymbol{\eta}_{2}(\mathbf{x}_{0})\right\rangle ,\quad\mathbf{x}_{0}\in\mathbb{R}^{2}.\label{eq:cos theta 2D}
\end{equation}

Farazmand \& Haller (2016) propose $\mathrm{PRA}_{t_{0}}^{t_{1}}(\mathbf{x}_{0})$
as a diagnostic tool for elliptic (rotational) LCS. They find that
nested circular or toroidal level sets of $\mathrm{PRA}_{t_{0}}^{t_{1}}(\mathbf{x}_{0})$
indeed highlight elliptic LCS significantly sharper than FTLE valleys
do. They also show, however, that these level sets are only objective
for 2D flows. Similarly to FTLE calculations for $\mathbf{u}(\mathbf{x},t)$,
the spatial scales resolved by the passive PRA in 2D flows are limited
by the length of the time interval $[t_{0},t_{1}].$ For 3D flows,
an additional limitation of the PRA is the non-objectivity of its
level surfaces. The instantaneous limit $t_{0}=t_{1}\equiv t$ of
the PRA gives $\mathrm{PRA}_{t}^{t}(\mathbf{x})\equiv0$, and hence
this diagnostic is unable to detect instantaneous limits of elliptic
OECS. 

In contrast, using the \emph{active rotation tensor}
\begin{equation}
\boldsymbol{\mathcal{R}}_{t_{0,}t_{1}}^{s}=\boldsymbol{\nabla}\boldsymbol{\mathcal{F}}_{t_{0},t_{1}}^{s}\left[\boldsymbol{\mathcal{C}}_{t_{0},t_{1}}^{s}\right]^{-1/2},\label{eq:polar decomposition tensors-1}
\end{equation}
the corresponding \emph{active }PRA (aPRA) is obtained in 3D as 
\begin{equation}
\mathrm{aPRA}_{t_{0,}t_{1}}^{s}(\mathbf{x}_{0};\mathbf{f})=\cos^{-1}\left[\frac{1}{2}\left(\mathrm{{tr}\,}\boldsymbol{\mathcal{R}}_{t_{0,}t_{1}}^{s}(\mathbf{x}_{0})-1\right)\right]=\cos^{-1}\left[\frac{1}{2}\left(\sum_{i=1}^{3}\left\langle \boldsymbol{\xi}_{i}^{a}(\mathbf{x}_{0}),\boldsymbol{\eta}_{i}^{a}(\mathbf{x}_{0})\right\rangle -1\right)\right],\label{eq:cos theta-1}
\end{equation}
with $\boldsymbol{\xi}_{i}^{a}(\mathbf{x}_{0})$ and $\boldsymbol{\eta}_{i}^{a}(\mathbf{x}_{0})$
denoting the right and left singular vectors of the active deformation
gradient $\boldsymbol{\nabla}\boldsymbol{\mathcal{F}}_{t_{0},t_{1}}^{s}$.
For 2D flows, the corresponding formula is
\begin{equation}
\mathrm{aPRA}_{t_{0,}t_{1}}^{s}(\mathbf{x}_{0};\mathbf{f})=\cos^{-1}\left\langle \boldsymbol{\xi}_{1}^{a}(\mathbf{x}_{0}),\boldsymbol{\eta}_{1}^{a}(\mathbf{x}_{0})\right\rangle =\cos^{-1}\left\langle \boldsymbol{\xi}_{2}^{a}(\mathbf{x}_{0}),\boldsymbol{\eta}_{2}^{a}(\mathbf{x}_{0})\right\rangle ,\quad\mathbf{x}_{0}\in\mathbb{R}^{2}.\label{eq:cos theta-1 2D}
\end{equation}

Unlike for the passive PRA defined in \eqref{eq:cos theta}, the spatial
scales resolved by the aPRA can be gradually refined by increasing
the time-like parameter $s$ in $\mathrm{aPRA}_{t_{0,}t_{1}}^{s}$.
As in the case of the aFTLE, this increase is possible as long
as the underlying trajectories $\tilde{\mathbf{x}}_{0}\left(s;0,\mathbf{x}_{0}\right)$
of the barrier equation for $\mathbf{f}$ stay in the spatial domain\textbf{
	$U$ }where $\mathbf{u}(\mathbf{x},t)$ is known. As for aFTLE, the
spatial resolution of the active barriers discoverable by aPRA is
only limited by the resolution of the available velocity data. Figure
\ref{fig:ABC flow passive vs active FTLE and PRA} illustrates the
substantial refinement and convergence for increasing $s$-values
obtained from aPRA relative to the passive PRA computed over the same
time interval $[t_{0},t_{1}]=[0,5]$ for the unsteady ABC flow \eqref{eq:time-dependent ABC flow}.

Another major advantage of $\mathrm{aPRA}_{t_{0,}t_{1}}^{s}$ over
$\mathrm{PRA}_{t_{0}}^{t_{1}}$ is the objectivity of $\mathrm{aPRA}_{t_{0,}t_{1}}^{s}$,
which follows from the objectivity of the barrier vector field $\mathbf{b}_{t_{0}}^{t_{1}}(\mathbf{x}_{0})$.
Additionally, structures revealed by $\mathrm{aPRA}_{t_{0,}t_{1}}^{s}$
always converge as $s$ is increased, because $\mathrm{aPRA}_{t_{0,}t_{1}}^{s}$
operates on a steady flow, even though $\mathbf{u}(\mathbf{x},t)$
is unsteady. Finally, unlike $\mathrm{PRA}_{t_{0}}^{t_{1}}(\mathbf{x}_{0})$,
its active version, $\mathrm{aPRA}_{t_{0,}t_{1}}^{s}$, has a non-degenerate
instantaneous limit, $\mathrm{aPRA}_{t_{,}t}^{s}(\mathbf{x};\mathbf{f})$, which is just the PRA computed for the Eulerian barrier equation \eqref{eq:Eulerian barrier equation} over the barrier-time interval $[0,s]$. This limit enables the detection of instantaneous limits of active elliptic LCSs as active elliptic OECSs.

\subsection{Relationship between active and passive LCS diagnostics}\label{relationship between active and passive LCS diagnostics}

Active LCS diagnostics are applied to  barrier vector fields, whereas passive LCS diagnostics are applied to the underlying velocity field. As a consequence, active barriers highlighted by active LCS methods generally differ from passive barriers (coherent structures) detected by passive LCS methods. This is no surprise, given that these two types of barriers are constructed from different principles. 

As an extreme case, all Eulerian and Lagrangian active  barriers coincide with their passive counterparts in directionally steady Beltrami flows (see section \ref{sec:directionally steady Beltrami flows}). Therefore, the closer a generic flow is to a Beltrami flow in a given region, the closer its active and passive barriers will be to each other in that region. More broadly, the more correlated the velocity field is with its Laplacian (i.e., with the diffusive force field), the closer the Lagrangian and Eulerian momentum barriers are expected to be to their passive counterparts. Similarly, the more correlated the velocity field is with the vorticity Laplacian (i.e., with the curl of the diffusive force field), the closer the Lagrangian and Eulerian vorticity barriers will be to their passive counterparts.

As another extreme case, inviscid flows have barrier equations with identically vanishing right-hand sides. This is because there is no viscous transport in such flows and hence active barriers are not well-defined. As a consequence,  aFTLE and aPRA will identically vanish  for such flows, while  passive FTLE and PRA will only vanish if the inviscid flow has a spatially independent velocity field. Therefore, the more inviscid the flow is in a region, the more its active and passive barriers will differ from each other in that region.

A notable case between Beltrami and inviscid flows is a Lamb--Oseen velocity field modeling a vortex decaying due to viscosity (Saffman et al. 1992). Along each cylindrical streamsurface surrounding the origin in this flow, the viscous force is a constant negative multiple of the velocity at any given time. This immediately implies that all Eulerian active and passive barriers to momentum transport coincide with the cylindrical streamsurfaces of Lamb--Oseen vortices, even though their velocity field is not Beltrami. Indeed, we do find in our upcoming 2D and 3D turbulence examples that strong enough vortices have very similar overall signatures in the active and the passive LCS diagnostic fields, with the former field providing more detail. This stands in contrast to hyperbolic mixing regions outside those vortices, in which active and passive LCS diagnostics may differ substantially.

\section{Active barriers in specific unsteady flows}

In this section, we illustrate the numerical implementation of our results  and the use of active LCS diagnostics (see section \ref{sec:Practical-implementation-on}) on
2D homogeneous, isotropic turbulence and a 3D turbulent channel flow. The scripts we have used to compute active barriers in these examples can be downloaded from https://github.com/LCSETH?tab=repositories.

\subsection{Two-dimensional homogeneous, isotropic turbulence\label{subsec:Two-dimensional-turbulence}}

Here we evaluate our 2D results from section \ref{subsec:active barriers in 2D-Navier=002013Stokes-flows}
on active barriers in a 2D turbulence simulation over a spatially
periodic domain $U=[0,2\pi]\times[0,2\pi]$. Since all computations will be two-dimensional in this section, we drop the hat from the notation we used in section \ref{subsec:active barriers in 2D-Navier=002013Stokes-flows} for 2D variables. 

Obtained from a pseudo-spectral code applied to the 2D, incompressible Navier-Stokes
equation (see Farazmand, Kevlahan \& Protas 2011), the spatial coordinates
are resolved using $1024^{2}$ Fourier modes with $2/3$ dealiasing.
The viscosity is $\nu=2\times10^{-5}$. This data set comprises $251$
equally spaced velocity field snapshots spanning the time interval
$[0,50]$. Whenever a numerical integration scheme is required, i.e.,
advection of particles and integration of the barrier fields, the
Runge-Kutta 4 algorithm is employed. The same data set was already
analyzed by Katsanoulis et al. (2019), who located vortex boundaries
as barriers to the diffusive transport of vorticity using the theory
of constrained diffusion barriers from Haller, Karrasch \& Kogelbauer
(2019). In contrast, here we use appropriate 2D, steady barrier equations
\eqref{eq:2D incompressible NS Lagrangian momentum barrier eq}-\eqref{eq:2D incompressible NS Eulerian momentum barrier eq}
and \eqref{eq:2D incompressible NS Lagrangian vorticity barrier eq}-\eqref{eq:2D incompressible NS Eulerian vorticity barrier eq}
(see also Remark \ref{rem:used active LCS diagnostics instead of level curves})
to visualize Lagrangian and objective Eulerian coherent vortices as
regions bounded by maximal barriers to active transport.

\subsubsection{Eulerian active barriers\label{subsec:Eulerian-barriers-to}}

For the instantaneous barrier calculations, we use the first snapshot
of the dataset at time $t=0$ and we compute the right-hand side of
eqs. \eqref{eq:2D incompressible NS Eulerian momentum barrier eq}
and \eqref{eq:2D incompressible NS Eulerian vorticity barrier eq}
using a grid of $1024\times1024$ points. In our experience, this
grid spacing is much smaller than the size of the coherent vortices
in this flow. As a consequence, the results do not change appreciably
under further grid refinements, as long as one targets structurally
stable objects in the Lagrangian particle dynamics, as we do (see
Definition \ref{def:barrier definition}). For vorticity
barriers, we use the 2D version of eq. \eqref{eq:NS Eulerian vorticity barrier eq-1-1}
to illustrate the computational procedure for barriers in lower-resolved
data. We then proceed to compute the aFTLE and aPRA for both the momentum
and vorticity barrier fields from eqs. \eqref{eq:aFTLE field -- Eulerian}
and \eqref{eq:cos theta-1 2D} using a central finite-differencing
scheme for the active flow map gradient required in eq. \eqref{eq:active CG tensor}.

We focus on the region $\left[2.8,4.9\right]\times\left[1,3\right]$ of the full computational domain
to illustrate the level of spatial detail we obtain from instantaneous velocity data (see fig. \ref{fig: 2D turbulence - FTLE-aFTLE Eulerian comparison}). We
note the striking differences in the quality of the delineated structures
between the instantaneous limit of the passive FTLE and the momentum-based
aFTLE of figure \ref{fig: 2D turbulence - FTLE-aFTLE Eulerian comparison}.
Advective LCSs tend to have relatively weak signatures in the instantaneous limit of the FTLE field (see formula \eqref{eq:instantaneous limit of passive FTLE}) which is given by the dominant rate-of-strain eigenvalue field.
In contrast, active barriers remain sharply defined in the aFTLE fields, which offer increasing refinement of the flow features under increasing $s$-times. The only limitation to this refinement is the resolution
of the available data. This is apparent in the vorticity-based aFTLE
in figure \ref{fig: 2D turbulence - FTLE-aFTLE Eulerian comparison}c,
where the improvement is more modest, given that higher-order spatial
derivatives need to be computed from the same data set. 

\noindent 
\begin{figure}
	\begin{centering}
		\includegraphics[width=1\textwidth]{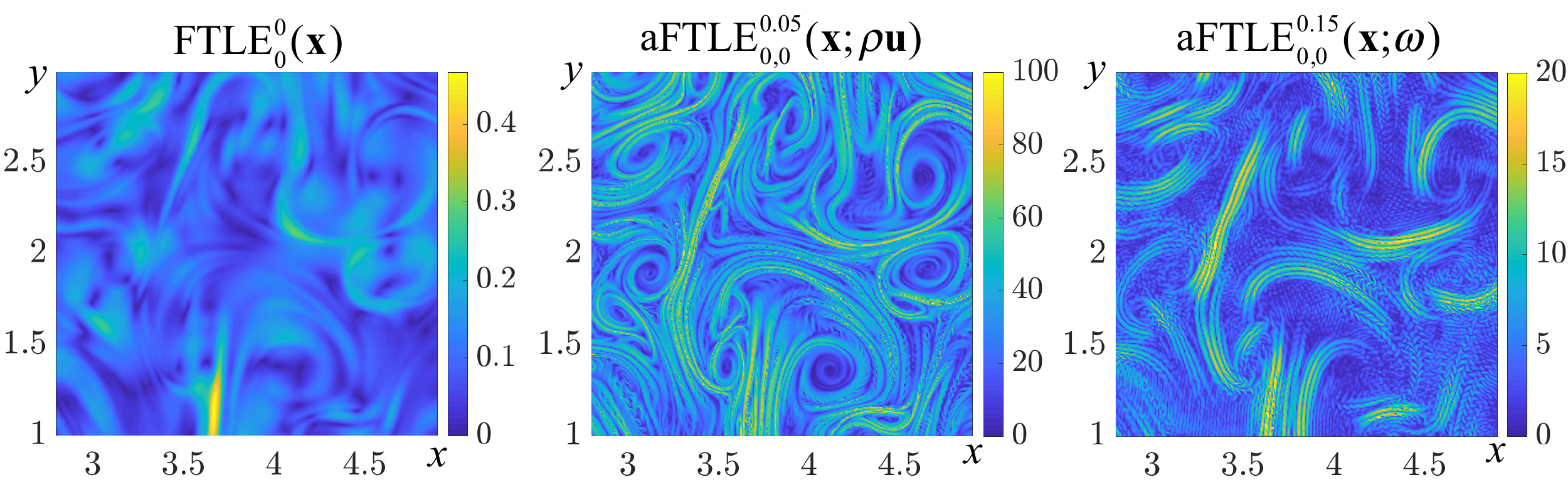}
		\par\end{centering}
	\caption{Comparison of the $t=0$ instantaneous limits of the passive FTLE, the aFTLE with respect to $\rho\mathbf{u}$ with $s=0.05$ and the aFTLE with respect to $\boldsymbol{\mathbf{\omega}}$ with $s=0.15$ in our 2D turbulence example. \label{fig: 2D turbulence - FTLE-aFTLE Eulerian comparison}}
\end{figure}

Figure \ref{fig: 2D turbulence - FTLE-aFTLE Eulerian comparison_momentum_closeup} focuses on momentum-based active barriers in one of the vortical regions revealed by  figure \ref{fig: 2D turbulence - FTLE-aFTLE Eulerian comparison}.  The aFTLE provides a clear demarcation of the main vortex, which becomes
even more pronounced for longer $s$-times, revealing secondary vortices
around its neighborhood. In contrast, none of these vortices are present
in the passive FTLE in figure \ref{fig: 2D turbulence - FTLE-aFTLE Eulerian comparison_momentum_closeup}.
A similar result emerges when the same region is analyzed using the
aPRA field in the same figure. Specifically, the effect of progressive refinement with increasing
$s$-times is more prominent here as a number
of elliptic structures become visible in the main vortical region. In contrast, the instantaneous limit of the passive PRA returns identically zero values, as the instantaneous limit of all polar rotation angles is zero by definition.

\noindent 
\begin{figure}
	\begin{centering}
		\includegraphics[width=1\textwidth]{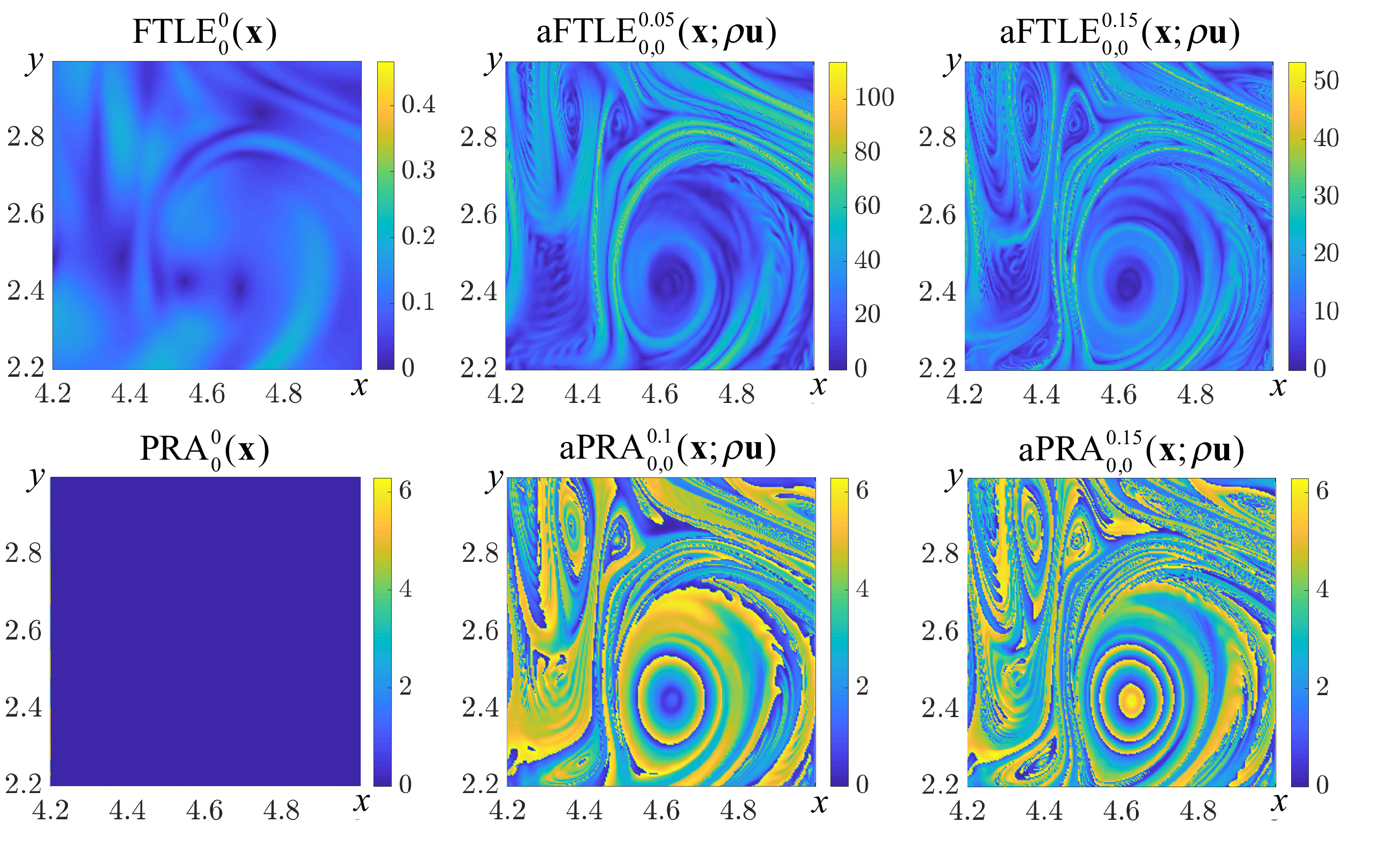}
		\par\end{centering}
	\caption{Comparison between the instantaneous limit of the passive FTLE (PRA)
		and the instantaneous limit of the momentum-based aFTLE  (PRA) fields  for   $s=0.05$ and for  $s=0.15$ in one of the vortical
		regions of our 2D turbulence example at $t=0$. \label{fig: 2D turbulence - FTLE-aFTLE Eulerian comparison_momentum_closeup}}
\end{figure}

\subsubsection{Lagrangian active barriers}

For the Lagrangian computations in this example, we use
the same, slightly oversampled grid of Katsanoulis et al. (2019) with
$1100\times1100$ equally spaced initial conditions and we advect
them over the time interval $[0,25]$ using all the available velocity
snapshots. To compute the required Lagrangian averages along trajectories,
we use $25$ snapshots of the appropriate quantities as using more
snapshots does not bring any noticeable changes to the resulting barrier
fields. Based on that, we compute the expressions for the active barrier
fields from eqs. \eqref{eq:2D incompressible NS Lagrangian momentum barrier eq}
and \eqref{eq:2D incompressible NS Lagrangian vorticity barrier eq}, which we then use for the evaluation of the aFTLE and aPRA.

Comparisons between these scalar diagnostic fields and the
passive FTLE and PRA are shown in figure \ref{fig: 2D turbulence - FTLE-aFTLE Lagrangian comparison}. We  observe that the momentum-based aFTLE and aPRA reveal structures
inside the vortical regions in much finer detail, as they do not
rely on substantial fluid particle separation. In agreement with our arguments in section \ref{relationship between active and passive LCS diagnostics}, aFTLE and aPRA consistently refine the same coherent vortices indicated by FTLE and PRA. In the mixing regions surrounding those vortices, however, active and passive LCS diagnostics tend to identify different barriers. As in the case of our
Eulerian barrier calculations in section \ref{subsec:Eulerian-barriers-to},
the vorticity-based aFTLE and aPRA provide a more moderate enhancement,
because they rely on second derivatives of the velocity data. 
\noindent 
\begin{figure}
	\begin{centering}
		\includegraphics[width=1\textwidth]{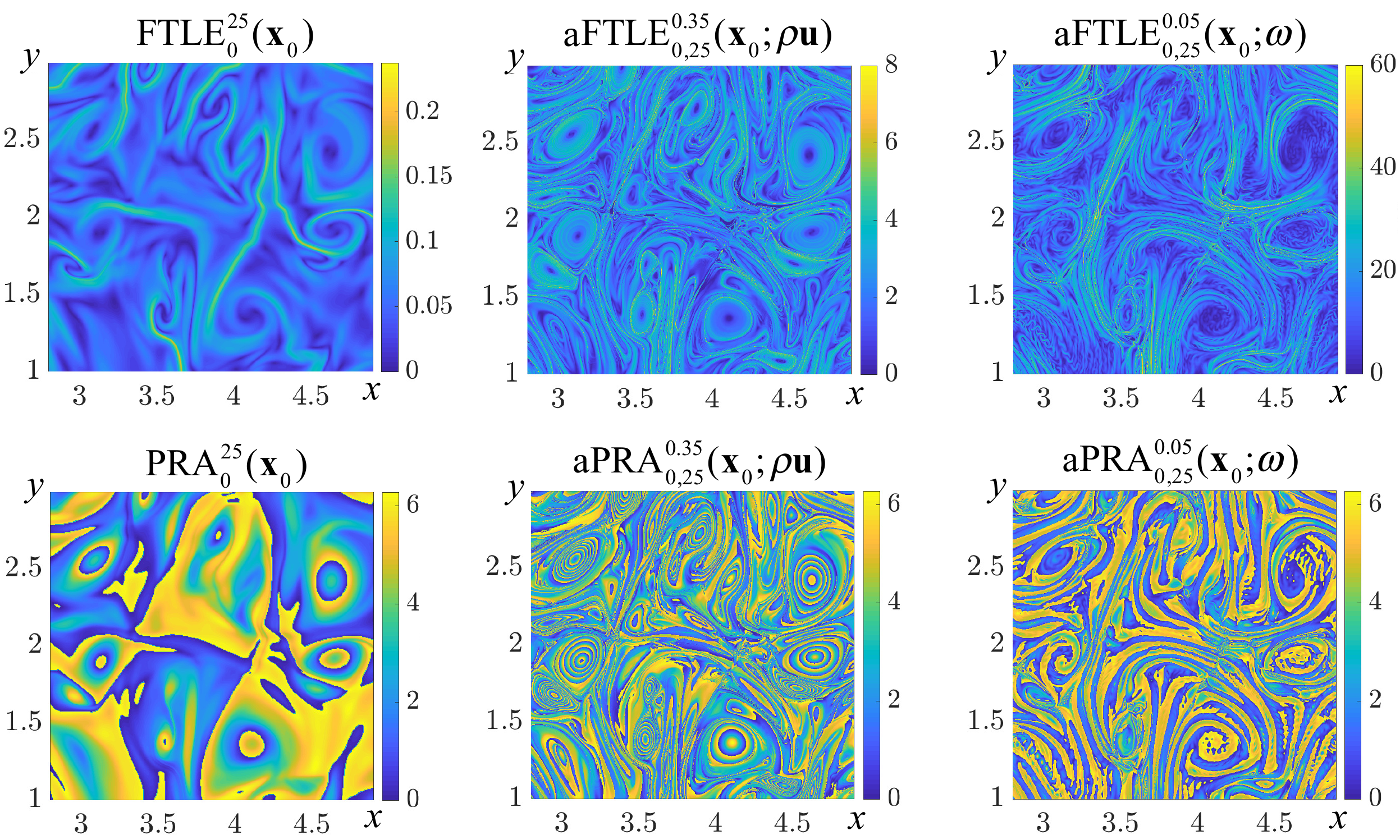}
		\par\end{centering}
	\caption{Comparison between  the passive FTLE (PRA) and the momentum- and vorticity-based aFTLE (aPRA).
		All computations were performed over the time interval $[t_{0},t_{1}]=[0,25]$
		on the domain $\left[2.8,4.9\right]\times\left[1,3\right]$. \label{fig: 2D turbulence - FTLE-aFTLE Lagrangian comparison}}
\end{figure}

 Next, we illustrate the extraction of active barriers to
the transport of momentum and vorticity as parametric curves. This is possible in 2D incompressible flows because the active barrier equations are Hamiltonian, and hence the barriers are level curves of a scalar
function (see Theorems \ref{thm:2D momentum barriers} and \ref{thm:2D vorticity barriers},
as well as Remark \ref{rem:used active LCS diagnostics instead of level curves}).
To perform this extraction, we follow the method presented in Haller
et al. (2016) for the extraction of coherent Lagrangian vortex boundaries
as outermost level sets of the Lagrangian-averaged vorticity deviation
(LAVD). We will use the notation $H_{t_{0}}^{t_{1}}(\mathbf{x}_0)$
to denote the relevant Hamiltonian from section \ref{subsec:active barriers in 2D-Navier=002013Stokes-flows}. The algorithm is the same for all those Hamiltonians, but
we will restrict our computations here to the Hamiltonian governing Lagrangian momentum-barriers in 2D, given by $H_{t_0}^{t_1}(\mathbf{x}_{0})=\nu\rho\, \overline{\omega\left(\mathbf{F}_{t_{0}}^{t}\left(\mathbf{x}_{0}\right),t\right)}$ (see eq.  \eqref{eq:2D incompressible NS Lagrangian momentum barrier eq}).

In all our computations, we will focus on finding almost
convex structurally stable level sets of $H_{t_{0}}^{t_{1}}(\mathbf{x}_0)$
that encircle a single local maximum of $\left|H_{t_{0}}^{t_{1}}(\mathbf{x}_0)\right|$.
The need for relaxation of the strict convexity requirement in discrete
data sets is discussed extensively in Haller et al. (2016), so we
will skip it here. Along these lines, we introduce the convexity deficiency
of a closed curve in the plane as the ratio of the area between the
curve and its convex hull to the area enclosed by the curve, which
we denote with $d_{max}$. The maximum $d_{max}$ we used for the
different extracted barriers was $5\times10^{-2}$.

Small-scale local maxima of $\left|H_{t_{0}}^{t_{1}}(\mathbf{x}_0)\right|$
may appear either due to non-accurate resolution of these scales or
because of computational noise. To address this issue, we only considered
boundaries with arclength larger than a threshold $l_{min}$. This
threshold was set to $0.4$ for all our  computations
because below this limit, boundaries contain too few grid points to
be considered well-resolved.

The main steps of the extraction procedure are delineated
in Algorithm 1. All the MATLAB
codes used for the extraction of the barriers of this section can
be found in the on-line supplementary materials.

\noindent \begin{algorithm}[H] 	\caption{Coherent Lagrangian and Eulerian vortex boundaries for two-dimensional flows} 	\label{alg:2DBarriers}{} 	\textbf{Input:} A time-resolved two-dimensional velocity field (or a snapshot thereof in the Eulerian case). 	\begin{enumerate} 		\item For a grid of initial conditions $\mathbf{x}_{0}$, compute the $H_{t_{0}}^{t_{1}}(\mathbf{x}_0)$. 		\item Find local maxima of  $\left|H_{t_{0}}^{t_{1}}(\mathbf{x}_0)\right|$. 		\item Detect initial vortex boundaries as outermost closed contours of $H_{t_{0}}^{t_{1}}(\mathbf{x}_0)$ satisfying the following: 		\begin{enumerate} 			\item The boundary encircles a local maximum of $\left|H_{t_{0}}^{t_{1}}(\mathbf{x}_0)\right|$. 			\item The boundary has convexity deficiency less than a bound $d_{max}$. 			\item The boundary has arclength exceeding a threshold $l_{min}$. 		\end{enumerate} 	\end{enumerate} 	\textbf{Output:} Initial positions of coherent Lagrangian or Eulerian vortex boundaries. \end{algorithm} 

We apply this algorithm to extract an active material
barrier to the transport of momentum with high precision as a parametrized
curve. This closed active barrier is shown in red in figure \ref{fig: 2D turbulence - extracted momentum barrier, Eulerian}. We also show the impact of this barrier on the momentum landscape
in Eulerian and Lagrangian coordinates, respectively, for the initial and final
times in $\left[0,25\right]$. Furthermore, for reference,  we show an elliptic LCS (black) extracted as a closed level curve of the passive PRA through a selected point of the active barrier. As expected from our discussion in section \ref{relationship between active and passive LCS diagnostics}, these active and passive elliptic barriers are very close to each other at the initial time and remain equally close during their material evolution. In the Eulerian
frame, we observe that the extracted active and passive  barriers shows no sign of filamentation throughout their whole extraction time. This
is in agreement with the general expectation we stated earlier for
diffusion-minimizing material curves. Furthermore, when viewed in
the Lagrangian frame, we note the organizing role of the extracted
barrier in the momentum landscape. Indeed, the barrier keeps encapsulating
small values of the momentum norm for the entire extraction time.
\begin{figure}
	\begin{centering}
		\includegraphics[width=1\textwidth]{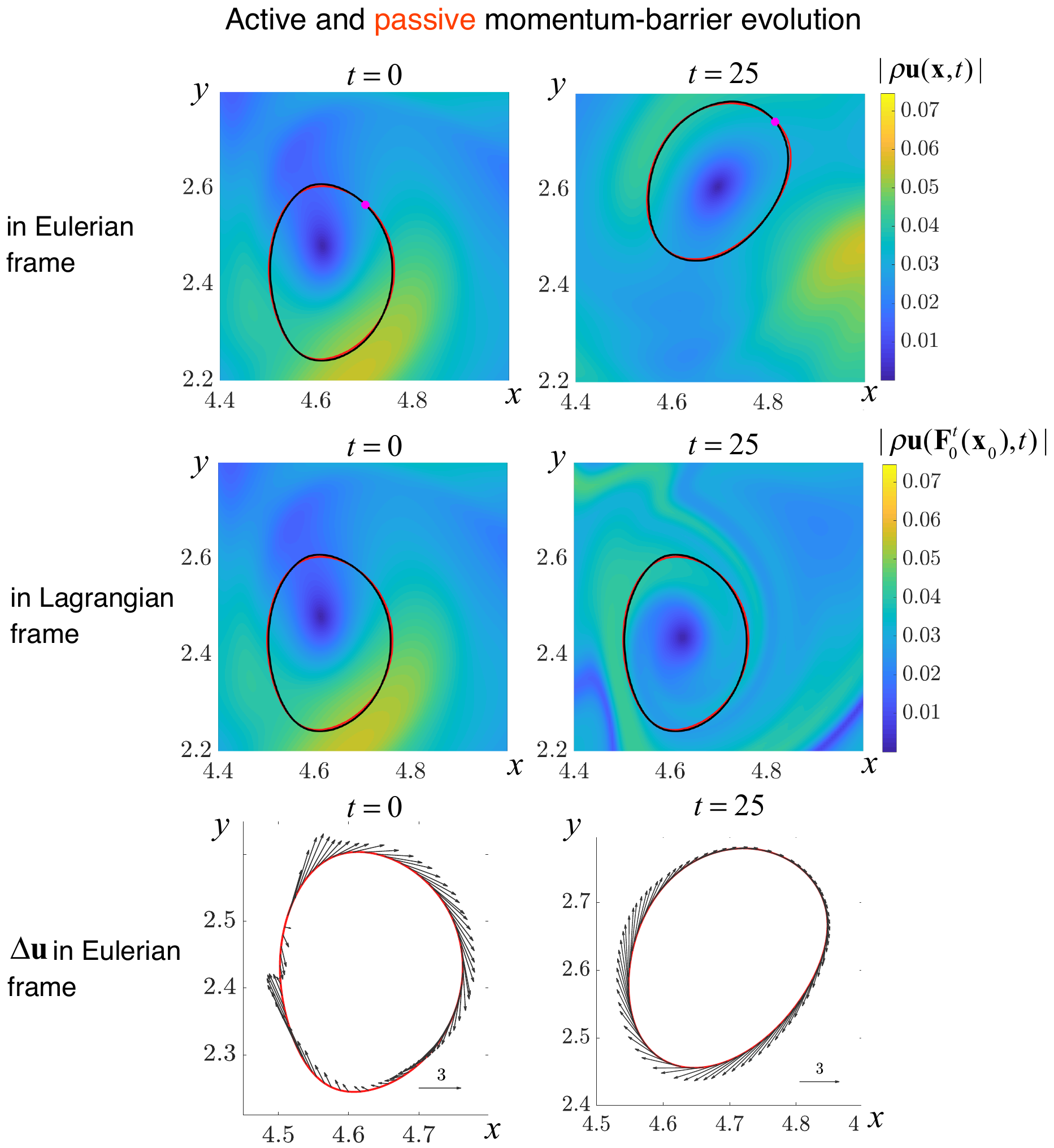}
		\par\end{centering}
	\caption{Evolution of an extracted active material barrier (red) to the diffusive transport
		of momentum in the Eulerian and the Lagrangian frame, superimposed on the distribution of the norm of the linear momentum. This barrier was identified as a level curve of the Hamiltonian $H_{t_0}^{t_1}(\mathbf{x}_{0})=\nu\rho\, \overline{\omega\left(\mathbf{F}_{t_{0}}^{t}\left(\mathbf{x}_{0}\right),t\right)}$. Black curve indicates an elliptic LCS extracted as a level curve of the passive PRA, launched from the highlighted point of the active barrier. Also shown are the instantaneous viscous forces (normalized by $\rho\nu$) acting on the evolving barrier.\label{fig: 2D turbulence - extracted momentum barrier, Eulerian}}
\end{figure}

 Figure \ref{fig: 2D turbulence - extracted momentum barrier, Eulerian}
also shows the instantaneous viscous force (normalized by $\rho\nu$) along the extracted active momentum
barrier. Note that this force remains
almost tangent to the barrier for the most part. There are, however,
some notable exceptions, illustrating that these barriers are not
constructed to be tangent to the viscous forces at every time instance.
Rather, the viscous forces are tangent to the barriers in a time-averaged
sense after being pulled back under the flow map to the initial configuration.

\subsection{Three-dimensional turbulent channel flow\label{subsec:channelflow}}

We consider now the 3D incompressible, turbulent flow of a Newtonian
fluid in a doubly periodic channel, a well-studied physical setting for 3D coherent structure studies.

Our analysis relies on velocity snapshots from a mixed-discretization parallel 
solver of the incompressible Navier--Stokes equations in the wall-normal
velocity and vorticity formulation, developed by Luchini \& Quadrio
(2006). The equations of motions are discretized via a Fourier--Galerkin
approach along the two statistically homogeneous streamwise $\left(x\right)$ and
spanwise $\left(z\right)$ directions. Fourth-order compact finite differences
(Lele 1992) based on a five-point computational stencil are adopted
in the wall-normal direction $\left(y\right)$.

The governing equations are integrated forward in time at constant
power input (Hasegawa et al. 2014) with a partially-implicit approach, combining the
three-step, low-storage Runge\textendash Kutta (RK3) scheme with the
implicit Crank\textendash Nicolson scheme for the viscous terms. The
friction Reynolds number is $Re_{\tau}=u_{\tau}h/\nu=200$, based
on the friction velocity $u_{\tau}$, the channel half height $h$
and the kinematic viscosity $\nu$, which corresponds to a bulk Reynolds
number $Re_{b}=U_{b}h/\nu=3177$, where $U_{b}$ is the bulk velocity. 
The computational domain is $L_{x}=4\pi h$ long
and $L_{z}=2\pi h$ wide. The number of Fourier modes is 256 both
in the streamwise and spanwise direction; the number of points in
the wall-normal direction is 256, unevenly spaced in order to decrease
the grid size near the walls. The corresponding spatial resolution
in the homogeneous directions is $\Delta x^{+}=9.8$ and $\Delta z^{+}=4.9$ 
(without accounting for the additional modes required for dealiasing
according to the 3/2 rule); the 
wall-normal resolution increases from $\Delta y^{+}=0.4$ near the walls to 
$\Delta y^{+}=2.6$ at the centreline,
while the temporal resolution is kept constant at $\Delta t=0.005 h/U_b$, corresponding to $\Delta t^{+}=0.063$.
The superscript $+$ denotes nondimensionalization in viscous units,
i.e. with $u_{\tau}$ and $\nu$. At each DNS timestep and thus with the same 
temporal resolution, a three-dimensional flow snapshot is stored for a total 
of 1500 snapshots. The 750$^\mathrm{th}$ snapshot in the series is stored at time
$t=0$. This is the instant at which we compute the Eulerian barriers to active transport. The last 750 snapshots are utilised for the computation 
of the active barriers and passive forward FTLE, while the first 750 ones are used 
for calculating the passive backward FTLE. The integration time for the Lagrangian calculations
has been chosen based on pair-dispersion statistics of Lagrangian tracers (see, for instance, Pitton et al.
2012). The averaging time for the 
bulk flow statistics is  8100 $U_b / h$. In the following, 
all quantities are nondimensionalized using  $U_{b}$ and $h$ unless stated
otherwise.

The active barriers are computed in a two-step procedure. First, the active 
barrier field $\mathbf{b}_{t_0}^{t_1}\left(\mathbf{x}_0\right)$, appearing at the right-hand 
side of the barrier equation \eqref{eq:barrier equation}, is computed. Then, the barrier 
ODE is solved and visualised via the FTLE and PRA diagnostics. 

\begin{figure}
  \centering
  \includegraphics[width=0.9\textwidth]{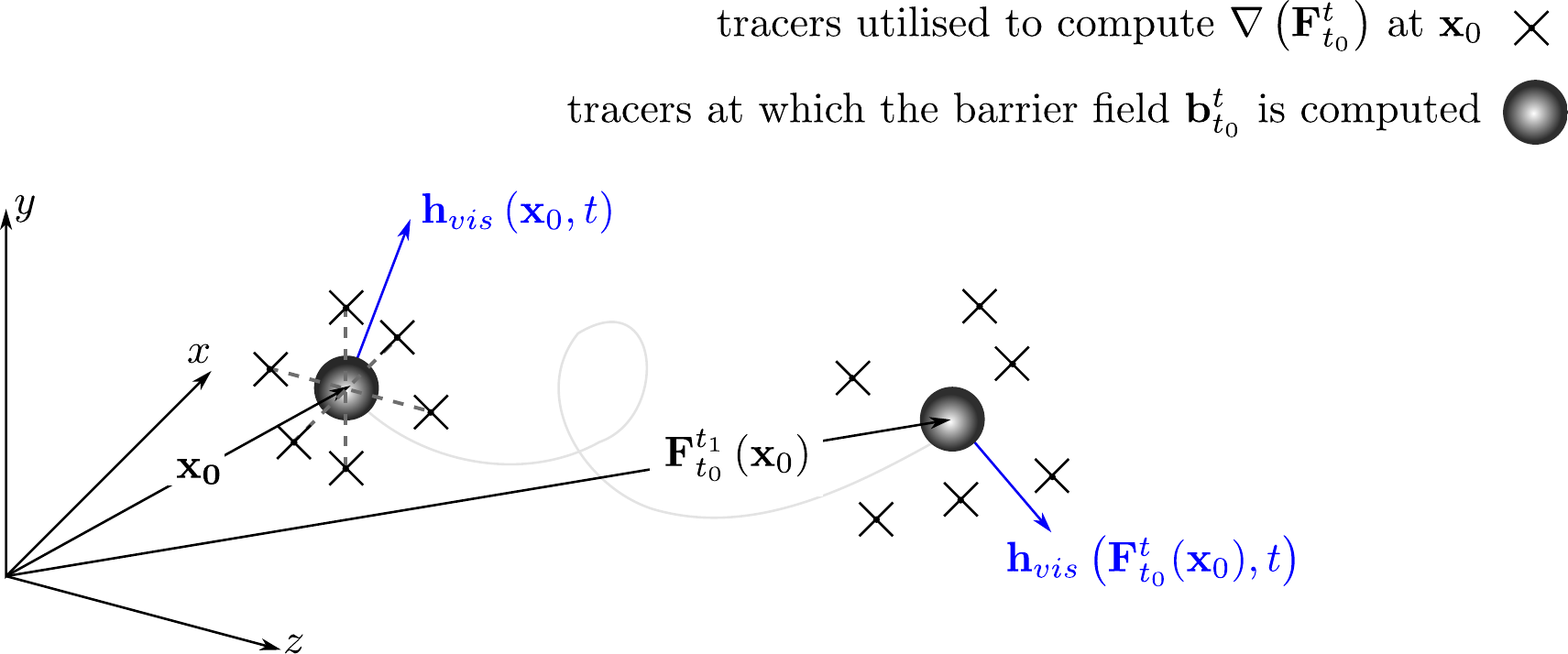}
  \caption{Sketch of the computational molecule utilised for the computation of the active barrier field $\mathbf{b}_{t_0}^{t_1}$. The large circle denotes the Lagrangian tracer at the center of the molecule, where the vector field  $\mathbf{h}_{vis}$ is computed. The cross denotes the further six tracers utilised to compute $\nabla\mathbf{F}_{t_0}^{t_1}$. }
  \label{fig:seed}
\end{figure}
For Eulerian barriers, the barrier vector field appearing in the instantaneous (or Eulerian) barrier equation \eqref{eq:Eulerian barrier equation} is readily computed from the velocity field data
as $\mathbf{b}_{t}^{t}=\mathbf{h}_{vis}$. Differentiation of the velocity field is  performed with 
the same discrete operators used during DNS. For material barriers, $\mathbf{b}_{t_0}^{t_1}\left(\mathbf{x}_0\right)$ is simply obtained as the temporal average of 
 $\left(\mathbf{F}_{t_0}^{t_1}\right)^\ast \mathbf{h}_{vis}$, because $\det\nabla \mathbf{F}_{t_0}^{t_1}\left(\mathbf{x}_0\right)\equiv 1$ by incompressibility. In this case, the vector field $\mathbf{b}_{t_0}^{t_1}\left(\mathbf{x}_0\right)$ is discretised 
on a Cartesian grid similar to the one used for the velocity field; the only difference is that the
number of collocation points along the $x$- and $z$-directions is increased to 
384 via Fourier interpolation. 

At time $t_0$, a set of tracers is seeded in the 
neighbourhood of each point $\mathbf{x}_0$ at which $\mathbf{b}_{t_0}^{t_1}\left(\mathbf{x}_0\right)$ 
needs to be computed. Each set (see figure~\ref{fig:seed}) is composed by  7 tracers. The central 
tracer is exactly located at $\mathbf{x}_0$ and is the only one along which the vector field $\mathbf{h}_{vis}$ is also computed.
The other tracers are shifted by $\epsilon_i$ along the positive and negative $i$th spatial direction and are
utilised to compute $\nabla \mathbf{F}_{t_0}^{t_1}\left(\mathbf{x}_0\right)$ with second-order central finite differences. 
The shift $\epsilon_i$ is defined as $1/100$ of the minimum grid spacing along the $i$th spatial direction. 
A total of $2.64\times 10^8$ particles are seeded into the flow. The evolving positions of these
tracers, which are images of their initial positions under the flow
map $\mathbf{F}_{t_0}^{t_1}$, are advanced in time by integrating the
$\mathbf{u}$ field with a third-order, four-stage Runge--Kutta algorithm. The vector fields $\mathbf{u}$ and 
$\mathbf{h}_{vis}$ required at the intermediate stages are obtained via linear interpolation of two consecutive flow snapshots 
and are evaluated at the particle position through a sixth-order, three-dimensional
Lagrangian interpolation (van Hinsberg et al. 2012, Pitton et al.
2012) of the underlying vector field, which reduces to fourth-order
only between the wall and the first grid point above it.

Once the (Lagrangian or Eulerian) barrier equation is available, its active flow map $\boldsymbol{\mathcal{F}}_{t_0,t_1}^{s}$
is computed by solving the steady barrier ODE up to  $s_{\mathrm{max}}=31.0$ and  $s_{\mathrm{max}}=0.62$
 for the momentum and vorticity barriers, respectively. We have chosen these $s$-times large enough for
the computed barrier trajectories to reveal enough detail in the underlying barrier vector field but small enough
to avoid accumulation of the integration error. The effect of changing the parameter $s_{\mathrm{max}}$ is shown in the
additional material \textsf{Movie 1.mp4} and \textsf{Movie 2.mp4}
for Eulerian momentum and vorticity barriers, respectively. The seeds for the flow map are
arranged in a Cartesian grid identical to the one of the active barrier field for 3D 
computations of aFTLE/aPRA diagnostics, while the spatial resolution is increased by a factor 6 when only two-dimensional
slices are computed. The comparison between the two resolutions has been utilised to verify the grid-independence of the results. The aFTLE and aPRA diagnostics are then computed according to equations \eqref{eq:aFTLE -- Lagrangian} and \eqref{eq:cos theta-1},  respectively. 

\subsubsection{Eulerian active barriers} \label{sec:i3d}

Instantaneous aFTLE and aPRA are presented for $t=0$ in figures \ref{fig:channel_iFTLE} and 
\ref{fig:channel_iPRA}, respectively, and compared against their passive variants. Even though the results are computed for 
the complete three-dimensional field, only two-dimensional cross sections are presented in the following. 
In figures \ref{fig:channel_iFTLE} and \ref{fig:channel_iPRA}, a $\left(y-z\right)$ cross-section located at $x=2 \pi h$ is shown.
The 3D visualisation of the FTLE and PRA fields poses challenges in its own, which are subjects of ongoing research
in computer visualisation (see, e.g., Sadlo \& Peikert 2009, Schindler et al. 2012) and are outside the scope of the present study. 
The 2D visualization in different cross sections results in different local flow structures which are, nevertheless, all reminiscent of classically known structures in  channel flows. These include
low-speed streaks, quasi-streamwise and hairpin vortices and packets thereof (Robinson 1991). The supplementary materials 
\textsf{Movie 3.mp4} and \textsf{Movie 4.mp4} show how figure \ref{fig:channel_iFTLE}(b-c) change throughout the channel length. 

\begin{figure}
  \includegraphics[]{./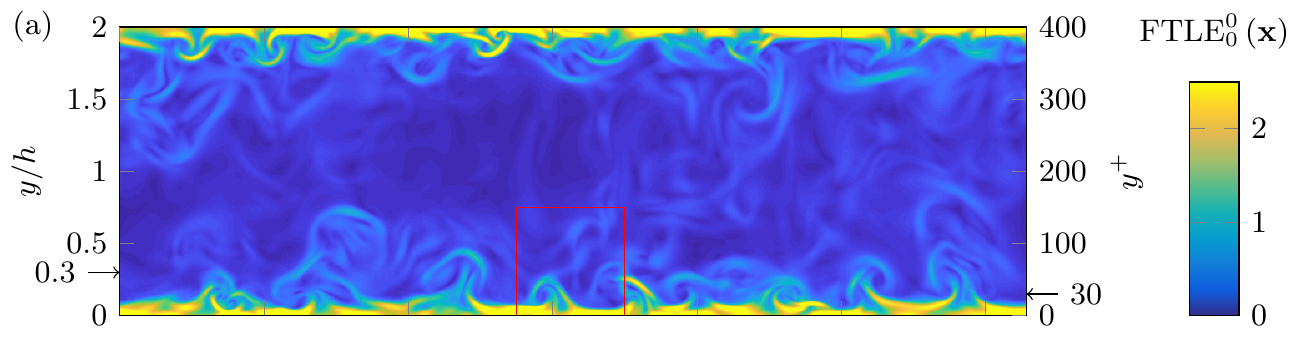}\\
  \includegraphics[]{./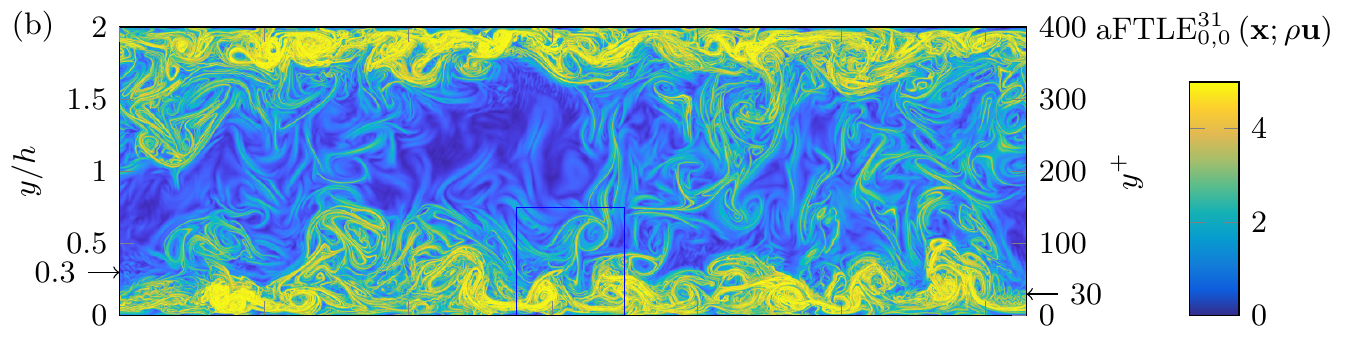}\\
  \includegraphics[]{./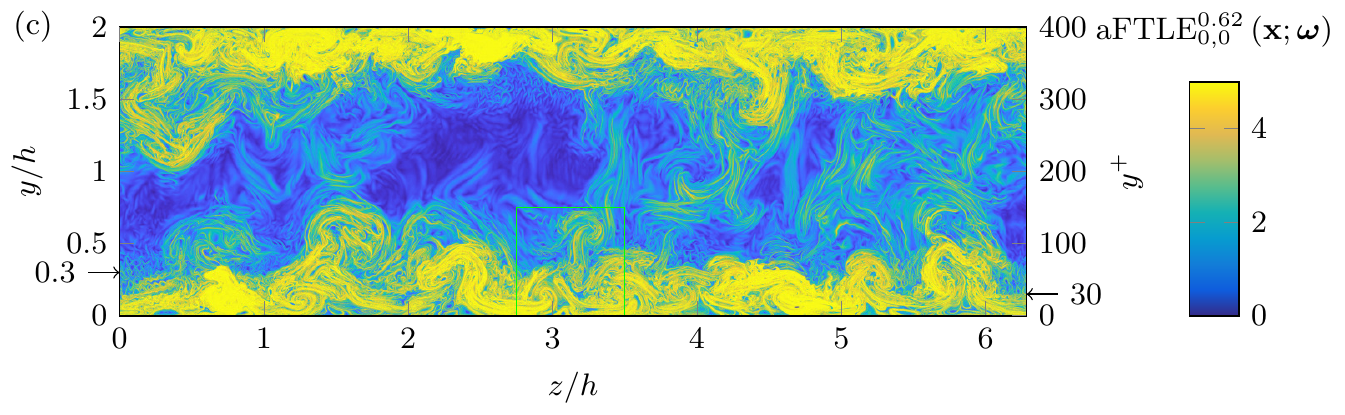}\\
  \hfill\includegraphics[]{./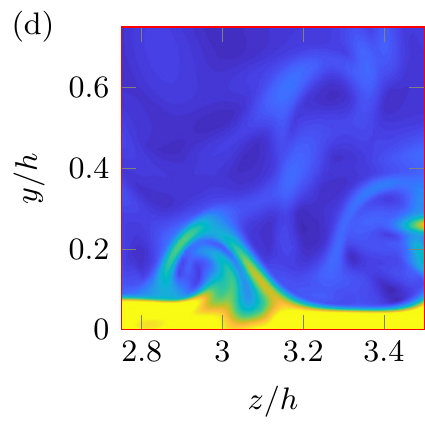}\hfill\includegraphics[]{./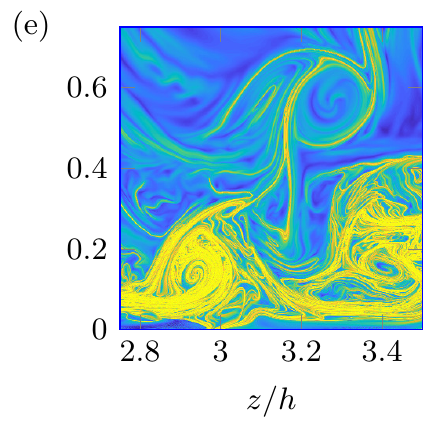}\hfill\includegraphics[]{./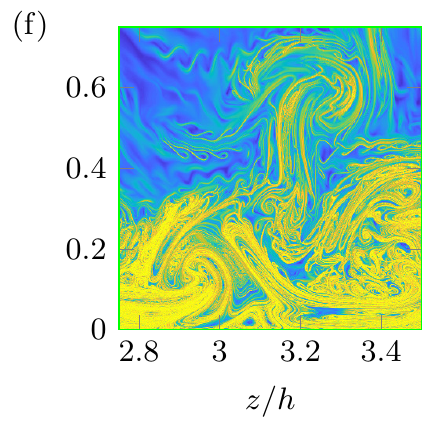}\hfill
  \caption{Comparison between the instantaneous limit of (a,d) the passive FTLE, 
           (b,e) the aFTLE with respect to $\rho \mathbf{u}$ and (c,f) the 
           aFTLE with respect to $\bm{\omega}$ at $t=0$ in a cross-sectional
           plane at $x/h = 2\pi$. The panels (d-f) magnify the region denoted with a rectangle in panels (a-c). All computations in the figure were performed on the same snapshot of the velocity field at $t=0$. }
  \label{fig:channel_iFTLE}
\end{figure}
\begin{figure}
  \includegraphics[]{./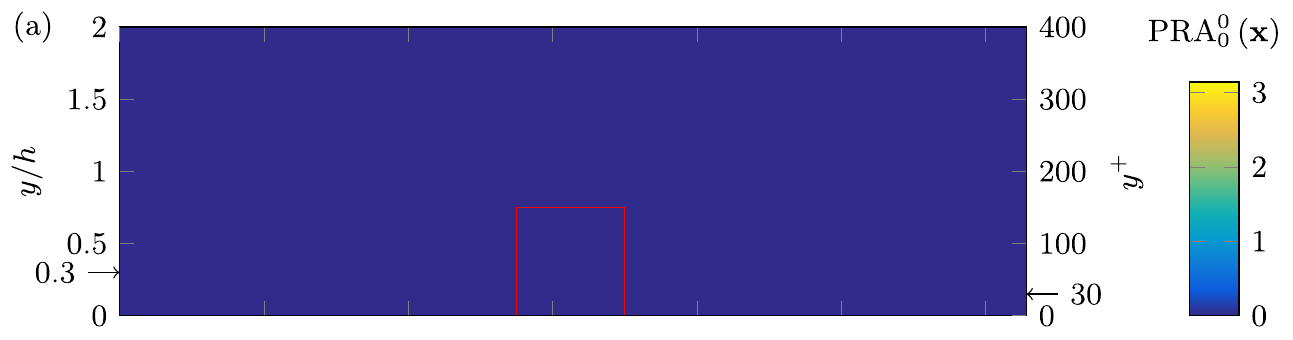}\\
  \includegraphics[]{./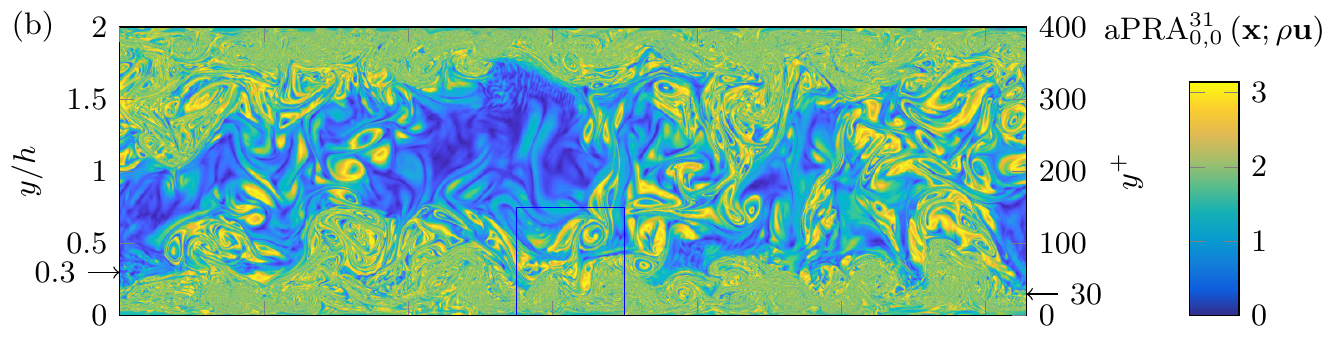}\\
  \includegraphics[]{./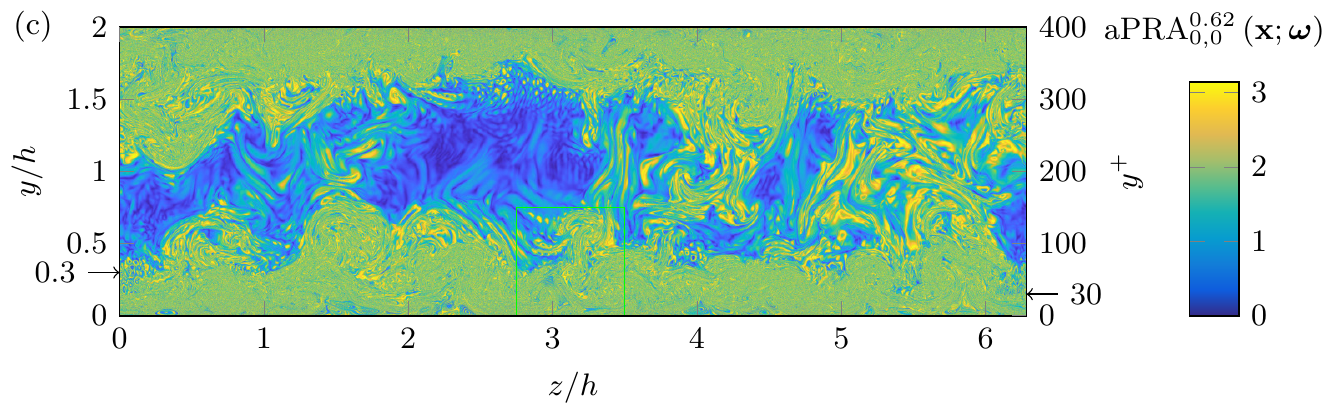}\\
  \hfill\includegraphics[]{./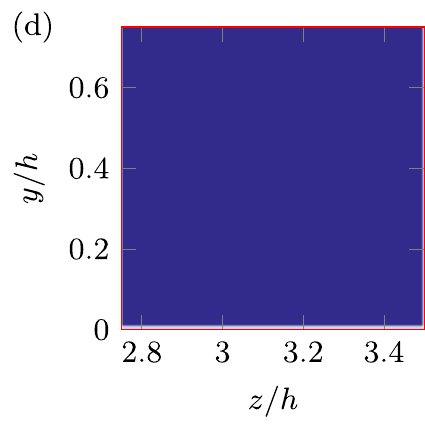}\hfill\includegraphics[]{./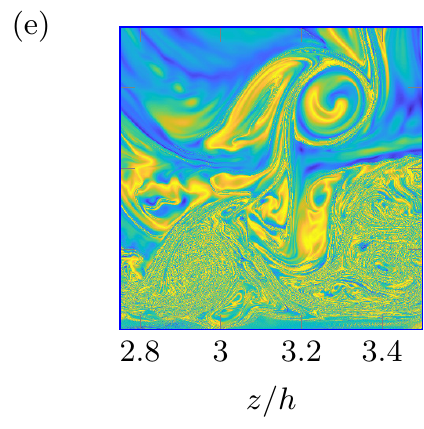}\hfill\includegraphics[]{./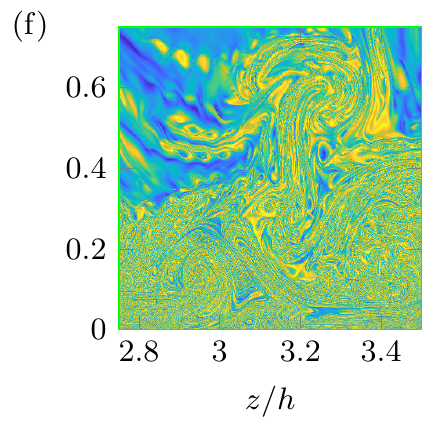}\hfill
  \caption{Comparison between the instantaneous limit of (a,d) the passive PRA (which is identically zero), 
           (b,e) the aPRA with respect to $\rho \mathbf{u}$ and (c,f) the 
           aPRA with respect to $\bm{\omega}$ at $t=0$ in a cross-sectional
           plane at $x/h = 2\pi$. The panels (d-f) magnify the region denoted with a rectangle in panels (a-c). All computations in the figure were performed on the same snapshot of the velocity field at $t=0$. }
  \label{fig:channel_iPRA}
\end{figure}

As already seen for 2D turbulence in \S{\ref{subsec:Two-dimensional-turbulence}},
the aFTLE and aPRA  highlight
a broader range of structures in more detail from the same velocity data when compared to their passive variants. (Recall that the instantaneous limit of the passive PRA, in fact, vanishes identically, and hence reveals no elliptic coherent structures from a single velocity snapshot.)
The Eulerian active barriers revealed by aFTLE and aPRA appear in figures \ref{fig:channel_iFTLE} and 
\ref{fig:channel_iPRA} as an abundance of intersections of 2D surfaces with the selected cross section. Limiting to visual inspection, 
we recognise several open (or hyperbolic) barriers as ridges of the aFTLE fields. 
Given the quasi-streamwise nature of turbulent 
structures in wall-bounded flows, vortical (or elliptic) barriers to transport 
are often observed in cross-sectional planes as aFTLE ridges wrapping around closed regions, which are 
also captured
as level sets of the corresponding aPRA fields. Example of such regions
are shown in the magnifications of panels (d-f) in figures \ref{fig:channel_iFTLE} and 
\ref{fig:channel_iPRA}.

The results also reveal that large prominent aFTLE ridges penetrate into and span the bulk flow region, sometimes
connecting the channel halves, as visible in \ref{fig:channel_iFTLE}(b) between $3 \leq z/h \leq 3.5$ and $0.3 \leq y/h \leq 1.5$.
Other regions, such as between $2 \leq z/h \leq 3$ and $0.5 \leq y/h \leq 1.5$ in the same figure, display practically no discernible barriers  
and are bounded by the envelopes of filamented open (hyperbolic)
transport barriers, which are finite-time generalizations of infinite-time
classic stable and unstable manifolds. Unlike in previous approaches,
however, these finite-time invariant manifolds are constructed here
as perfect material barriers to active transport, rather than as Lagrangian
coherent structures acting as backbones of advected fluid-mass patterns
(Haller 2015).

\begin{figure}
  \centering
  \includegraphics[]{./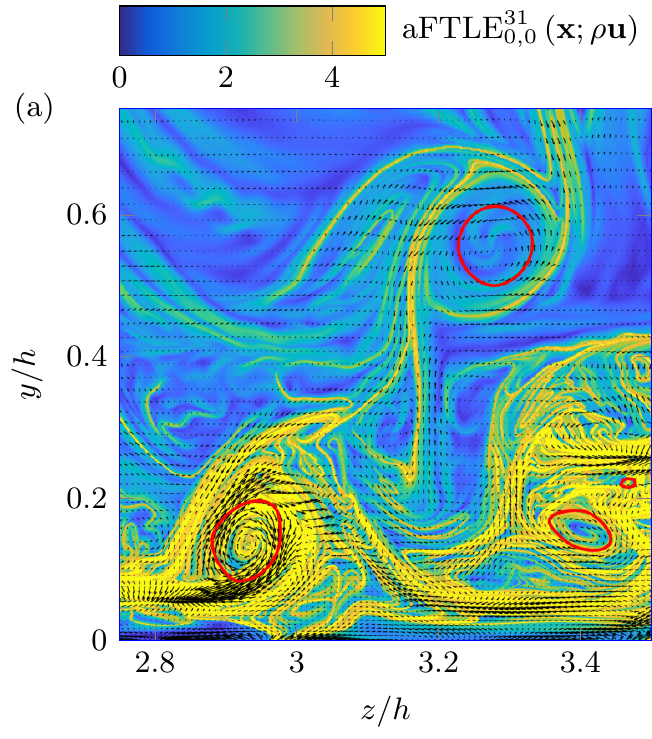}\includegraphics[]{./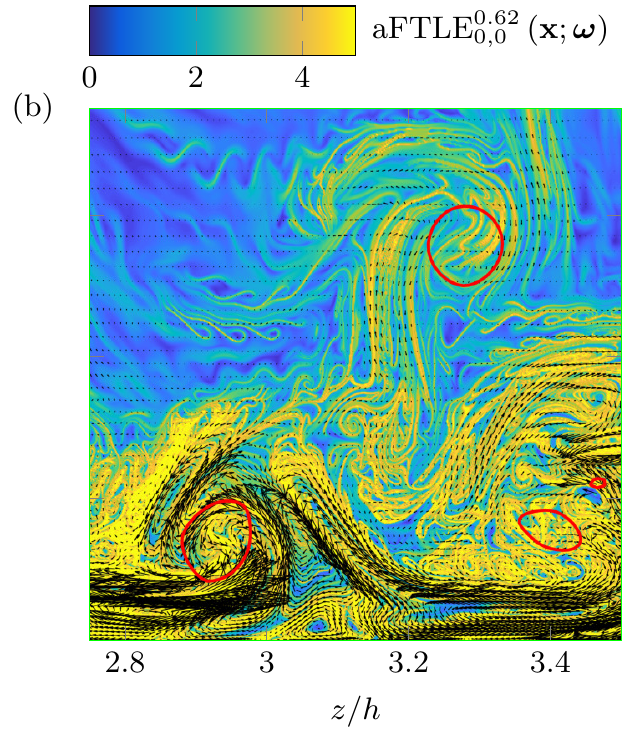}\\
  \caption{Eulerian active barriers of (a) $\rho \mathbf{u}$ and (b) $\bm{\omega}$ at $t=0$ in a cross-sectional
           plane located at $x/h = 2\pi$. The colormap shows the respective aFTLE fields, the vectors show the cross-sectional 
           components of the underlying active barrier field, while the red lines are level-set curves $\lambda_{2}^{+}(\mathbf{x},t)=-0.015$ 
           of the Eulerian vortex identification criterion proposed by Jeong et al. (1997)}.
  \label{fig:ibfield}
\end{figure}

Figure \ref{fig:ibfield} shows the Eulerian active barrier vector field of (a) $\rho\mathbf{u}$ and (b) $\bm{\omega}$
superimposed to the respective aFTLEs already shown in figure \ref{fig:channel_iFTLE}(e-f).  
Level-set curves of the $\lambda_{2}(\mathbf{x},t)=-0.015$ field (Jeong \&
Hussein 1995), a common visualization tool for coherent vortical structures in 
wall-bounded turbulence, are also shown. The scalar
field  $\lambda_{2}(\mathbf{x},t)$  is defined as the instantaneous intermediate
eigenvalue of the tensor field $\mathbf{S}^{2}(\mathbf{x},t)+\mathbf{W}^{2}(\mathbf{x},t)$,
with \textbf{$\mathbf{S}$ } and $\mathbf{W}$ defined in eq. \eqref{eq:spin and rate of strain}. This choice follows the 
heuristic convention to select a $\lambda_{2}^{+}$
value slightly below the negative of the r.m.s. peak of $\lambda_{2}(\mathbf{x},t)$
across the channel (Jeong et al. 1997), which is approximately $0.0125$
in our case.

Compared to the passive material barriers shown in figure \ref{fig:channel_iFTLE}(d), we observe that 
the active barriers not only yield a remarkably more complex flow structure but also carry a 
completely different physical meaning. Since the active barriers minimise the diffusive transport of, in this case,
 linear momentum or vorticity, we find that the cross-sectional components of the active
barrier vector field $\mathbf{b}_t^t$ are parallel to the aFTLE ridges. 
This indicates that the resultant force 
of the viscous stresses is tangential to Eulerian active barriers of momentum transport. As noted previously, momentum barriers in $\left(y-z\right)$ cross-sections
can roll-up into spiral patterns or form closed surfaces. In regions where this occurs, 
figure \ref{fig:ibfield}(a) shows that closed level-set curves of $\lambda_2$, typically used as indicators for 
the presence of quasi-streamwise vortices, tend to be found. This suggests that boundaries of 
quasi-streamwise vortices act as Eulerian active barriers to the transport of linear momentum. 
Interestingly, we find that the circulation of the active momentum barrier field in such areas is of opposite sign than the one of the velocity field.  This indicates that viscous forces oppose the vortical motion that is observed in the analyzed snapshots.  
In addition, Eulerian active barriers of vorticity tend to enter regions of closed  
momentum barriers or level-set curves of $\lambda_2$, thus highlighting regions in which vorticity diffuses into the vortex 
or is dissipated by viscosity. 

Despite some similarities, it is important to note the  practical and fundamental differences between the 
Eulerian momentum barriers and level-set surfaces of $\lambda_2$. On the fundamental side, we mention that 
the active barriers are objective, they have clear implications for the viscous transport of the active vector field and, most importantly, 
that they are extensible by definition to material barriers, thus accounting for the Lagrangian coherence of the barriers themselves. 
On the practical side, Eulerian active barriers do not require the convenient but arbitrary choice of a threshold and deliver information on the full active transport geometry, rather than just providing a few isolated curves.

\subsubsection{Lagrangian active barriers} \label{sec:l3d}

\begin{figure}
  \centering
  \includegraphics[]{./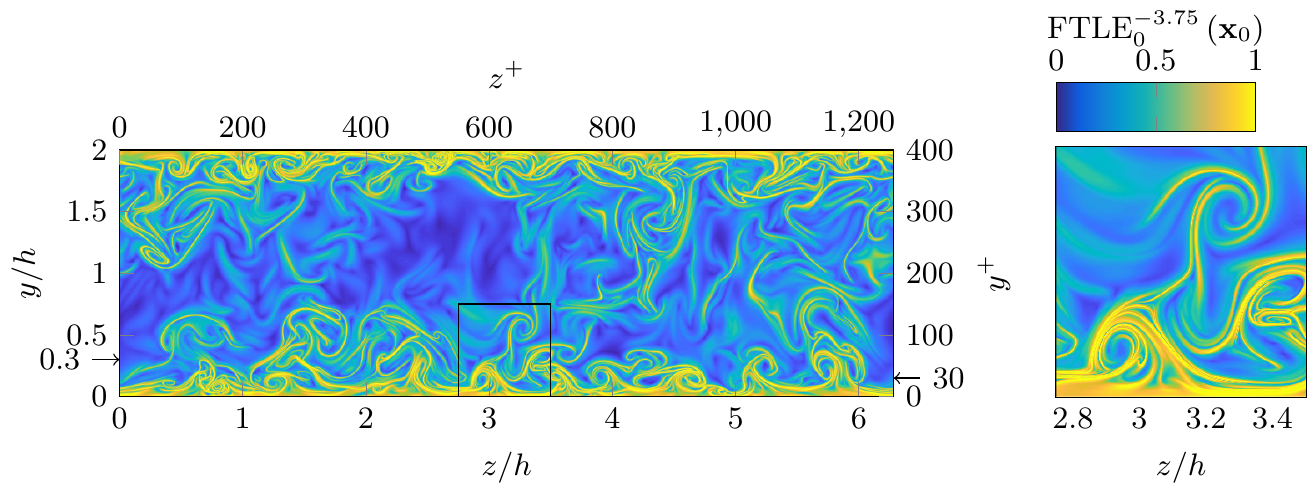}
  \caption{Passive backward $\mathrm{FTLE}_0^{-3.75}\left(\mathbf{x}_0 \right)$ in a cross-sectional
           plane at $x/h = 2\pi$. The right panel magnifies the region denoted with a rectangle in the left panel. }
  \label{fig:bFTLE}
\end{figure}
Figure \ref{fig:bFTLE} shows attracting material surfaces as passive backward  $\mathrm{FTLE}_0^{-3.75}\left(\mathbf{x}_0 \right)$ at the same $\left(y-z\right)$ cross-section located at $x=2 \pi h$ discussed in \S\ref{sec:i3d}. These attracting material surfaces, forming the cores of experimentally observed fluid trajectory patterns at time $t=0$, show a striking resemblance to the Eulerian active barriers to linear momentum indicated in figure \ref{fig:channel_iFTLE}(b,e) by the $\mathrm{aFTLE}_{0,0}^{31}\left(\mathbf{x}_0, \rho\mathbf{u} \right)$ field. The close similarity between the two is not fully surprising. At the present low value of Reynolds number viscous effects dominate throughout a significant portion of the channel, and thus determine both the characteristics of the Eulerian momentum barriers and the finite-time dynamics of particle motion. The temporal horizon, over which the analogy between  $\mathrm{FTLE}_0^{-3.75}\left(\mathbf{x}_0 \right)$ and $\mathrm{aFTLE}_{0,0}^{31}\left(\mathbf{x}_0, \rho\mathbf{u} \right)$ is observed, is expected to decrease with increasing Reynolds number, as the viscosity-dominated inner layer shrinks compared to the channel height. Whether the observed similarity holds at higher values of $Re$ is to be verified in later studies with high-$Re$ data. However, it is remarkable that the Eulerian momentum barriers, which are computed utilising a single flow snapshot, reproduce the same features of material surfaces obtained from a Lagrangian computation, which requires storing the temporal evolution of the flow. 

\begin{figure}
  \includegraphics[]{./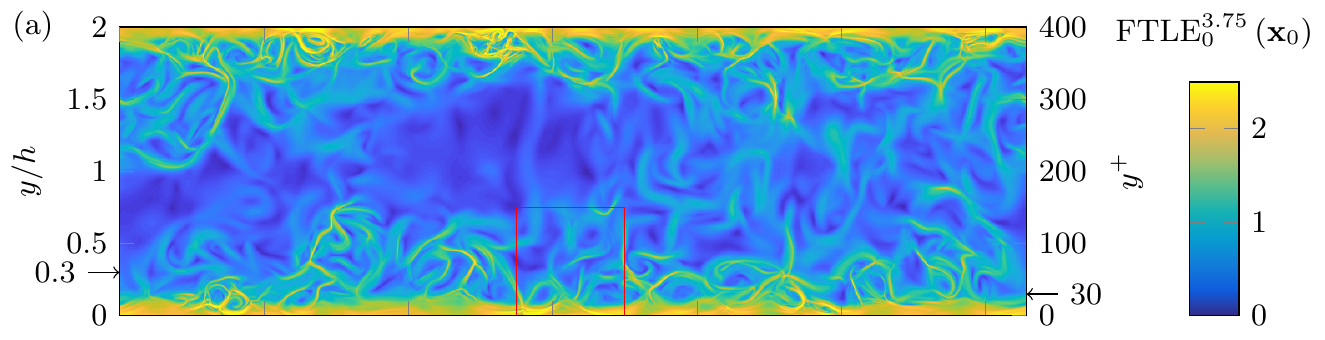}\\
  \includegraphics[]{./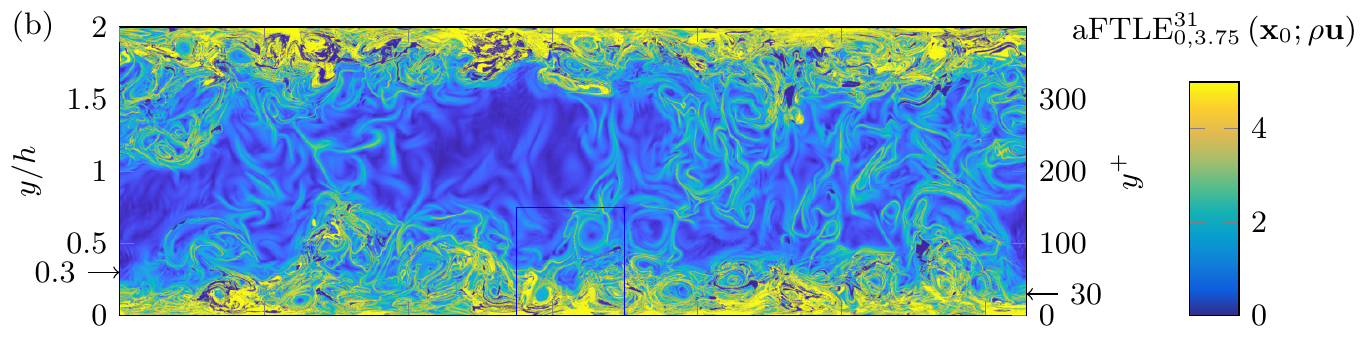}\\
  \includegraphics[]{./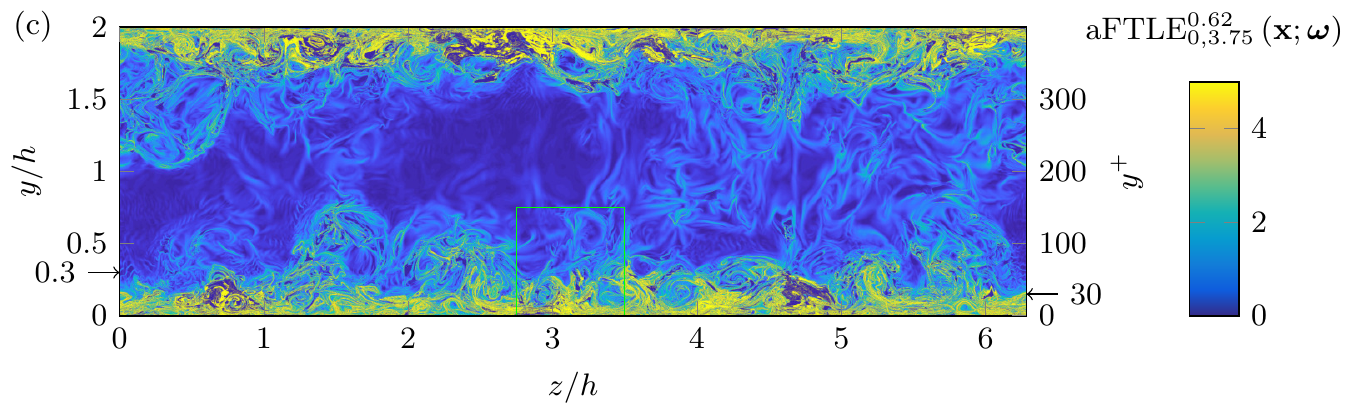}\\
  \hfill\includegraphics[]{./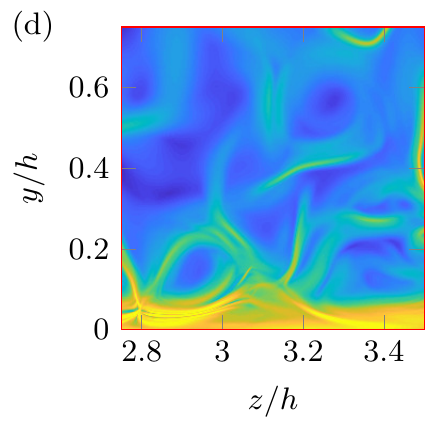}\hfill\includegraphics[]{./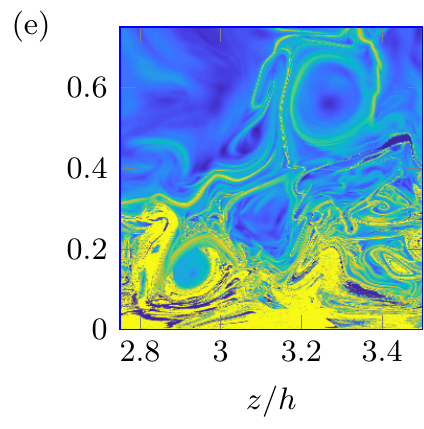}\hfill\includegraphics[]{./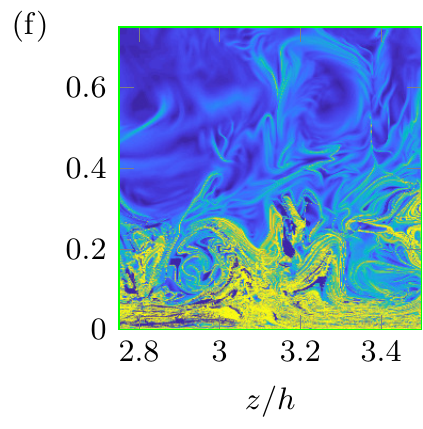}\hfill
  \caption{ Comparison between (a,d) the passive FTLE, 
            (b,e) the aFTLE with respect to $\rho \mathbf{u}$ and (c,f) the 
            aFTLE with respect to $\bm{\omega}$ in a cross-sectional
            plane at $x/h = 2\pi$. The integration interval is for all cases between $t_0=0$ and $t_1=3.75$. 
            The panels (d-f) magnify the region denoted with a rectangle in panels (a-c). }
  \label{fig:channel_lFTLE}
\end{figure}
\begin{figure}
  \includegraphics[]{./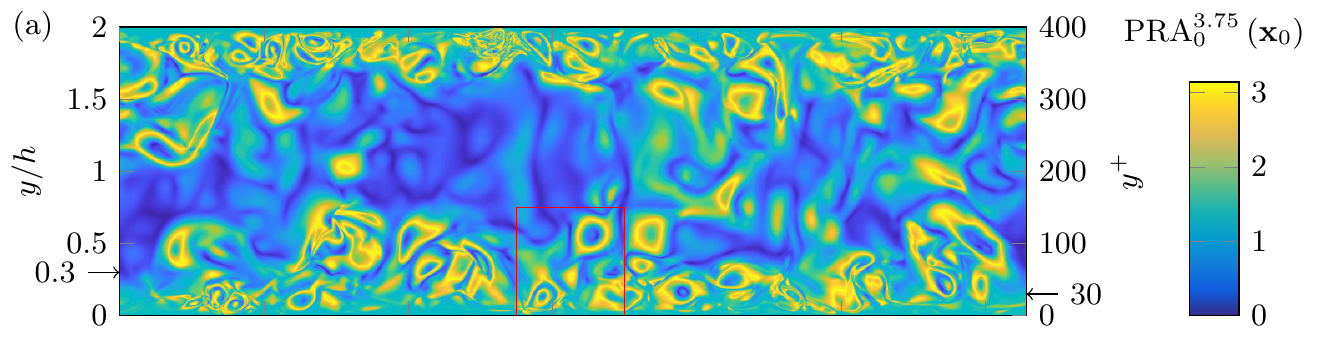}\\
  \includegraphics[]{./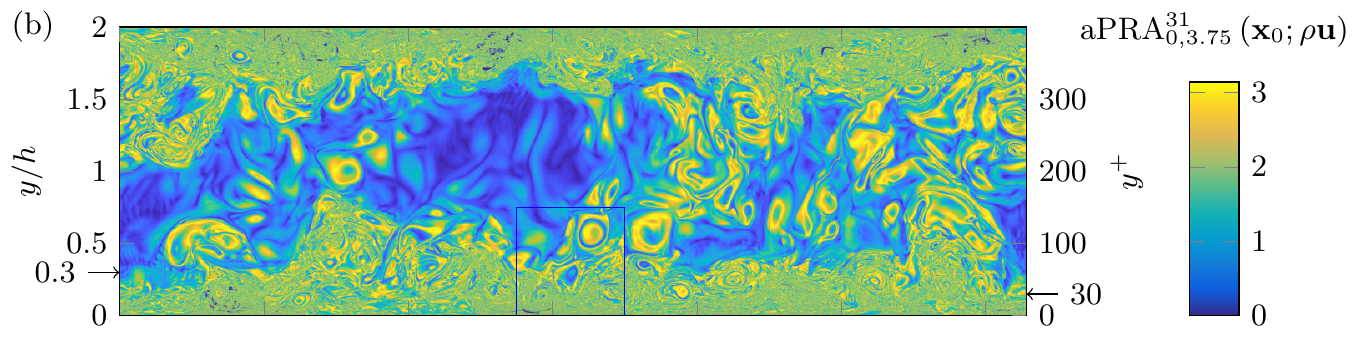}\\
  \includegraphics[]{./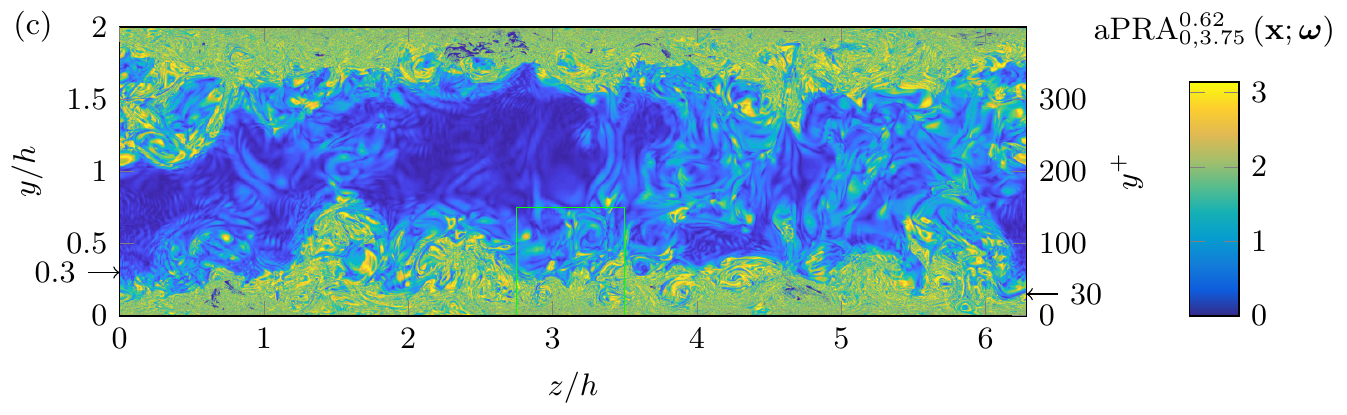}\\
  \hfill\includegraphics[]{./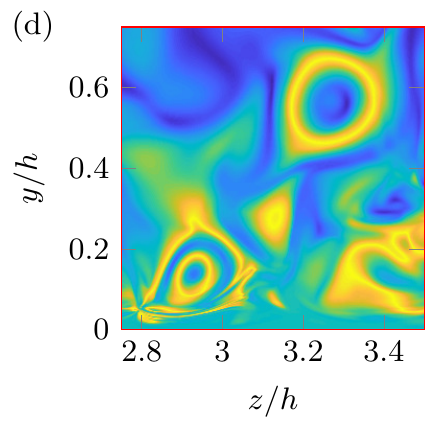}\hfill\includegraphics[]{./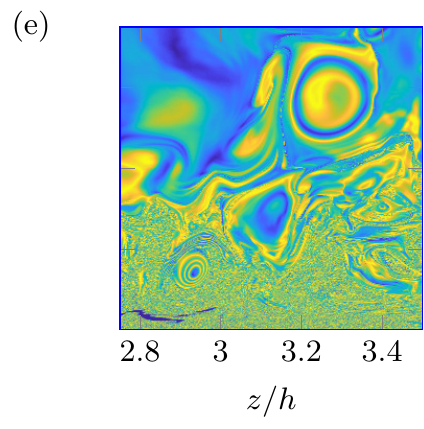}\hfill\includegraphics[]{./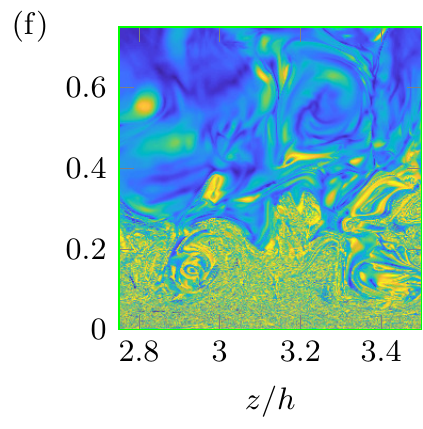}\hfill
  \caption{ Comparison between (a,d) the passive PRA, 
            (b,e) the aPRA with respect to $\rho \mathbf{u}$ and (c,f) the 
            aPRA with respect to $\bm{\omega}$ in a cross-sectional
            plane at $x/h = 2\pi$. The integration interval is for all cases between $t_0=0$ and $t_1=3.75$. 
            The panels (d-f) magnify the region denoted with a rectangle in panels (a-c).}
  \label{fig:channel_lPRA}
\end{figure}
Figures \ref{fig:channel_lFTLE} and \ref{fig:channel_lPRA} show aFTLE and aPRA computed for momentum- and vorticity-based material barriers in a $\left(y-z\right)$ cross-section located at $x=2 \pi h$ and compare them against their passive variants. The integration interval is for all cases between $t_0=0$ and $t_1=3.75$ which corresponds to a time interval of 750 viscous units. The figures clearly show that some features of the Eulerian active barriers discussed in \S\ref{sec:i3d}, such as the spiralling or closed patterns of $\mathrm{aFTLE}_{0,0}^{31}\left(\mathbf{x}, \rho \mathbf{u} \right)$, do have a material character, since they persist almost unchanged over the temporal interval which we have considered.
Examples are shown in the magnification of figures \ref{fig:channel_lFTLE}(e) and \ref{fig:channel_lPRA}(e), showing promise for active LCS diagnostics in studying the lifetime of vortical structures in wall-bounded turbulence (Quadrio \& Luchini 2003).  In the vicinity of the wall, characterised by the strong intermittent turbulent events rapidly evolving with the viscous timescale, less detail is visible in the barriers, due to the lack of material coherence for the considered time frame. Consistent with the general principle discussed in section \ref{relationship between active and passive LCS diagnostics}, passive and active LCS diagnostics tend to highlight the same vortical regions, but tend to differ in the mixing regions surrounding the vortices. As in our 2D turbulence example, while the vorticity-based aFTLE and aPRA plots show a major enhancement over passive FTLE and PRA, some of their details are less clearly defined in comparison to their momentum-based counterparts. Again, this is due to the additional spatial differentiation involved in computing active LCS diagnostics for the vorticity compared to the same computation for the linear momentum.  


\begin{figure}
  \centering
  \includegraphics[]{./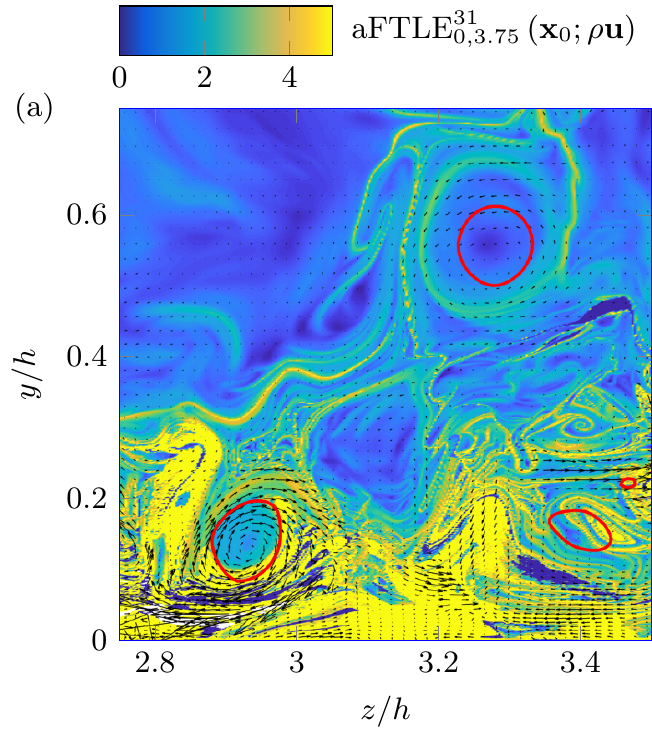}\includegraphics[]{./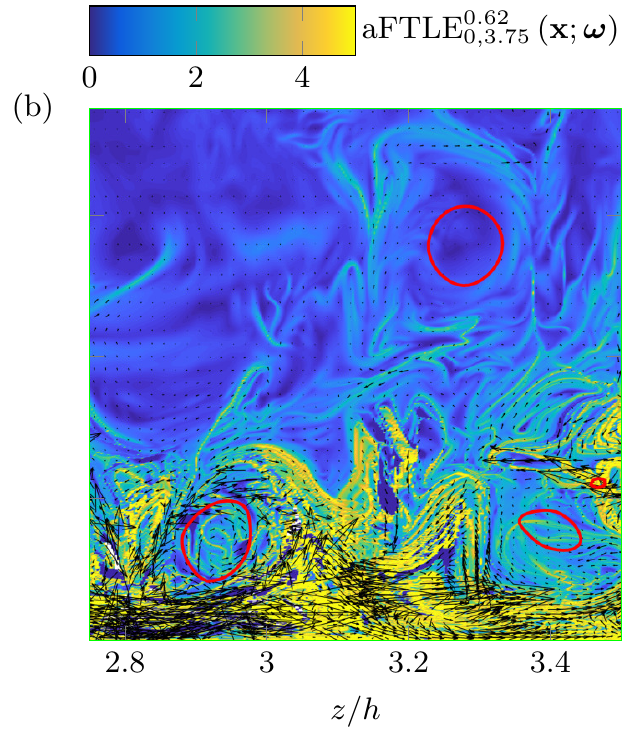}\\
  \caption{Material active barriers of (a) $\rho \mathbf{u}$ and (b) $\bm{\omega}$ at $t=0$ in a cross-sectional
           plane located at $x/h = 2\pi$ and for an integration interval between $t_0=0$ and $t_1=3.75$. 
           The colormap shows the respective aFTLE fields, the vectors show the cross-sectional 
           components of the underlying active barrier field, while the red lines are level-set curves $\lambda_{2}^{+}(\mathbf{x},t)=-0.015$ 
           of the Eulerian vortex identification criterion proposed by Jeong et al. (1997) at the instant $t_0=0$.}
  \label{fig:lbfield}
\end{figure}
Figure \ref{fig:lbfield} shows the material active barrier vector field of (a) $\rho\mathbf{u}$ and (b) $\bm{\omega}$
superimposed to the respective aFTLEs already shown in figure \ref{fig:channel_lFTLE}(e-f).  
Level-set curves of the $\lambda_{2}(\mathbf{x},t)=-0.015$ field at the temporal instant $t=t_0=0$ are also shown. It is confirmed that with the present definition of active barriers, the active vector field is tangent to the detected barriers visualised here as aFTLE ridges, in a temporally-averaged sense. Figure \ref{fig:lbfield}(a) shows that closed $\mathrm{aFTLE}_{t_0,t_1}^{s}\left(\mathbf{x}_0, \rho\mathbf{u} \right)$ ridges can be in some instances close to level-set curves of the $\lambda_{2}$ criterion, as for instance at $(y/h, z/h) \approx (0.15,2.9)$ and $(0.55,3.3)$. In this sense, the material momentum barriers can be utilised as means to objectively identify vortical structures which play a role in inhibiting momentum transport and preserve material coherence over the considered time frame, without resorting to arbitrary choices of level-sets of $\lambda_{2}$. In the present example, we find streamwise vortices that are bounded by active momentum barriers for a time period of 750 viscous units.

\section{Conclusions}

We have developed an approach to identify coherent structure boundaries
as material surfaces that minimize the diffusive transport of active
physical quantities intrinsic to the flow. We have also argued that
instantaneous limits of these active Lagrangian transport barriers
provide objective Eulerian barriers to the short-term redistribution
of active vector fields. 

Our analysis shows that in incompressible Navier\textendash Stokes
flows, active material barriers to transport evolve from structurally
stable 2D stream-surfaces of an associated steady vector field, the
barrier vector field $\mathbf{b}_{t_{0}}^{t_{1}}(\mathbf{x}_{0})$.
This vector field is the time-averaged pull-back of the viscous
terms in the evolution equation of the active vector field. For $t_{0}=t_{1}$,
instantaneous limits of these material barriers to linear momentum
are surfaces to which the viscous forces acting on the fluid are tangent.
Similarly, instantaneous limits to active barriers to vorticity are
surfaces tangent to the curl of viscous forces.

We have obtained that material and Eulerian active barriers in
3D unsteady Beltrami flows coincide exactly with invariant manifolds
of the Lagrangian particle motion. This is noteworthy because all
prior LCS methods applied to Beltrami flows would locate these barriers,
at best, approximately for large enough extraction times, rather than
exactly from arbitrary short extraction times, as the present approach
does. The reason is that the present approach to material barriers
does not rely on quantifying fluid particle separation or lack thereof,
as purely advective LCS-approaches do. Instead, this approach seeks
material surfaces that are most resistant to the diffusive transport
of intrinsic physical quantities, such as momentum and vorticity.
This dynamical extremum problem can be solved without the need for
fluid particles to show large separation.

We have argued and numerically verified that, in comparison to their purely advective versions, active LCS reveal  coherent vortices in much larger detail. Indeed, we have found
the momentum-based aFTLE and the aPRA to outperform the purely advective
FTLE and PRA significantly on vortices of the same finite-time velocity data set.
In contrast, active and passive barriers are expected to differ significantly in mixing regions surrounding those vortices, as we have indeed found in our 2D and 3D turbulence examples.
The refinement of vortical regions from vorticity-based aFTLE and aPRA is also tangible
but more modest, as that computation involves one more spatial derivative
and hence is more prone to numerical error. In addition, \textbf{$\mathrm{aFTLE}_{t_{0},t_{1}}^{s}$}
and $\mathrm{aPRA}_{t_{0},t_{1}}^{s}$ converge as the barrier-time
$s$ increases, whereas $\mathrm{FTLE}_{t_{0}}^{t}$ and $\mathrm{PRA}_{t_{0}}^{t}$
generally do not converge in unsteady flows as the physical time $t$
increases. The convergence of \textbf{$\mathrm{aFTLE}_{t_{0},t_{1}}^{s}$}
and $\mathrm{aPRA}_{t_{0},t_{1}}^{s}$ enables a scale-dependent exploration
of active barriers, with smaller spatial scales gradually revealed
under increasing barrier times $s$. 

A further advantage of the dynamically
active approach to transport-barrier analysis is that an active Poincar\'e
map (i.e., Poincar\'e map applied to the barrier equations $\mathbf{x}_{0}^{\prime}=\mathbf{b}_{t_{0}}^{t_{1}}(\mathbf{x}_{0})$)
is a well-defined, time-independent map that can be iterated for visualization
if barrier trajectories return to the Poincar\'e section. In contrast,
no time-independent return map can be defined and iterated for the
unsteady fluid-particle equation of motion $\dot{\mathbf{x}}=\mathbf{u}(\mathbf{x},t)$,
because each subsequent return to a Poincar\'e section is governed by
a different map.

The 2D versions of our results provide the simplest available objective
LCS criteria, identifying barriers to active transport as level curves
of appropriate Hamiltonians that are functions of the scalar vorticity.
This follows from the fact that the 2D barrier equations turn out to be autonomous, planar Hamiltonian systems, and hence are, in principle, integrable. We have found, however, that active LCS diagnostics applied to these autonomous but highly complex planar Hamiltonian systems
give a more robust and detailed localization of coherent vortex boundaries
than level-curve identification of their numerically generated Hamiltonians. 

Eulerian active barriers (identified from the steady dynamical
system $\mathbf{x}^{\prime}=\mathbf{b}_{t}^{t}(\mathbf{x})$) provide
an objective and parameter-free alternative to currently used, observer-dependent
flow-visualization tools, such as level surfaces of the velocity norm,
of the velocity components and of the $Q$-, $\Delta$- and $\lambda_{2}$-fields.
Undoubtedly, the implementation of the latter tools is appealingly
simple via automated level-surface visualization packages. Yet such
evolving surfaces are observer-dependent and non-material, thereby
lacking any experimental verifiability. In addition, beyond the simplicity
of generating coherent structure boundaries as level sets of these
scalar fields, the physical meaning of such level sets remains unclear.

The objectivity of the barrier vector field $\mathbf{b}_{t_{0}}^{t_{1}}$ implies that any Galilean-invariant vortex criterion mentioned in the Introduction becomes automatically objective when applied to $\mathbf{b}_{t_{0}}^{t_{1}}$, as opposed to the velocity field $\mathbf{u}$. This fact does not eliminate the heuristic nature of these criteria but at least makes the structures they return independent of the observer. The physical rationale for applying vortex- or LCS-criteria to the barrier vector field instead of the velocity field is that active barriers have a well-defined and readily quantifiable role in the viscous force field due to their transport-minimizing property, even over infinitesimally short times.  In contrast, coherence structures in the velocity field can be approached from a multitude of different principles, most of which are qualitative (i.e., lack a well-defined optimization argument) and require substantial fluid particle separation to be effective.

A physical take-away message from our 3D channel flow example is that Eulerian active barriers for momentum (or vorticity) visualize the instantaneous landscape of the viscous forces, which are everywhere tangent to those barriers and hence induce zero instantaneous diffusive transport of momentum (or vorticity) across them. Several Lagrangian active barriers are small perturbations of their Eulerian counterparts, suggesting that those Eulerian barriers have a strong material character over a significant period of time. As a second notable finding, several (but not all) momentum barriers are well approximated by quasi-streamwise tubular $\lambda_2$ level surfaces (often called streamwise vortices),  which are considered crucial elements in the regeneration cycle of near-wall turbulence (Hamilton, Kim \& Waleffe 1995; Jimenez \& Pinelli 1999). Active momentum barriers, therefore, offer a threshold-independent identification of the intrinsic, observer-independent subset of  $\lambda_2$-vortices. Such objective streamwise vortices  are bounded by material surfaces across which viscous momentum transport is minimal, while vorticity diffuses across them. A third physical finding from our analysis is that the low-Reynolds-number turbulent channel flow considered here contains active coherent structure boundaries that penetrate and span the bulk flow. Notably, active barriers spanning across the entire channel height are present in some regions of the channel cross section but absent in others. This indicates possible large-scale coherent features in this specific flow that deserve further investigation.

Finally, the objective momentum-barrier theory described here should
be able to contribute to the understanding and identification of various
turbulent flow structures that have only been described so far in
an observer- and threshold-dependent fashion under a number of assumptions
and approximations. Specifically, our future work will seek to uncover
experimentally identifiable material signatures of uniform momentum
zones (Adrian, Meinhart \& Tomkins 2000, De Silva Hutchins \& Marusic
2016) and turbulent superstructures (Marusic, Mathis \& Hutchins 2018
and Pandey, Scheel \& Schumacher 2018) based on the notion of diffusive
momentum barriers developed in this paper.

\vskip 1 true cm

\textbf{Acknowledgment}

The authors acknowledge financial support from Priority Program SPP
1881 (Turbulent Superstructures) of the German National Science Foundation
(DFG). We are grateful to Prof. Mohammad Farazmand for providing us
with the 2D turbulence data set he originally generated for the analysis
in Katsanoulis et al. (2019). We are also grateful to Prof. Charles
Meneveau for his helpful comments and for pointing out the reference
Meyers \& Meneveau (2013) to us. Finally, G.H. is thankful to Prof. Andrew Majda for his inspirational remarks, made about 25 years ago,
on the importance of dynamically active transport relative to purely
advective transport.
\vskip 0.5 true cm
The Authors report no conflict of interest.

\appendix

\section{A motivating example}

A simple example underlying the challenges of defining barriers to
momentum and vorticity transport is a planar, unsteady Navier\textendash Stokes
vector field representing an unsteady, decaying channel-flow between
two walls at $x_{2}=\pm\frac{1}{4}$ (see. Fig. \ref{fig: steady 2D channel flow example-0}).
\begin{figure}
	\centering{}\includegraphics[width=0.4\textwidth]{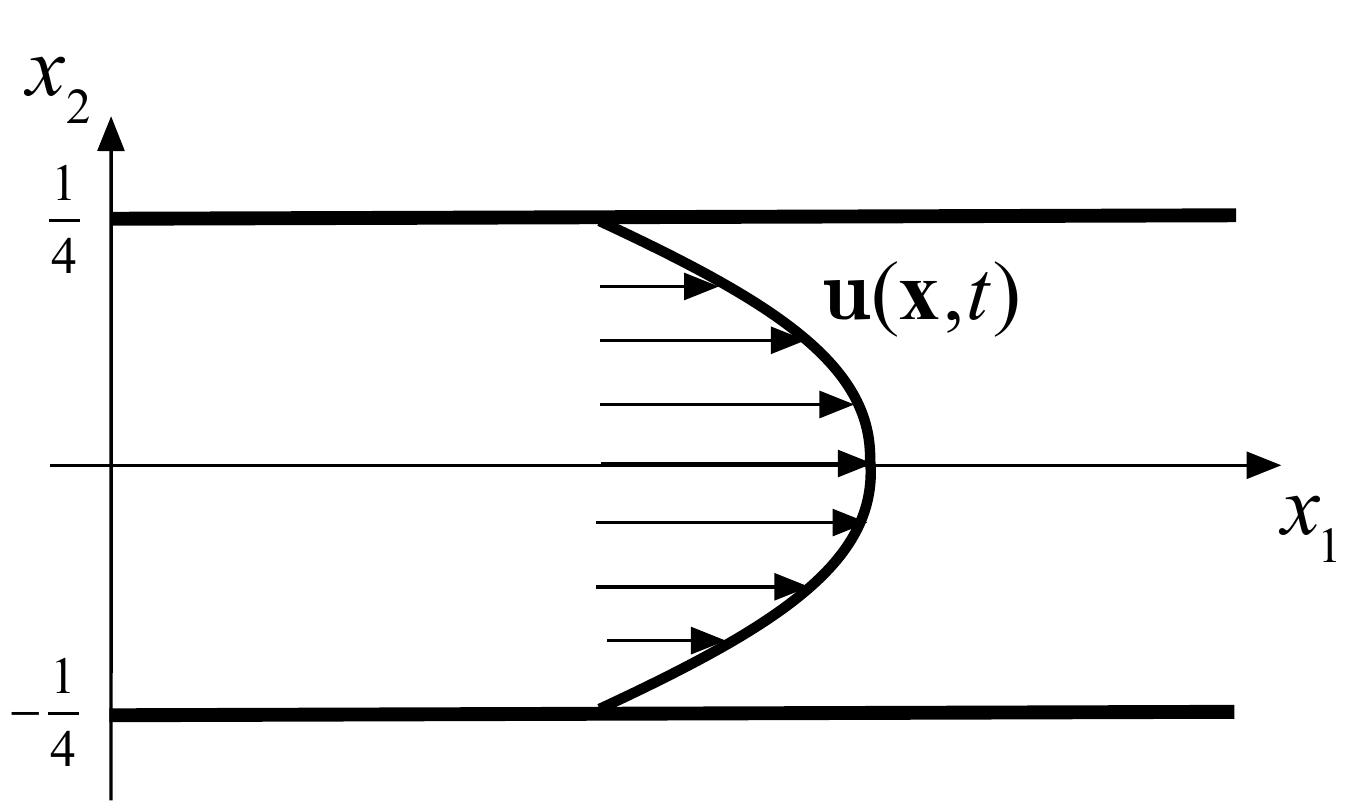}\caption{Decaying planar Navier\textendash Stokes flow in a channel with no-slip
		walls at $x_{2}=\pm\frac{1}{4}$.}
	\label{fig: steady 2D channel flow example}
\end{figure}
The corresponding velocity and scalar vorticity fields are
\begin{equation}
\mathbf{u}(\mathbf{x},t)=e^{-4\pi^{2}\nu t}\left(a\cos2\pi x_{2},0\right),\qquad\omega(\mathbf{x},t)=2\pi ae^{-4\pi^{2}\nu t}\sin2\pi x_{2}.\label{eq:horizontal shear jet}
\end{equation}
Normalized by their instantaneous global maxima, the normalized linear
momentum $\rho\mathbf{u}^{0}=\left(\cos2\pi x_{2},0\right)$ and vorticity
$\omega^{0}=\sin2\pi x_{2}$ are both constant in time. There is,
therefore, no structural reorganization in the topology of the momentum
and vorticity fields. Instead, for all times, horizontal lines act
as level curves for both the horizontal momentum and the vorticity,
forming material barriers between higher and lower values of these
scalars. Indeed, the theory developed in this paper identifies all
horizontal lines as materiel barriers to the diffusive transport of
both momentum and vorticity (see Example 1 of section \ref{subsec:active barriers in 2D-Navier=002013Stokes-flows}).

Haller et al. (2019) obtain an ODE family describing the time $t_{0}$
position of uniform barriers to the diffusive (passive) transport
of the scalar vorticity over a finite time interval $[t_{0},t_{1}].$
With the notation $y_{0}=2\pi x_{2}$, with the constants
\begin{equation}
A=\frac{a^{2}}{\nu}\sin2y_{0}\left[\frac{1}{2}e^{-2\nu t_{1}}+\frac{1}{2}e^{-2\nu t_{0}}-e^{-\nu\left(t_{1}+t_{0}\right)}\right],\qquad B=a\left(e^{-\nu t_{0}}-e^{-\nu t_{1}}\right),
\end{equation}
and with the vector field 
\begin{equation}
\bar{\mathbf{q}}_{t_{0}}^{t_{1}}(\mathbf{x}_{0})=\frac{1}{2\nu\left(t_{1}-t_{0}\right)}\left(\begin{array}{c}
A\sin2y_{0}\\
B\cos y_{0}
\end{array}\right),
\end{equation}
the ODE family describing the time $t_{0}$ position of uniform constrained
barriers is given by
\begin{equation}
\mathbf{x}_{0}^{\prime}=\frac{1}{2\nu\left(t_{1}-t_{0}\right)}\left\{ \frac{\sqrt{\left|\mathbf{\bar{q}}_{t_{0}}^{t_{1}}\left(\mathbf{x}_{0}\right)\right|^{2}-\mathcal{T}_{0}^{2}}}{\left|\mathbf{\bar{q}}_{t_{0}}^{t_{1}}\left(\mathbf{x}_{0}\right)\right|^{2}}\left(\begin{array}{c}
A\sin2y_{0}\\
B\cos y_{0}
\end{array}\right)+\frac{\mathcal{T}_{0}}{\left|\mathbf{\bar{q}}_{t_{0}}^{t_{1}}\left(\mathbf{x}_{0}\right)\right|^{2}}\left(\begin{array}{c}
B\cos y_{0}\\
-A\sin2y_{0}
\end{array}\right)\right\} \label{eq:2ODE}
\end{equation}
for some value of the transport density constant $\mathcal{T}_{0}\in\mathbb{R}$.
For the choice
\begin{equation}
\mathcal{T}_{0}=\left|\mathbf{\bar{q}}_{t_{0}}^{t_{1}}\left(\mathbf{x}_{0}\right)\right|_{y_{0}=0}=\frac{B}{2\nu\left(t_{1}-t_{0}\right)},\label{eq:T_0 for jet}
\end{equation}
the ODE \eqref{eq:2ODE} becomes
\begin{align}
\mathbf{x}_{0}^{\prime} & \vert_{y_{0}=0}=\frac{B}{2\nu\left(t_{1}-t_{0}\right)}\left(\begin{array}{c}
B\\
0
\end{array}\right)\parallel\mathbf{\bm{\Omega}}\mathbf{\bar{q}}_{t_{0}}^{t_{1}}\left(\mathbf{x}_{0}\right)\vert_{y_{0}=0},\label{eq:jet parallel}
\end{align}
showing that $x_{02}=0$ is an invariant line for equation \eqref{eq:2ODE}
for the parameter value $\mathcal{T}_{0}$ selected as in \eqref{eq:T_0 for jet}.
Consequently, the center line of the channel at $x_{02}=0$ is a uniform,
constrained barrier to vorticity-diffusion along which the pointwise
diffusive transport of vorticity is equal to \eqref{eq:2ODE}. Choosing
the constant $\mathcal{T}_{0}=0$ in eq. \eqref{eq:2ODE} gives
\begin{equation}
\mathbf{x}_{0}^{\prime}=\frac{1}{2\nu a\left(t_{1}-t_{0}\right)\left|\mathbf{\bar{q}}_{t_{0}}^{t_{1}}\left(\mathbf{x}_{0}\right)\right|}\left(\begin{array}{c}
A\sin4\pi x_{02}\\
B\cos2\pi x_{02}
\end{array}\right),
\end{equation}
for which $x_{02}=\pm1/4$ are invariant lines, and hence the channel
walls at $x_{02}=\pm1/4$ are perfect constrained barriers to diffusive
transport. Therefore, the variational theory of Haller et al. (2019)
identifies the center line of the channel at $x_{2}=0$ and the upper
and lower walls as barriers to vorticity transport, but finds an infinite
family of non-straight barrier curves for the rest of the channel,
given by general integral curves of the vector field family \eqref{eq:2ODE}
(see Fig.  \ref{fig: steady 2D channel flow example-0}). Only in
the limit of $t_{1}\to\infty$ do the latter, curved variational barriers
align with horizontal lines, which is suboptimal, given that these
horizontal barriers prevail already in any finite-time observation
of the vorticity field. The objective of the present paper is to strengthen
these results by considering vorticity transport as an active, vectorial
transport problem consistent with the 3D Navier\textendash Stokes
equation, rather than a passive scalar transport problem in the 2D
Navier\textendash Stokes equation.

In contrast, Meyers \& Meneveau (2013) define a momentum flux vector
field $\bar{\mathbf{F}}_{m}^{\boldsymbol{\zeta}}\left(\mathbf{x},t\right)$
with respect to a unit reference direction vector $\boldsymbol{\zeta}\in\mathbb{R}^{3}$
as
\begin{equation}
\bar{\mathbf{F}}_{m}^{\boldsymbol{\zeta}}=\left(\bar{\mathbf{u}}\cdot\boldsymbol{\zeta}\right)\bar{\mathbf{u}}+\overline{\mathbf{u}^{\prime}\otimes\mathbf{u}^{\prime}}\boldsymbol{\zeta}-2\nu\bar{\mathbf{S}}\boldsymbol{\zeta},\label{eq:M-M momentum flux vector}
\end{equation}
where overbar refers to Reynolds-averaging, prime refers to the fluctuating
part of the velocity field, $\otimes$ denotes the dyadic product
and $\mathbf{S}=\frac{1}{2}\left[\boldsymbol{\nabla}\mathbf{u}+\mathbf{\left(\boldsymbol{\nabla}\mathbf{u}\right)}^{T}\right]$
is the rate-of-strain tensor. The flux vector $\bar{\mathbf{F}}_{m}^{\boldsymbol{\zeta}}$
is obtained by Meyers \& Meneveau (2013) after averaging the unsteady
terms out of the Navier\textendash Stokes equations, projecting these
averaged equations into the $\boldsymbol{\zeta}$ direction, identifying
all terms that are divergences of some vector field in these projected
equations, and summing up all three vector fields identified in this
fashion. For the laminar velocity field \eqref{eq:horizontal shear jet},
we have $\bar{\mathbf{u}}\equiv\mathbf{u}$, $\bar{\mathbf{S}}\equiv\mathbf{S}$,
$\mathbf{u}^{\prime}\equiv\mathbf{0}$, and $\bar{\mathbf{F}}_{m}^{\boldsymbol{\zeta}}\equiv\mathbf{F}_{m}^{\boldsymbol{\zeta}}$.
Following the choice of Meyers \& Meneveau (2013) for planar parallel
shear flows, we select $\boldsymbol{\zeta}=\left(1,0\right)^{T}$.
Using the relation 
\begin{equation}
\mathbf{S}=ae^{-4\pi^{2}\nu t}\left(\begin{array}{cc}
0 & -\pi\sin2\pi x_{2}\\
-\pi\sin2\pi x_{2} & 0
\end{array}\right),
\end{equation}
we obtain from eq. \eqref{eq:M-M momentum flux vector} the momentum-flux
vector
\begin{align}
\mathbf{F}_{m}^{\boldsymbol{\zeta}}&=a^{2}e^{-8\pi^{2}\nu t}\left(\begin{array}{c}
\cos^{2}2\pi x_{2}\\
0
\end{array}\right)-2\nu ae^{-4\pi^{2}\nu t}\left(\begin{array}{c}
0\\
-\pi\sin2\pi x_{2}
\end{array}\right)\nonumber\\
&=ae^{-4\pi^{2}\nu t}\left(\begin{array}{c}
ae^{-4\pi^{2}\nu t}\cos^{2}2\pi x_{2}\\
2\nu\pi\sin2\pi x_{2}
\end{array}\right).
\end{align}

The $x_{2}=0$ line is an integral curve of $\mathbf{F}_{m}^{\boldsymbol{\zeta}}$,
correctly conveying the fundamental role of the centerline of the
channel in blocking linear momentum transfer. All other integral curves
of $\mathbf{F}_{m}^{\boldsymbol{\zeta}}\left(\mathbf{x},t\right)$,
however, curl either upwards or downwards, running eventually into
the two horizontal walls perpendicularly. These curves turn very slowly
towards to channel walls for small values of the viscosity. For easy
illustration over a shorter horizontal domain, we select the time
$t^{*}=-\frac{1}{4\pi^{2}\nu}\log\left[2\nu\pi/a\right]$ so that
$\mathbf{F}_{m}^{\boldsymbol{\zeta}}$ becomes
\begin{equation}
\mathbf{F}_{m}^{\boldsymbol{\zeta}}\left(\mathbf{x},t^{*}\right)=2\nu\pi ae^{-4\pi^{2}\nu t^{*}}\left(\begin{array}{c}
\cos^{2}2\pi x_{2}\\
\sin2\pi x_{2}
\end{array}\right),
\end{equation}
whose integral curves are shown in Fig.  \ref{fig: steady 2D channel flow example-0}).
These integral curves do not delineate observable structures governing
the rearrangement of momentum within this flow. In the limit of $t\to\infty,$
they limit on vertical lines.

\section{Reynolds transport theorem and the convective flux through the boundary
	of a material volume }

The Reynolds transport theorem for an arbitrary vector field $\mathbf{f}(\mathbf{x},t)$
and an arbitrary, time-varying volume $V(t)$ in a velocity field
$\mathbf{u}(\mathbf{x},t)$ is of the form
\begin{equation}
\frac{d}{dt}\int_{V(t)}\mathbf{f}\,dV=\int_{V(t)}\frac{\partial\mathbf{f}}{\partial t}\,dV+\int_{\partial V(t)}\mathbf{f}\left(\mathbf{u}_{\partial V(t)}\cdot\mathbf{n}\right)\,dA.\label{eq:Reynolds}
\end{equation}
Here $\mathbf{u}_{\partial V(t)}$ denotes the local velocity of the
boundary surface $\partial V(t)$ of $V(t)$, therefore we have $\mathbf{u}_{\partial V(t)}=\mathbf{u}$
when $V(t)$ is a material volume. The identity \eqref{eq:Reynolds}
merely gives a formal partition of $\frac{d}{dt}\int_{V(t)}\mathbf{f}\,dV$
into two terms, yet it is tempting to conclude that the second term,
$\int_{\partial V(t)}\mathbf{f}\left(\mathbf{u}\cdot\mathbf{n}\right)\,dA$,
is the convective flux of $\mathbf{f}$ through the boundary $\partial V(t)$
of $V(t)$. We will now illustrate on a specific example that this
is generally not the case.

Consider the scalar version of \ref{eq:Reynolds} for a passive scalar
field $c\left(\mathbf{x},t\right)$:
\begin{equation}
\frac{d}{dt}\int_{V(t)}c\,dV=\int_{V(t)}\frac{\partial c}{\partial t}\,dV+\int_{\partial V(t)}c\left(\mathbf{u}\cdot\mathbf{n}\right)\,dA.\label{eq:Reynolds6}
\end{equation}
Assume that \textbf{$\mathbf{u}$} is incompressible and $c$ is
a passive scalar field that is a solution of the advection-diffusion
equation
\begin{equation}
\frac{Dc}{Dt}=\partial_{t}c+\boldsymbol{\nabla}c\cdot\mathbf{u}=\kappa\Delta c,\label{eq:adv diff for Reynolds section}
\end{equation}
with diffusivity $\kappa>0$. The surface integral in \eqref{eq:Reynolds6}
gives a formal convective flux for the passive scalar field $\mathbf{c}$
across $\partial V(t)$ even though no convective scalar transport
can occur through the material surface $\partial V(t)$. 

The (purely diffusive) flux of $c$ out of $V(t)$ can be computed
directly as
\begin{equation}
\frac{d}{dt}\int_{V(t)}c\,dV=\int_{V(t_{0})}\frac{Dc}{Dt}\,dV_{0}=\int_{V(t_{0})}\kappa\Delta c\,dV_{0}=\kappa\int_{V(t)}\boldsymbol{\nabla}\cdot\left(\boldsymbol{\nabla}c\right)\,dV=\int_{\partial V(t)}\kappa\mathbf{\bm{\nabla}}c\cdot\mathbf{n}\,dA,
\end{equation}
showing that the vector describing the correct pointwise diffusive
flux vector of the passive scalar $c(\mathbf{x},t)$ through the material
surface $\partial V(t)$ is the well-known flux vector, $\kappa\mathbf{\bm{\nabla}}c$
rather than the vector $c\mathbf{u}$ appearing in the surface integral
term in \eqref{eq:Reynolds6}. This is because the volume integral
term $\int_{V(t)}\frac{\partial c}{\partial t}\,dV$ on the right-hand
side of the transport theorem \eqref{eq:Reynolds6} also contributes
to the flux through $\partial V(t)$. Indeed, using eq. \eqref{eq:adv diff for Reynolds section},
we can rewrite this term as
\begin{align}
\int_{V(t)}\frac{\partial c}{\partial t}\,dV & =\int_{V(t)}\left(\kappa\Delta c-\boldsymbol{\nabla}c\cdot\mathbf{u}\right)\,dV=\int_{V(t)}\mathbf{\bm{\nabla}}\cdot\left(\kappa\boldsymbol{\nabla}c-c\mathbf{u}\right)\,dV\nonumber \\
& =\int_{\partial V(t)}\left(\kappa\boldsymbol{\nabla}c-c\mathbf{u}\right)\cdot\mathbf{n}dA.
\end{align}
Therefore, $\int_{V(t)}\frac{\partial c}{\partial t}\,dV$ yields
a nonzero flux through the boundary and a part of this flux cancels
out the second integral in \eqref{eq:Reynolds6} that incorrectly
suggests nonzero convective flux for $c$. 

More generally, the partition of $\frac{d}{dt}\int_{V(t)}\mathbf{f}\,dV$
in \eqref{eq:Reynolds} into two terms is somewhat arbitrary from
the point of view of transport through the boundary of a material
volume. Indeed, the volume integral on the right-hand-side of \eqref{eq:Reynolds}
will also contribute to the flux of \textbf{$\mathbf{f}$ }through
the boundary of $V(t)$.

\section{Poofs of Theorems \ref{thm:2D momentum barriers} and \ref{thm:2D vorticity barriers}}

\subsection{Proof of Theorem \ref{thm:2D momentum barriers}}

For a Navier-Stokes velocity field $\mathbf{u}$ of the form \eqref{eq:2D ansatz}-\eqref{eq:w is the vorticity},
we have 
\begin{equation}
\Delta\mathbf{u}\left(\mathbf{x},t\right)=\left(\begin{array}{c}
\Delta_{\hat{\mathbf{x}}}\hat{\mathbf{u}}\\
\Delta_{\hat{\mathbf{x}}}\hat{\omega}
\end{array}\right).
\end{equation}
Therefore, 
\begin{align}
\left(\mathbf{F}_{t_{0}}^{t}\right)^{*}\Delta\mathbf{u}(\mathbf{x}_{0}) & =\left[\nabla_{\mathbf{x}_{0}}\mathbf{F}_{t_{0}}^{t}\left(\mathbf{x}_{0}\right)\right]^{-1}\left(\begin{array}{c}
\Delta_{\hat{\mathbf{x}}}\hat{\mathbf{u}}\\
\Delta_{\hat{\mathbf{x}}}\hat{\omega}(\hat{\mathbf{x}},t)
\end{array}\right)\nonumber \\
& =\left(\begin{array}{cc}
\boldsymbol{\nabla}_{\mathbf{\hat{\mathbf{x}}}}\hat{\mathbf{F}}_{t}^{t_{0}}\left(\mathbf{\hat{\mathbf{x}}}\right) & \mathbf{0}\\
\int_{t}^{t_{0}}\boldsymbol{\nabla}_{\mathbf{\hat{\mathbf{x}}}}\hat{\omega}\left(\hat{\mathbf{F}}_{t}^{s}\left(\hat{\mathbf{x}}\right),s\right)ds & 1
\end{array}\right)\left(\begin{array}{c}
\Delta_{\hat{\mathbf{x}}}\hat{\mathbf{u}}\\
\Delta_{\hat{\mathbf{x}}}\hat{\omega}(\hat{\mathbf{x}},t)
\end{array}\right)\nonumber \\
& =\left(\begin{array}{c}
\boldsymbol{\nabla}_{\mathbf{\hat{\mathbf{x}}}}\hat{\mathbf{F}}_{t}^{t_{0}}\left(\mathbf{\hat{\mathbf{x}}}\right)\Delta_{\hat{\mathbf{x}}}\hat{\mathbf{u}}\\
\int_{t}^{t_{0}}\boldsymbol{\nabla}_{\mathbf{\hat{\mathbf{x}}}}\hat{\omega}\left(\hat{\mathbf{F}}_{t}^{s}\left(\hat{\mathbf{x}}\right),s\right)ds\cdot\Delta_{\hat{\mathbf{x}}}\hat{\mathbf{u}}+\Delta_{\hat{\mathbf{x}}}\hat{\omega}(\hat{\mathbf{x}},t)
\end{array}\right).\label{eq:pullback for 2D flows}
\end{align}
With these expressions, the barrier equation \eqref{eq:barrier equation}
becomes

\begin{align}
\hat{\mathbf{x}}_{0}^{\prime} & =\nu\rho\overline{\left(\hat{\mathbf{F}}_{t_{0}}^{t}\right)^{*}\Delta_{\hat{\mathbf{x}}}\hat{\mathbf{u}}(\hat{\mathbf{x}}_{0})},\nonumber \\
x_{03}^{\prime} & =\nu\rho A(\mathbf{\hat{\mathbf{x}}}_{0},t_{1},t_{0}),\label{eq:momentum barrier eq for 2D flow}
\end{align}
for an appropriate smooth function $A(\mathbf{\hat{\mathbf{x}}}_{0},t_{1},t_{0}).$
Two-dimensional invariant manifolds of this dynamical system are of
the form $\left\{ \hat{\mathbf{x}}_{0}(s)\right\} _{s\in\mathbb{R}}\times\mathbb{R}$,
i.e., topological products of trajectories of the $\hat{\mathbf{x}}_{0}$-component
of the \eqref{eq:momentum barrier eq for 2D flow-separable case}
with a line in the $x_{03}$ direction. As trajectories $\left\{ \hat{\mathbf{x}}_{0}(s)\right\} _{s\in\mathbb{R}}$
are contained in the streamlines of the steady 2D velocity field $\overline{\left(\hat{\mathbf{F}}_{t_{0}}^{t}\right)^{*}\Delta_{\hat{\mathbf{x}}}\hat{\mathbf{u}}(\hat{\mathbf{x}}_{0})}$,
Eulerian barriers to momentum transport are, structurally stable
streamlines of the vector field $\mathbf{\Delta_{\hat{\mathbf{x}}}\hat{\mathbf{u}}}(\hat{\mathbf{x}},t)$.
By incompressibility, we have
\begin{equation}
\Delta_{\hat{\mathbf{x}}}\hat{\mathbf{u}}=\left(\begin{array}{c}
\partial_{x_{1}x_{1}}^{2}v_{1}+\partial_{x_{2}x_{2}}^{2}v_{1}\\
\partial_{x_{1}x_{1}}^{2}v_{2}+\partial_{x_{2}x_{2}}^{2}v_{2}
\end{array}\right)=\left(\begin{array}{c}
-\partial_{x_{1}x_{2}}^{2}v_{2}+\partial_{x_{2}x_{2}}^{2}v_{1}\\
\partial_{x_{1}x_{1}}^{2}v_{2}-\partial_{x_{1}x_{2}}^{2}v_{1}
\end{array}\right)=\left(\begin{array}{c}
\partial_{x_{2}}\hat{\omega}\\
-\partial_{x_{1}}\hat{\omega}
\end{array}\right),\label{eq:2D velocity Laplacian}
\end{equation}
and hence these streamlines are structurally stable level curves
of the stream function $\hat{\omega}(\boldsymbol{\hat{\mathbf{x}}},t)$,
as claimed.

Using formula \eqref{eq:2D velocity Laplacian} and the canonical
symplectic matrix $\mathbf{J=}\left(\begin{array}{cc}
0 & 1\\
-1 & 0
\end{array}\right)$, we also find that 
\begin{equation}
\Delta_{\hat{\mathbf{x}}}\hat{\mathbf{u}}\left(\hat{\mathbf{F}}_{t_{0}}^{t}\left(\mathbf{x}_{0}\right),t\right)=\mathbf{J}\boldsymbol{\nabla}\omega\left(\hat{\mathbf{F}}_{t_{0}}^{t}\left(\mathbf{x}_{0}\right),t\right)=\mathbf{J}\left[\boldsymbol{\nabla}_{0}\hat{\mathbf{F}}_{t_{0}}^{t}\left(\mathbf{x}_{0}\right)\right]^{-T}\boldsymbol{\nabla}_{0}\hat{\omega}\left(\hat{\mathbf{F}}_{t_{0}}^{t}\left(\mathbf{x}_{0}\right),t\right),
\end{equation}
where $\boldsymbol{\nabla}_{0}\hat{\omega}\left(\hat{\mathbf{F}}_{t_{0}}^{t}\left(\mathbf{x}_{0}\right),t\right)$
denotes the derivative of the Lagrangian vorticity $\omega\left(\hat{\mathbf{F}}_{t_{0}}^{t}\left(\mathbf{x}_{0}\right),t\right)$
with respect to the initial condition $\mathbf{x}_{0}$. This last
equation implies
\begin{align}
\left(\hat{\mathbf{F}}_{t_{0}}^{t}\right)^{*}\Delta_{\hat{\mathbf{x}}}\hat{\mathbf{u}}(\mathbf{x}_{0}) & =\left[\boldsymbol{\nabla}_{0}\hat{\mathbf{F}}_{t_{0}}^{t}\left(\mathbf{x}_{0}\right)\right]^{-1}\Delta_{\hat{\mathbf{x}}}\hat{\mathbf{u}}\left(\hat{\mathbf{F}}_{t_{0}}^{t}\left(\mathbf{x}_{0}\right),t\right)\nonumber \\
& =\left[\boldsymbol{\nabla}_{0}\hat{\mathbf{F}}_{t_{0}}^{t}\left(\mathbf{x}_{0}\right)\right]^{-1}\mathbf{J}\left[\boldsymbol{\nabla}_{0}\hat{\mathbf{F}}_{t_{0}}^{t}\left(\mathbf{x}_{0}\right)\right]^{-T}\boldsymbol{\nabla}_{0}\hat{\omega}\left(\hat{\mathbf{F}}_{t_{0}}^{t}\left(\mathbf{x}_{0}\right),t\right)\nonumber \\
& =\det\left[\boldsymbol{\nabla}_{0}\hat{\mathbf{F}}_{t_{0}}^{t}\left(\mathbf{x}_{0}\right)\right]^{-1}\mathbf{J}\boldsymbol{\nabla}_{0}\hat{\omega}\left(\hat{\mathbf{F}}_{t_{0}}^{t}\left(\mathbf{x}_{0}\right),t\right)=\mathbf{J}\boldsymbol{\nabla}_{0}\hat{\omega}\left(\hat{\mathbf{F}}_{t_{0}}^{t}\left(\mathbf{x}_{0}\right),t\right),
\end{align}
given that $\det\left[\boldsymbol{\nabla}_{0}\hat{\mathbf{F}}_{t_{0}}^{t}\left(\mathbf{x}_{0}\right)\right]^{-1}\equiv1$
holds due to incompressibility. Here, we have also used the fact here for any constants $a,b,c,d\in\mathbb{R}$
satisfying $ad-bc=1,$ we have
\begin{equation}
\left(\begin{array}{cc}
a & b\\
c & d
\end{array}\right)\left(\begin{array}{cc}
0 & 1\\
-1 & 0
\end{array}\right)\left(\begin{array}{cc}
a & c\\
b & d
\end{array}\right)=\left(\begin{array}{cc}
0 & ad-bc\\
bc-ad & 0
\end{array}\right).\label{eq:intermediate-1}
\end{equation}
Consequently, we have 
\begin{equation}
\overline{\left(\hat{\mathbf{F}}_{t_{0}}^{t}\right)^{*}\Delta_{\hat{\mathbf{x}}}\hat{\mathbf{u}}}(\hat{\mathbf{x}}_{0})=\mathbf{J}\boldsymbol{\nabla}_{0}\overline{\hat{\omega}\left(\hat{\mathbf{F}}_{t_{0}}^{t}\left(\hat{\mathbf{x}}_{0}\right),t\right)},
\end{equation}
and hence the averaged Lagrangian vorticity $\overline{\hat{\omega}\left(\hat{\mathbf{F}}_{t_{0}}^{t}\left(\mathbf{x}_{0}\right),t\right)}$
acts as an autonomous Hamiltonian (or steady stream function) for
the $\hat{\mathbf{x}}_{0}$-component of eq. \eqref{eq:momentum barrier eq for 2D flow},
as claimed in formula \eqref{eq:2D incompressible NS Lagrangian momentum barrier eq}.
Consequently, initial positions of material barriers to momentum transport
are level curves of the time-averaged Lagrangian vorticity $\overline{\omega\left(\hat{\mathbf{F}}_{t_{0}}^{t}\left(\mathbf{x}_{0}\right),t\right)}$,
as claimed. Furthermore, the instantaneous limit of eq. \eqref{eq:2D incompressible NS Lagrangian momentum barrier eq}
is \eqref{eq:2D incompressible NS Eulerian momentum barrier eq} and,
accordingly, Eulerian barriers to momentum transport are level
curves of the Hamiltonian $\hat{\omega}\left(\mathbf{x},t\right)$

\subsection{Poof of Theorem \ref{thm:2D vorticity barriers}}

For $\mathbf{u}$ defined in \eqref{eq:2D ansatz} and \eqref{eq:w is the vorticity},
the full vorticity of the 3D flow is given by 
\begin{equation}
\boldsymbol{\mathbf{\omega}}(\mathbf{x},t)=\left(\partial_{x_{2}}\hat{\omega}(\boldsymbol{\hat{\mathbf{x}}},t),-\partial_{x_{1}}\hat{\omega}(\boldsymbol{\hat{\mathbf{x}}},t),\hat{\omega}(\boldsymbol{\hat{\mathbf{x}}},t)\right),
\end{equation}
implying
\begin{equation}
\Delta\boldsymbol{\omega}=\left(\begin{array}{c}
\partial_{x_{2}}\Delta_{\boldsymbol{\hat{\mathbf{x}}}}\hat{\omega}\\
-\partial_{x_{1}}\Delta_{\boldsymbol{\hat{\mathbf{x}}}}\hat{\omega}\\
\Delta_{\boldsymbol{\hat{\mathbf{x}}}}\hat{\omega}
\end{array}\right).
\end{equation}
In all $x_{3}=const.$ planes, therefore, the vector field $\Delta\boldsymbol{\omega}$
admits the same reduced Hamiltonian dynamics, with the Hamiltonian
$H=\Delta_{\boldsymbol{\hat{\mathbf{x}}}}\hat{\omega}=\frac{1}{\nu}\frac{D}{Dt}\hat{\omega}$
acting as the stream function in that plane. With the notation $\mathbf{J=}\left(\begin{array}{cc}
0 & 1\\
-1 & 0
\end{array}\right)$ , we use the calculations in \eqref{eq:pullback for 2D flows} to
obtain
\begin{align}
\left(\mathbf{F}_{t_{0}}^{t}\right)^{*}\Delta\bm{\omega}(\mathbf{x}_{0}) & =\left(\begin{array}{c}
\boldsymbol{\nabla}_{\mathbf{\hat{\mathbf{x}}}}\hat{\mathbf{F}}_{t}^{t_{0}}\left(\mathbf{\hat{\mathbf{x}}}\right)\mathbf{J}\boldsymbol{\nabla}_{\mathbf{\hat{\mathbf{x}}}}\frac{1}{\nu}\frac{D}{Dt}\hat{\omega}\left(\hat{\mathbf{F}}_{t_{0}}^{t}\left(\hat{\mathbf{x}}_{0}\right),t\right)\\
\int_{t}^{t_{0}}\boldsymbol{\nabla}_{\mathbf{\hat{\mathbf{x}}}}\hat{\omega}\left(\hat{\mathbf{F}}_{t}^{s}\left(\hat{\mathbf{x}}\right),s\right)ds\cdot\mathbf{J}\boldsymbol{\nabla}_{\mathbf{\hat{\mathbf{x}}}}\frac{1}{\nu}\frac{D}{Dt}\hat{\omega}\left(\hat{\mathbf{F}}_{t_{0}}^{t}\left(\hat{\mathbf{x}}_{0}\right),t\right)+\frac{1}{\nu}\frac{D}{Dt}\hat{\omega}\left(\hat{\mathbf{F}}_{t_{0}}^{t}\left(\hat{\mathbf{x}}_{0}\right),t\right)
\end{array}\right).
\end{align}

As a consequence, the first two components of the vorticity barrier
equation \eqref{eq:incompressible NS Lagrangian vorticity barrier eq}
are
\begin{align}
\tilde{\mathbf{x}}_{0}^{\prime} & =\nu\overline{\boldsymbol{\nabla}_{\mathbf{\hat{\mathbf{x}}}}\hat{\mathbf{F}}_{t}^{t_{0}}\left(\mathbf{\hat{\mathbf{x}}}\right)\mathbf{J}\boldsymbol{\nabla}_{\mathbf{\hat{\mathbf{x}}}}\frac{D}{Dt}\hat{\omega}\left(\hat{\mathbf{F}}_{t_{0}}^{t}\left(\hat{\mathbf{x}}_{0}\right),t\right)}\nonumber \\
& =\nu\overline{\boldsymbol{\nabla}_{\mathbf{\hat{\mathbf{x}}}}\hat{\mathbf{F}}_{t}^{t_{0}}\left(\mathbf{\hat{\mathbf{x}}}\right)\mathbf{J}\left[\boldsymbol{\nabla}_{\mathbf{\hat{\mathbf{x}}}}\hat{\mathbf{F}}_{t}^{t_{0}}\left(\mathbf{\hat{\mathbf{x}}}\right)\right]^{T}\boldsymbol{\nabla}_{\hat{\mathbf{x}}_{0}}\frac{D\hat{\omega}}{Dt}\left(\hat{\mathbf{F}}_{t_{0}}^{t}\left(\hat{\mathbf{x}}_{0}\right),t\right)}.\label{eq:2D prelim}
\end{align}
Using formula \eqref{eq:intermediate-1} again, we obtain from \eqref{eq:2D prelim}
that 2D Lagrangian vorticity-diffusion barriers must satisfy
\begin{equation}
\hat{\mathbf{x}}_{0}^{\prime}=\nu\mathbf{J}\boldsymbol{\nabla}_{\hat{\mathbf{x}}_{0}}H_{t_{0}}^{t_{1}}\left(\hat{\mathbf{x}}_{0}\right),\qquad H_{t_{0}}^{t_{1}}\left(\hat{\mathbf{x}}_{0}\right)=\frac{\delta\hat{\omega}\left(\hat{\mathbf{x}}_{0},t_{0},t_{1}\right)}{t_{1}-t_{0}},\label{eq:2D final}
\end{equation}
as claimed in formula \eqref{eq:2D incompressible NS Lagrangian vorticity barrier eq},
with $H_{t_{0}}^{t_{1}}\left(\hat{\mathbf{x}}_{0}\right)$ playing
the role of a Hamiltonian for the two-dimensional $\hat{\mathbf{x}}_{0}$-component
of the full material barrier equation, which is therefore of the general
form 
\begin{align}
\hat{\mathbf{x}}_{0}^{\prime} & =\nu\mathbf{J}\boldsymbol{\nabla}_{\hat{\mathbf{x}}_{0}}H_{t_{0}}^{t_{1}}\left(\hat{\mathbf{x}}_{0}\right),\nonumber \\
x_{03}^{\prime} & =\nu B(\mathbf{\hat{\mathbf{x}}}_{0},t_{1},t_{0}),\label{eq:barrier eq for 2D flows}
\end{align}
for an appropriate scalar-valued function $G_{t_{0}}^{t_{1}}(\hat{\mathbf{x}}_{0})$.
As trajectories $\left\{ \hat{\mathbf{x}}_{0}(s)\right\} _{s\in\mathbb{R}}$
are contained in the level curves of the Hamiltonian $H_{t_{0}}^{t_{1}}\left(\hat{\mathbf{x}}_{0}\right)$,
we obtain the statement of Theorem \ref{thm:2D vorticity barriers},
using the definition of $H_{t_{0}}^{t_{1}}$ from \eqref{eq:2D final}.

\section{Proof of Theorem \ref{thm: Separable Beltrami momentum and vorticity barriers }}

To identify barrier equations for directionally steady Beltrami flows,
note that the flow map for the particle motion ODE
\begin{equation}
\dot{\mathbf{x}}=\alpha(t)\mathbf{u}_{0}(\mathbf{x}),\qquad\alpha(t)=e^{-\nu k^{2}\left(t-t_{0}\right)},\label{eq:separable ODE}
\end{equation}
of any such flow can be computed from the flow map $\mathbf{G}_{t_{0}}^{\tau}(\mathbf{x}_{0})$
of the autonomous ODE $\dot{\mathbf{x}}=\mathbf{u}_{0}(\mathbf{x})$
as
\begin{equation}
\mathbf{F}_{t_{0}}^{t}(\mathbf{x}_{0})=\mathbf{G}_{t_{0}}^{\tau(t)}\left(\mathbf{x}_{0}\right)=\mathbf{G}_{t_{0}}^{\int_{t_{0}}^{t}\alpha(s)\,ds}\left(\mathbf{x}_{0}\right),\label{eq:flow map for separable flows}
\end{equation}
as one verifies by direct substitution of this $\mathbf{F}_{t_{0}}^{t}(\mathbf{x}_{0})$
into \eqref{eq:separable ODE}. Since $\dot{\mathbf{x}}=\mathbf{u}_{0}(\mathbf{x})$
is an autonomous ODE, $\bm{\mathbf{u}}_{0}(\mathbf{F}_{t_{0}}^{t}(\mathbf{x}_{0}))=\bm{\mathbf{u}}_{0}\left(\mathbf{G}_{t_{0}}^{\tau}\left(\mathbf{x}_{0}\right)\right)$
is a solution of its equation of variations, i.e., 
\begin{equation}
\bm{\mathbf{u}}_{0}\left(\mathbf{G}_{t_{0}}^{\tau}\left(\mathbf{x}_{0}\right)\right)=\boldsymbol{\nabla}\mathbf{G}_{t_{0}}^{\tau}\left(\mathbf{x}_{0}\right)\bm{\mathbf{u}}_{0}\left(\mathbf{x}_{0}\right).
\end{equation}
This implies 
\begin{equation}
\bm{\mathbf{u}}_{0}\left(\mathbf{G}_{t_{0}}^{\int_{t_{0}}^{t}\alpha(s)\,ds}\left(\mathbf{x}_{0}\right)\right)=\boldsymbol{\nabla}\mathbf{G}_{t_{0}}^{\int_{t_{0}}^{t}\alpha(s)\,ds}\left(\mathbf{x}_{0}\right)\bm{\mathbf{u}}_{0}\left(\mathbf{x}_{0}\right),
\end{equation}
or, equivalently, by \eqref{eq:flow map for separable flows},
\begin{equation}
\bm{\mathbf{u}}_{0}\left(\mathbf{F}_{t_{0}}^{t}(\mathbf{x}_{0})\right)=\boldsymbol{\nabla}\mathbf{F}_{t_{0}}^{t}(\mathbf{x}_{0})\bm{\mathbf{u}}_{0}\left(\mathbf{x}_{0}\right).
\end{equation}
Multiplying both sides of this equation by $\alpha(t)$ leads to the
identity.
\begin{equation}
\left[\boldsymbol{\nabla}\mathbf{F}_{t_{0}}^{t}(\mathbf{x}_{0})\right]^{-1}\left(\alpha(t)\bm{\mathbf{u}}_{0}(\mathbf{F}_{t_{0}}^{t}(\mathbf{x}_{0}))\right)=\alpha(t)\bm{\mathbf{u}}_{0}(\mathbf{x}_{0}).\label{eq:pullback for separable flows}
\end{equation}

As a consequence of the relation \eqref{eq:pullback for separable flows},
for a directionally steady, strong Beltrami flow, the linear momentum
barrier equation \eqref{eq:incompressible NS Lagrangian momentum barrier eq}
takes the specific form
\begin{align}
\mathbf{x}_{0}^{\prime} & =\mathbf{b}_{t_{0}}^{t_{1}}=\nu\rho\,\overline{\left(\mathbf{F}_{t_{0}}^{t}\right)^{*}\Delta\mathbf{u}}=-\nu\rho\overline{\left(\mathbf{F}_{t_{0}}^{t}\right)^{*}\alpha k^{2}\mathbf{u}_{0}}\nonumber \\
& =-\frac{\nu\rho}{t_{1}-t_{0}}\int_{t_{0}}^{t_{1}}k^{2}\left[\boldsymbol{\nabla}\mathbf{F}_{t_{0}}^{t}(\mathbf{x}_{0})\right]^{-1}\left(\alpha(t)\bm{\mathbf{u}}_{0}(\mathbf{F}_{t_{0}}^{t}(\mathbf{x}_{0}))\right)\,dt\nonumber \\
& =-\frac{\nu\rho\int_{t_{0}}^{t_{1}}k^{2}\alpha(t)\,dt}{t_{1}-t_{0}}\bm{\mathbf{u}}_{0}(\mathbf{x}_{0}).
\end{align}
After rescaling the independent variable $s$ in this ODE as $s\to s\frac{t_{0}-t_{1}}{\nu\rho\int_{t_{0}}^{t_{1}}k^{2}\alpha(t)\,dt}$,
we obtain the Lagrangian and Eulerian momentum barrier equations 
\begin{align}
\mathbf{x}_{0}^{\prime} & =\bm{\mathbf{u}}_{0}(\mathbf{x}_{0}),\nonumber \\
\mathbf{x}^{\prime} & =\bm{\mathbf{u}}_{0}(\mathbf{x}).\label{eq:diffusive Eulerian barrier equation for Bernoulli flows}
\end{align}
Note that all invariant manifolds of this barrier equation coincide
with invariant manifolds of the particle motion \eqref{eq:separable ODE}
of the directionally steady Beltrami flow defined by \eqref{eq:separable ODE},
which proves the statement of Theorem \ref{thm: Separable Beltrami momentum and vorticity barriers }
for linear momentum barriers.

With the relation \eqref{eq:pullback for separable flows}, the vorticity
barrier equation \eqref{eq:incompressible NS Lagrangian vorticity barrier eq}
for directionally steady Beltrami flows takes the specific form
\begin{align}
\mathbf{x}_{0}^{\prime} & =\mathbf{b}_{t_{0}}^{t_{1}}=\nu\,\overline{\left(\mathbf{F}_{t_{0}}^{t}\right)^{*}\Delta\boldsymbol{\omega}}=-\nu\,\overline{\left(\mathbf{F}_{t_{0}}^{t}\right)^{*}\boldsymbol{\nabla}\times\left(\boldsymbol{\nabla}\times\boldsymbol{\omega}\right)}=-\nu\overline{\left(\mathbf{F}_{t_{0}}^{t}\right)^{*}\alpha(t)k^{3}\mathbf{u}_{0}},\nonumber \\
& =-\frac{\nu}{t_{1}-t_{0}}\int_{t_{0}}^{t_{1}}k^{3}\left[\nabla_{0}\mathbf{F}_{t_{0}}^{t}(\mathbf{x}_{0})\right]^{-1}\left(\alpha(t)\bm{\mathbf{u}}_{0}(\mathbf{F}_{t_{0}}^{t}(\mathbf{x}_{0}))\right)\,dt\nonumber \\
& =-\frac{\nu\int_{t_{0}}^{t_{1}}k^{3}\alpha(t)\,dt}{t_{1}-t_{0}}\bm{\mathbf{u}}_{0}(\mathbf{x}_{0}).
\end{align}
Again, an appropriate rescaling of time shows that all invariant manifolds
of this barrier equation coincide with invariant manifolds of the
particle motion \eqref{eq:separable ODE} of the directionally steady
Beltrami velocity field $\mathbf{u}(\mathbf{x},t)$, which proves
the statement of Theorem \ref{thm: Separable Beltrami momentum and vorticity barriers }
for vorticity barriers.

\end{document}